\documentclass[aoas]{imsart}

\RequirePackage{amsthm,amsmath,amsfonts,amssymb}
\RequirePackage[authoryear]{natbib}
\RequirePackage[colorlinks,citecolor=blue,urlcolor=blue,hypertexnames=false]{hyperref}
\RequirePackage{graphicx}
\usepackage{booktabs}
\usepackage{multirow}

\startlocaldefs
\theoremstyle{plain}

\newtheorem{theorem}{Theorem}[section]
\newtheorem{lemma}[theorem]{Lemma}
\theoremstyle{definition}
\newtheorem{definition}[theorem]{Definition}

\endlocaldefs

\newtheorem{assumption}{Assumption}

\graphicspath{{Figures/}}
\newcommand{\EE}{\mathbb{E}}  
\newcommand{\NN}{\mathbb{N}}
\newcommand{\PP}{\mathbb{P}}
\newcommand{\RR}{\mathbb{R}}
\newcommand{\T}{\top}

\newcommand{\bel}{\begin{eqnarray}\label}
\newcommand{\eel}{\end{eqnarray}}
\newcommand{\bes}{\begin{eqnarray*}}
\newcommand{\ees}{\end{eqnarray*}}
\newcommand{\bei}{\begin{itemize}}
\newcommand{\beiftnt}{\begin{itemize}\footnotesize}
\newcommand{\eei}{\end{itemize}}

\def\benu{\begin{enumerate}}
\def\eenu{\end{enumerate}}

\def\complex{\mathop{{\rm I}\kern-.58em\hbox{\rm C}}\nolimits}

\def\mathbold{\boldsymbol} 

\def\ba{\mathbold{a}}

\def\bA{\mathbold{A}}

\def\bb{\mathbold{b}}

\def\bB{\mathbold{B}}

\def\bc{\mathbold{c}}

\def\bC{\mathbold{C}}

\def\bH{\mathbold{H}}

\def\bI{\mathbold{I}}

\def\bM{\mathbold{M}}

\def\bO{\mathbold{O}}

\def\bR{\mathbold{R}}

\def\bS{\mathbold{S}}

\def\bU{\mathbold{U}}

\def\bx{\mathbold{x}}

\def\bX{\mathbold{X}}

\def\bY{\mathbold{Y}}

\def\bz{\mathbold{z}}

\def\bZ{\mathbold{Z}}

\def\bGamma{\mathbold{\Gamma}}

\def\bTheta{\mathbold{\Theta}}

\def\lam{\lambda}

\def\bmu{\mathbold{\mu}}

\def\bSigma{\mathbold{\Sigma}}

\makeatletter
\newcommand*{\rom}[1]{\expandafter\@slowromancap\romannumeral #1@}
\makeatother

\begin{document}

\begin{frontmatter}
\title{Learning CNN Filters via Generalized Stein's Method}
\runtitle{Learning CNN Filters via Generalized Stein's Method}

\begin{aug}
\author[A]{\fnms{Guang}~\snm{Yang}\ead[label=e1]{guangyang@cuhk.edu.hk}},
\author[B]{\fnms{Wei}~\snm{Shi}\ead[label=e2]{shiwei1997@connect.hku.hk}},
\author[B]{\fnms{Yuan}~\snm{Cao}\ead[label=e3]{yuancao@hku.hk}}
\and
\author[B]{\fnms{Long}~\snm{Feng}\thanks{Corresponding author.}\ead[label=e4]{lfeng@hku.hk}}
\address[A]{Department of Statistics and Data Science, The Chinese University of Hong Kong\printead[presep={ ,\ }]{e1}}

\address[B]{School of Computing \& Data Science, The University of Hong Kong\printead[presep={,\ }]{e2,e3,e4}}
\end{aug}

\begin{abstract}
Convolutional Neural Networks (CNNs) have undoubtedly revolutionized image data analysis and the field of computer vision. 
As the cornerstone of CNNs, the convolution operation enables the networks to extract abstract features and uncover hidden relationships in the image data.
This paper considers the problem of estimating convolution filters from a statistical perspective using a classical tool --- Stein's formula. We first formulate CNNs into a general index model with matrix-valued input, where convolution filters can be viewed as index vectors. Furthermore, we propose a novel singular value decomposition (SVD) based approach to accurately learn the convolution filters based on a generalized version of the first-order Stein's formula.
Theoretical analysis suggests that our estimation achieves an optimal convergence rate, comparable to that of generalized linear models where the link function is known. Extensive simulation studies and real data analyses demonstrate that our approach outperforms popular deep learning algorithms, such as Adam. Notably, our method extends beyond filter estimation and can be applied to nonlinear dimension reduction, providing a viable pathway for representation learning.

\end{abstract}

\begin{keyword}
\kwd{Multi-index model}
\kwd{Score function}
\kwd{Singular value decomposition}
\kwd{Dimension reduction}
\kwd{Representation learning}
\end{keyword}

\end{frontmatter}

\section{Introduction}\label{sec:intro}

In recent years, machine learning techniques have experienced substantial growth and found widespread application in diverse fields, including computer vision \citep{he2016deep, voulodimos2018deep, chai2021deep}, natural language processing \citep{vaswani2017attention, fathi2018deep, lauriola2022introduction}, and more. 
Convolutional Neural Network (CNN) is undoubtedly one of the most successful frameworks that revolutionized image data analysis and the field of computer vision. The success of CNN can be attributed to its intricate structure design, which incorporates convolution operations, pooling layers, and more. In particular, the convolution operation serves as the cornerstone of CNNs, enabling them to extract abstract features and capture hidden relationships from input data --- a task that was challenging for conventional approaches. In practice, the parameters of CNNs, including convolutional filters, are learned via gradient descent algorithms, which have proven highly effective for large-scale natural image tasks.

However, gradient-based training may encounter challenges in more demanding settings such as medical imaging. First, medical imaging data often have very limited sample size due to reasons such as high acquisition cost or privacy constraints. In contrast, deep learning models typically rely on large amounts of data for effective training, and may become unstable and prone to overfitting when data are scarce \citep{shen2017deep}. Second, deep learning methods have achieved their greatest success in the domain of natural images, but medical images differ fundamentally from natural images. On one hand, the underlying signals of medical images are often subtle and less visually discernible, while the associated biological mechanisms are complex \citep{zhu2023statistical} and require greater interpretability \citep{rudin2019stop}. On the other hand, medical imaging data tend to exhibit more uniform patterns due to standardized acquisition and preprocessing procedures, leading to more structured distributions that may be better exploited by statistical approaches.

This paper is motivated by the problem of learning convolutional filters in CNNs, and proposes a statistical method to estimate filters in the first convolutional layer without training the full neural network, thereby providing an efficient method for learning convolutional representations. We begin by formulating CNNs into a generalized index model with matrix-valued inputs, where convolutional filters act as index vectors and subsequent components like pooling or fully-connected layers are incorporated into the unknown link function. Based on a generalized version of first-order Stein's identity, we develop a novel singular value decomposition (SVD)-based approach to estimate the column space of convolutional filters without requiring knowledge of the link function. Theoretical analysis shows that the proposed estimator achieves an optimal convergence rate, matching that of generalized linear models where the link function is known. Comprehensive simulation studies show that our approach outperforms the widely used Adam algorithm. In real applications on brain MRI data, representations learned by our method can not only improve predictive performance, but also enhance interpretability. These results highlight the practical value of our approach beyond filter estimation, demonstrating its potential for nonlinear dimension reduction and providing a feasible route for representation learning using CNNs. The core contributions of this work are summarized as follows:
\begin{itemize}
	\item We reformulate the first convolutional layer of a CNN as a generalized index model with matrix-valued input and vector-valued indices. This provides a new statistical interpretation of CNN filters.
	\item We derive and use a matrix-valued Stein's identity for this CNN-induced index structure, leading to an SVD-based estimator of the filter column space.
	\item We develop theory for the original estimator, the truncated estimator under heavy-tailed settings, and a practical Gaussian plug-in estimator with unknown covariance.
	\item We demonstrate the method on ADNI brain MRI data, showing its value for representation learning, prediction, and interpretable feature discovery.
\end{itemize}
Coupled with CNNs, our study is conducted under a fairly general framework in several senses: (i) we adopt a generalized index model that extends both single-index and multi-index models; (ii) it accommodates a broad class of distributions, including Gaussian designs and heavy-tailed settings; (iii) without assuming a known distribution, we investigate the practical scenario of Gaussian inputs with unknown covariance, supported by empirical evidence.

\subsection{Related works}
The index model is a classical statistical method which generalizes the linear regression to a nonlinear format from a different perspective than neural networks.
Specifically, a multi-index model (MIM) usually takes the form of $\EE(y|\bx) = f(\bx^\T\bb_1, \ldots, \bx^\T\bb_R)$,
where $\bb_1, \ldots, \bb_R$ are the target index vectors, and $f$ is an unknown multi-variate nonlinear function, also known as the link function. When $R=1$, the MIM reduces to the single index model (SIM), which leads to the index $\bx^\T\bb$ and the function $f$ being univariate. Index models not only offer considerable adaptability by incorporating the nonlinear link function, but also avoid the curse of dimensionality by representing the outcome through a low-dimensional representation of the input.
A classical line of research on index models is closely connected to sufficient dimension reduction, among which sliced inverse regression (SIR) \citep{li1991sliced} is one of the most influential approaches. This framework has inspired a large body of extensions, including sliced average variance estimation \citep[SAVE;][]{cook1991discussion}, and tensor-valued extensions such as TensorSIR \citep{ding2015tensor}. However, these methods typically rely on relatively strong probabilistic assumptions, such as the linearity condition and constant variance condition, which are often satisfied only under elliptical or Gaussian-type distributions.

Another line of research on learning index models is motivated by the classical Stein’s identity \citep{stein1981estimation}, which implies that, in the simplest setting where $\bx$ follows a standard Gaussian distribution, the index vector $\bb$ is proportional to $\EE(\bx \cdot y)$, regardless of the unknown link function $f$. Early works along this direction include average derivative estimation \citep[ADE;][]{stoker1986consistent} and principal Hessian directions \citep[PHD;][]{li1992principal}. More recently, Stein-based methods have been revisited and substantially extended to more general settings. For instance, \citet{plan2016generalized} and \citet{plan2017high} considered SIM with sparse index vectors, and \cite{balasubramanian2025functional} extended the framework to functional regression. \citet{yang2017high} and \citet{goldstein2018structured} generalized the Gaussian input assumption to heavy-tailed or non-Gaussian scenarios, and \citet{fan2023understanding} studied implicit regularization in SIM. Nevertheless, most existing works are developed under relatively simple and idealized settings, typically focusing on SIM with vector-valued inputs and assuming that the input distribution is known. As a result, they provide limited practical guidance for handling complex real-world data such as imaging data.

The rest of the paper is organized as follows. Section \ref{sec:adni} introduces the ADNI dataset in detail. Section \ref{sec:model} establishes the connections between CNNs and index models, and present our SVD-based estimation. Section \ref{sec:theory} provide theoretical analyses for estimating filters under different settings. Section \ref{sec:real} presents the ADNI data analysis, and Section \ref{sec:simu} includes comprehensive simulation studies. Finally, Section \ref{sec:conc} concludes with further discussions.

\textit{Notations.}
We use bold uppercase letters $\bA$, $\bB$ to denote matrices, bold lowercase letters $\ba$, $\bb$ to denote vectors. We let $\text{vec}(\cdot)$ be the vectorization operator and $\text{vec}^{-1}_{(\cdot)}(\cdot)$ be its inverse with the subscripts indicating the matrix size. For a vector ${\bf v}$, $\|{\bf v}\|_q=(\sum_{j} |v_j|^q)^{1/q}$ is the $\ell_q$ norm. For a matrix $\bA$, $\|\bA\|_F=\sqrt{\sum_{i,j} \bA_{i,j}^2}$ is the Frobenius norm, $\lam_{min}(\bA)$ is the minimum eigenvalue of matrix $\bA$. Moreover, we let $\langle\cdot,\cdot\rangle$ to denote the inner product. In addition, given two sequences $\{x_n\}$ and $\{y_n\}$, we denote $x_n = \mathcal{O}(y_n)$ if $|x_n|\le C_1 |y_n|$ for some absolute constant $C_1$ and denote $x_n = \Omega(y_n)$ if $|x_n|\ge C_2 |y_n|$ for some absolute constant $C_2$.

\section{The ADNI dataset}\label{sec:adni}
We analyze brain imaging data from the Alzheimer’s Disease Neuroimaging Initiative (ADNI\footnote{https://adni.loni.usc.edu/}), a study aimed at identifying and monitoring Alzheimer's disease through clinical assessments, genetic analysis, imaging data, and more. Alzheimer’s disease is one of the most common causes of dementia among older adults, and is characterized by a progressive decline in cognitive functions, particularly memory, reasoning, and language abilities. It is associated with structural and functional changes in the brain, including cortical atrophy and the degeneration of specific regions.

The primary goal of this work is to use CNNs to study brain imaging data, with ADNI serving as an illustrative example and cognitive test scores used as response variables. In particular, our study aims to learn effective convolutional representations of brain images that preserve important spatial and pattern information. These learned representations also facilitate dimension reduction, enabling the analysis of high-dimensional imaging data with limited sample sizes. We expect the proposed method to reduce reliance on gradient-based training, thereby lowering computational cost and improving stability, while also enhancing predictive performance and interpretability.

\textbf{Brain MRI.}
For brain imaging data, we focus on magnetic resonance imaging (MRI), as explanatory variables. Each participant in the analysis has a T1-weighted MRI scan. The scans are carefully preprocessed prior to analysis using a standard pipeline that includes spatial adaptive non-local means (SANLM) denoising \citep{manjon2010adaptive}, resampling, bias correction, affine registration, unified segmentation, skull stripping, and cerebellum removal \citep{ashburner2005unified}. This is followed by local intensity correction and spatial normalization into the Montreal Neurological Institute (MNI) atlas space. After preprocessing, each T1-weighted MRI scan is represented as a tensor and resized to $48 \times 60 \times 48$ to improve computational and storage efficiency. As a result, we obtained 1,059 MRI scans from subjects who are cognitively normal (369), have mild cognitive impairment (401), or have Alzheimer’s disease (289). For each scan, we further extract 10 middle coronal slices (slices 25 through 34 out of the 60 coronal slices), which serve as a form of data augmentation derived from a single scan for each subject. This procedure increases the effective sample size, stabilizes model training, and improves the robustness of method comparisons. Finally, the resulting dataset contains 10,590 images of size $48\times 48$, on which we perform 10-fold cross-validation to compare predictive performance.

\textbf{MMSE score.}
We use the Mini-Mental State Examination (MMSE) scores as response variables. The MMSE is a widely used clinical assessment for cognitive impairment, with scores ranging from 0 to 30. There are four cut-off levels of cognitive impairment with corresponding MMSE scores falling within the intervals of [0, 9], [10, 18], [19, 23] and [24, 30], respectively indicating severe, moderate, mild and no cognitive impairment \citep{tombaugh1992mini}. As a cognitive assessment method, MMSE does not confirm any particular disease, but it is widely used as a diagnostic adjunct and provides an informative reference for the assessment of Alzheimer’s disease. In general, cognitively normal individuals tend to have scores close to 30, whereas individuals with cognitive impairment exhibit lower scores. 
Predicting MMSE scores from brain MRI is inherently nontrivial because MMSE reflects cognitive function only indirectly through structural brain changes that are usually subtle in imaging data. In addition, high inter-subject variability and limited sample sizes in neuroimaging further constrain achievable predictive accuracy, making it a more challenging yet important problem to investigate.

We carry out two independent studies using the augmented dataset. In Section \ref{subsec:simu_y}, we consider real MRI images along with simulated responses, where we can simulate the convolution filters and directly evaluate the filter estimation performance.
In Section \ref{subsec:mmse}, we incorporate real MMSE scores as responses to showcase the prediction performance based on representations from our method under various downstream predictive models.

\section{CNN, Index Model, and their connections}\label{sec:model}
\subsection{CNN and Index Model}
Let $\bX^{\text{ori}}\in\mathbb{R}^{P_1\times P_2}$ denote the original matrix-valued input, such as raw image data. In a typical convolutional neural network, the raw image usually first goes through a convolutional layer, and then proceeds to pooling and fully connected layers. We shall focus on the first convolutional layer and denote $\bB_1, \ldots, \bB_R \in \mathbb{R}^{d_1\times d_2}$ as $R$ convolution filters in the first layer of a CNN. 
Then, the output of a convolutional neural network can be written as the following form
\begin{align}\label{model:cnn}
	G(\bX^{\text{ori}}) = g(\sigma(\bX^{\text{ori}} \star \bB_1), \ldots, \sigma(\bX^{\text{ori}} \star \bB_R)).
\end{align}
Here $\star$ is a convolution operator to be discussed in detail later, $\sigma(\cdot)$ is a component-wise 
nonlinear activation function, such as ReLU (Rectified Linear Unit) or Sigmoid function. Furthermore, the multivariate function $g$ incorporates the remaining structure of CNN, including pooling layers, fully-connected layers, and other components in a CNN. If we further combine $g$ with the activation function $\sigma$, the CNN output (\ref{model:cnn}) can be written as
\begin{align}\label{model:ori_multi}
	G(\bX^{\text{ori}}) = f(\bX^{\text{ori}}\star\bB_1, \ldots, \bX^{\text{ori}}\star\bB_R).
\end{align}

We now turn our attention to the convolution operator $\star$. We shall first consider a non-overlapping convolution, further extensions to its overlapping version will be discussed later. Given any matrix $\bM\in \mathbb{R}^{P_1\times P_2}$ and a filter $\bB\in \mathbb{R}^{d_1\times d_2}$, where $d_1, d_2$ are factors of $P_1, P_2$ respectively, the non-overlapping convolution is defined as
\begin{equation}
	\bM\star\bB\in \mathbb{R}^{p_1\times p_2}, \ \ (\bM\star\bB)_{j,k} = \left\langle\bM_{j, k}^{d_1, d_2}, \bB \right\rangle.
\end{equation}
Here, $(p_1, p_2) = (P_1/d_1, P_2/d_2)$ and $\bM_{j,k}^{d_1, d_2}$ represents the $(j,k)$-th block of size $d_1\times d_2$ in $\bM$ for $1\le j\le p_1$ and $\ 1\le k\le p_2$. With deep learning language,  the stride size is set equal to the filter size under non-overlapping scenario. The convolution operation aims to extract feature information, thereby enhancing prediction in the subsequent layers. 
While through the non-overlapping design, dimension reduction can be achieved. 
Notably, the non-overlapping convolution, being the simplest case, has been extensively studied in the literature, for example, \citet{brutzkus2017globally,du2018gradient,cao2019tight,feng2024deep}.
Let $p=p_1p_2$ be the total number of non-overlapping blocks and $d=d_1d_2$ be the number of elements in each block.

Now we consider a matrix reshaping operator that allows us to further connect the CNN and index models. Given any matrix $\bM\in\mathbb{R}^{P_1\times P_2}$, the operator $\mathcal{R}_{(d_1, d_2)}: \mathbb{R}^{P_1\times P_2}\rightarrow \mathbb{R}^{p\times d}$ is defined as a mapping from $\bM$ to
\begin{align}\label{Rmat}
	\mathcal{R}_{(d_1, d_2)}(\bM) = \left[\text{vec}\left(\bM_{1,1}^{d_1, d_2}\right), \ldots, \text{vec}\left(\bM_{p_1,p_2}^{d_1, d_2}\right)\right]^\T.
\end{align}
A key property holds for $\bX^{\text{ori}}\in \mathbb{R}^{P_1\times P_2}$ and $\bB \in \mathbb{R}^{d_1\times d_2}$ that
$$\text{vec}(\bX^{\text{ori}}\star\bB) = \mathcal{R}_{(d_1, d_2)}(\bX^{\text{ori}})\text{vec}(\bB).$$
Figure \ref{fig:R_opt} illustrates the transformation of convolution operation into matrix multiplication through the operator $\mathcal{R}_{(d_1, d_2)}(\cdot)$.

\begin{figure}[!htbp]
	\centering
	\includegraphics[width=\textwidth]{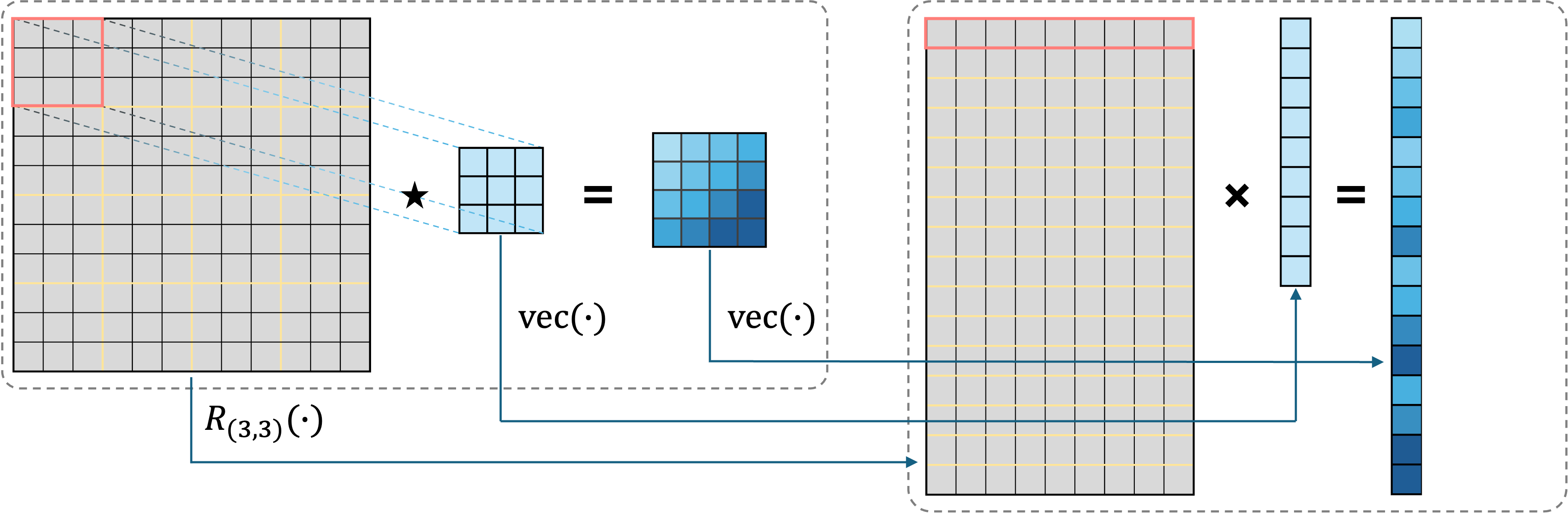}
	\caption{Transformation of convolution operation into matrix multiplication through the operator.}
	\label{fig:R_opt}
\end{figure}

Let $\bX=\mathcal{R}_{(d_1, d_2)}(\bX^{\text{ori}}) \in \mathbb{R}^{p\times d} $ be the reshaped input and $\bb_r = \text{vec}(\bB_r) \in \mathbb{R}^{d}$ for $r=1, \ldots, R$. Then we can rewrite (\ref{model:ori_multi}) as
\begin{align}\label{model:mim_multi}
	G(\bX^{\text{ori}}) = f(\bX\bb_1, \ldots, \bX\bb_R).
\end{align}
With a slight abuse of notation, we also use the same function notation $f$ in (\ref{model:mim_multi}). However, we shall note that the $f$ in (\ref{model:mim_multi}) is not identical to that in (\ref{model:ori_multi}) as the inputs were reshaped.
If we model certain outcome $Y$ through the above formulation, we have
\begin{equation}\label{eq7}
	\EE(Y|\bX)=f(\underbrace{\bX\bb_1, \ldots, \bX\bb_R}_{p\times R}).
\end{equation}
Model~\eqref{eq7} clearly bears a resemblance to an MIM, although in fact they are different. In a standard MIM, the input is typically a vector, so each index is a scalar. In model~\eqref{eq7}, however, $\bX$ is a matrix of dimension $p\times d$, which makes each index $\bX\bb_r$ a $p$-dimensional vector. Consequently, the function in \eqref{eq7} is a more general function with an input dimension $p\times R$. Let $\bTheta = \left(\bb_1, \ldots, \bb_R\right) \in \mathbb{R}^{d\times R}$ denote the collection of convolutional filters. Then model~\eqref{eq7} can be compactly written as $\EE(Y|\bX)=f(\bX\bTheta)$.

\subsection{CNN filter estimation via Stein's method and SVD}\label{sec:proposed_method}
Based on model~\eqref{eq7}, we assume the relationship between $Y$ and $\bX$ is $Y=f(\bX\bTheta)+\epsilon$, where the noise term $\epsilon$ is independent of $\bX$. We assume that $\epsilon$ is centered and sub-Gaussian, i.e. $\EE[\epsilon]=0$ and $\EE[|\epsilon|^{k}]^{\frac{1}{k}} \leq \sigma\sqrt{k}$ for all $k \in \NN$ and an absolute constant $\sigma>0$. Let $\bX_i=\mathcal{R}_{(d_1, d_2)}(\bX_i^{\text{ori}})\in\mathbb{R}^{p\times d}$ and $Y_i\in\mathbb{R}$ for $i\in [n]$ be 
$n$ i.i.d. observations that follow $Y_i=f(\bX_i\bTheta)+\epsilon_i$. Similar to index models, $\bTheta$ is not identifiable when $f$ is unknown. Specifically, it is identifiable only up to an invertible linear transformation, since any such transformation can be absorbed into $f$. Therefore, our goal is to learn the column space in $\bTheta$ based on a generalized version of Stein's identity. To get started, we shall first define the score function associated with the input $\bX$.
\begin{definition}\label{def:score}
	Let $\bX\in\RR^{p\times d}$ be a random matrix with density $P:\RR^{p\times d}\rightarrow\RR$. The score function $S(\bX):\RR^{p\times d}\rightarrow\RR^{p\times d}$ associated with $\bX$ is defined as
	\begin{align*}
		S(\bX):=-\nabla_{\bX} [\log P(\bX)]=-\nabla_{\bX} P(\bX)/P(\bX).
	\end{align*}
	For any $\{j,k\}\in[p]\times[d]$, $S_{jk}(\bX)$ is defined as the $(j,k)$-th element of $S(\bX)$.
\end{definition}

Then we have the following generalized version of first-order Stein's identity.
\begin{lemma}\label{lemma:CNNexpectation}
	Let $f: \mathbb{R}^{p\times R}\mapsto \mathbb{R}$ 
	be the link function. Let $\bZ=\bX\bTheta$. Suppose that the expectations $\EE[YS(\bX)]$ and $\EE\left[\nabla_{\bZ} f (\bX\bTheta)\right]$ both exist and are well-defined. 
	Then we have
	\begin{align}\label{eq:main}
		\EE[YS(\bX)] = \EE\left[\nabla_{\bZ} f (\bX\bTheta)\right] \cdot \bTheta^{\top}.
	\end{align}
\end{lemma}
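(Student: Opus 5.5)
The identity follows from integration by parts together with the chain rule; the one delicate point is the vanishing of boundary terms.

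\textbf{Step 1: remove the noise.} Because $Y=f(\bX\bTheta)+\epsilon$ with $\epsilon$ independent of $\bX$ and $\EE[\epsilon]=0$, we get $\EE[\epsilon S(\bX)]=\EE[\epsilon]\,\EE[S(\bX)]=0$. It therefore suffices to show that $\EE[f(\bX\bTheta)S(\bX)]$ equals the right-hand side of \eqref{eq:main}.

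\textbf{Step 2: entrywise integration by parts.} Write $h(\bX):=f(\bX\bTheta)$ and fix $(j,k)\in[p]\times[d]$. By Definition~\ref{def:score},
\begin{align*}
\EE[h(\bX)S_{jk}(\bX)] = -\int h(\bX)\,\partial_{X_{jk}}P(\bX)\,d\bX .
\end{align*}
Integrate by parts in the single variable $X_{jk}$ (Fubini, with the other coordinates held fixed). This gives
\begin{align*}
\int \partial_{X_{jk}}h(\bX)\,P(\bX)\,d\bX = \EE[\partial_{X_{jk}}h(\bX)],
\end{align*}
provided the boundary term $h P$ vanishes as $|X_{jk}|\to\infty$.

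The boundary term is the main obstacle. The lemma only assumes the two expectations exist, so I would add the standard regularity conditions implicit in the statement: $P$ is differentiable, $f$ is weakly differentiable, and $h(\bX)P(\bX)\to0$ at infinity. Under these conditions the integration by parts is valid.

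\textbf{Step 3: chain rule.} Since $Z_{jr}=\sum_{k'}X_{jk'}\Theta_{k'r}$, we have $\partial Z_{j'r}/\partial X_{jk}=\mathbb{1}\{j'=j\}\Theta_{kr}$. Hence
\begin{align*}
\partial_{X_{jk}}h(\bX)=\sum_{r=1}^R \big[\nabla_{\bZ}f(\bX\bTheta)\big]_{jr}\Theta_{kr}=\big[\nabla_{\bZ}f(\bX\bTheta)\,\bTheta^\top\big]_{jk}.
\end{align*}

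\textbf{Step 4: assemble.} Take expectations in Step 3, using linearity and the fact that $\bTheta$ is deterministic. Combining this with Steps 1 and 2, entry by entry, gives \eqref{eq:main}.
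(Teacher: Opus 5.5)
Your proposal is correct and follows essentially the paper's proof. The paper first proves an entrywise non-Gaussian Stein identity $\EE[g(\bX)S_{ij}(\bX)]=\EE[\nabla_{X_{ij}}g(\bX)]$ by integration by parts, then applies it to $g(\bX)=f(\bX\bTheta)$, uses the chain rule, and stacks the entries into $\EE[\nabla_{\bZ}f(\bX\bTheta)]\bTheta^{\top}$. Your explicit boundary-term condition is more careful than the paper, which drops $\int\nabla_{x_{ij}}[g(x)P(x)]\,dx$ without stating the decay assumption this requires.
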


Lemma \ref{lemma:CNNexpectation} serves as the basis for our estimation of $\bTheta$. It shows that provided $\EE[YS(\bX)]$ has rank $R$, the column space of $\bTheta$ can be determined by the top $R$ right singular vectors of $\EE[YS(\bX)]$. With the sample version $(1/n)\sum_{i=1}^{n}Y_iS(\bX_i)$ being a natural estimate for $\EE[YS(\bX)]$, we propose the following SVD based estimation for $\bTheta$,
\begin{align} 
	\hat{\bTheta}=\mathrm{SVD}_{v,R}\left(\frac{1}{n}\sum_{i=1}^{n}Y_iS(\bX_i)\right), \label{eq:betahat}
\end{align}
where $\mathrm{SVD}_{v,R}(\bM)$ refers to the top $R$ right singular vectors of a matrix $\bM$.

We emphasize that this method is general as it applies to arbitrary distribution of input $\bX$, provided that the score $S(\bX)$ is known or well estimated. Moreover, this estimator does not rely on the link function $f$, therefore regardless of the complexity of subsequent pooling or fully connected layers in a CNN. Based on SVD, the proposed estimator aims to learn a compact set of informative convolutional directions rather than to reproduce every possibly redundant filter in a large CNN. This is consistent with empirical observations that CNNs often learn filters with substantial redundancy \citep{jaderberg2014speeding, wang2020orthogonal}. Therefore, the orthogonality of $\hat{\bTheta}$ should be regarded as a basis-recovery feature of the estimator, rather than as a structural assumption on all practical CNN filters.

Although the method is introduced to estimate filters in the first convolutional layer, in practice it serves as a representation learning method that efficiently extracts convolutional features without training the full CNN. The resulting convolutional features can then be used in downstream tasks and integrated with either statistical models or neural networks. Specifically, an applied researcher can use this method following steps below.
\begin{itemize}
	\item[\textbf{Step 1.}] \textbf{Estimate the score function.} Under a suitable parametric model for the input distribution, the score can be estimated by plug-in estimators. For example, under a Gaussian model, the empirical mean and covariance give an explicit estimate of the score function, as described in Section~\ref{sec:emp}.
	\item[\textbf{Step 2.}] \textbf{Estimate filters.} Compute $(1/n)\sum_{i=1}^nY_i\hat S(\bX_i)$ and perform SVD. The top right singular vectors form $\hat{\bTheta}$, and each column is reshaped into a convolutional filter $\hat \bB_r=\mathrm{vec}^{-1}_{(d_1,d_2)}(\hat{\bTheta}_{[:,r]})$.
	\item[\textbf{Step 3.}] \textbf{Construct representations and fit downstream models.} Apply the estimated filters to the original images to obtain convolutional feature maps, and then use these features in a downstream model such as linear regression, LASSO, FCN, or CNN.
\end{itemize}
A notable advantage of the proposed method is its seamless integration with downstream statistical models, as the extracted features can be directly used as their inputs. This enables a natural combination of the representation power of CNN-based feature extraction with the strengths of classical statistical methods, including interpretability and inference.
We also note that if the input distribution is completely unknown and no suitable parametric model is available, score estimation becomes a substantive problem. In such cases, nonparametric score estimation methods, such as \citet{zhou2020nonparametric}, may provide an alternative, but their accuracy should be checked for the application at hand.

\subsection{The truncated form}\label{sec:truncated_form}
Under heavy-tailed data, the straightforward sample version $(1/n)\sum_{i=1}^{n}Y_iS(\bX_i)$ may not be an optimal estimate for $\EE[YS(\bX)]$ due to a lack of concentration. In fact, not only can the input $\bX$ be heavy-tailed, but the outcome $Y$ or the score function $S(\bX)$ could also exhibit heavy-tail behavior due to the unknown link function $f$. These issues have been discussed in the literature, where the truncation argument has been proposed, as seen in works such as \citet{catoni2012challenging,minsker2018sub,yang2017high}. We adopt similar strategies and propose an improved estimator of the truncated form.

We begin by introducing a class of non-decreasing functions $\phi:\RR\rightarrow\RR$ satisfying
\begin{align*}
	-\log(1-x+x^2/2)\leq\phi(x)\leq\log(1+x+x^2/2), \ \ x\in\RR.
\end{align*}
A straightforward function satisfying the upper and lower exponential-type bounds is
\begin{align}\label{phi}
	\phi(x)=\left\{\begin{array}{ll}
		-\log(1-x+x^2/2),& x <0,\\
		\log(1+x+x^2/2), & x\ge 0.
	\end{array}
	\right.
\end{align}
It is a convenient choice as it attains these bounds tightly and modifies the original estimator only as much as needed to control extreme values, while other truncation functions with the same bounding properties could also be used. Figure \ref{fig:phi} illustrates the function $\phi$.
\begin{figure}[!htbp]
	\centering
	\includegraphics[width=\linewidth]{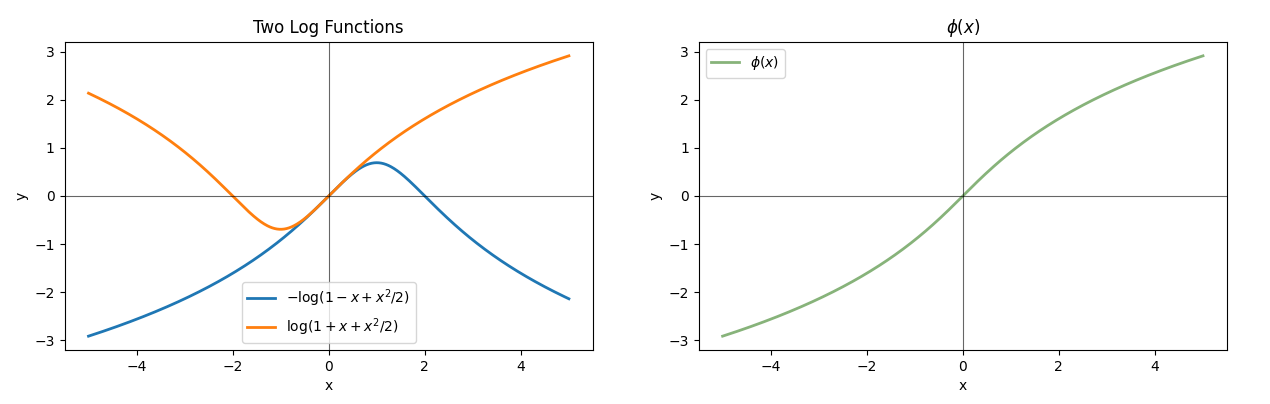}
	\caption{The plot of function $\phi$.}
	\label{fig:phi}
\end{figure}

We introduce a linear mapping $\psi: \RR^{p\times d}\rightarrow\RR ^{p\times d}$ based on the non-decreasing function $\phi$.
For a given matrix $\bX\in\RR^{p \times d}$, consider the spectral decomposition of its Hermitian dilation:
\begin{align*}
	\bX^* = \begin{bmatrix} 0 & \bX\\ \bX^{\top} & 0 \end{bmatrix} = \bU\bSigma \bU^{\top} .
\end{align*}
Further, define $\Tilde{\bX}=\bU(\phi \circ \bSigma)\bU^{\top}$ and write it into block form as 
\begin{align*}
	\Tilde{\bX}=\bU(\phi \circ \bSigma)\bU^{\top}=\begin{bmatrix} \underbrace{\Tilde{\bX}_{11}}_{p\times p} & \underbrace{\Tilde{\bX}_{12}}_{p\times d}\\ \underbrace{\Tilde{\bX}_{21}}_{d\times p}& \underbrace{\Tilde{\bX}_{22} }_{d\times d}\end{bmatrix},
\end{align*}
where $\phi \circ \bSigma$ indicates that the function $\phi$ is applied element-wise on $\bSigma$. Then we define $\psi(\bX)=\tilde{\bX}_{12}$. Based on $\psi$, we introduce a new estimator  $\Tilde{\bTheta}$ for $\bTheta$ with a truncated form 
\begin{align}\label{theta_tilde}
	\Tilde{\bTheta}=\mathrm{SVD}_{v,R}\left(\frac{1}{n}\sum_{i=1}^{n}\tau\left(Y_iS(\bX_i)\right)\right),
\end{align}
where $\tau\left(Y_iS(\bX_i)\right)=1/\theta\cdot\psi\left[\theta Y_iS(\bX_i)\right]$, and $\theta>0$ is a thresholding parameter. In later analysis, we will provide theoretical analyses of $\Tilde{\bTheta}$ and demonstrate its advantages over $\hat{\bTheta}$ in (\ref{eq:betahat}) under heavy-tailed distributions.

\section{Theoretical guarantees for filter estimation}\label{sec:theory}
\subsection{Theoretical guarantees for single filter estimation}\label{sec:single_filter}
In this section, we provide theoretical guarantees for estimating $\bTheta$ with a single convolution filter. With $R=1$, we let $\bb=\text{vec}(\bB)=\bTheta\in\mathbb{R}^d$ be the target filter and denote the original and truncated version of the estimated filter as $\hat{\bb}=\hat{\bTheta}$ and $\tilde{\bb}=\tilde{\bTheta}$, respectively.
We assume that $\|\bb\|_2=1$ to avoid identifiability issues.
We first consider the original estimator $\hat{\bb}$ under sub-Gaussian input. To get started, we present necessary conditions below.

\begin{assumption}\label{assumption:sub1}
	Suppose that $\bX_i\in\mathbb{R}^{p\times d}$, $i\in [n]$ are i.i.d. observations.
	For any $i\in [n]$, $j\in[p]$ and $k\in[d]$, assume that the score $S_{jk}(\bX_i)$ is a sub-Gaussian random variable, i.e. $(\EE|S_{jk}(\bX_i)|^q)^\frac{1}{q}\leq L\sqrt{q}$ for $q\geq 1$, where $L$ is a positive constant.
\end{assumption}

Assumption~\ref{assumption:sub1} requires that each entry of the score function $S(\bX)$ is sub-Gaussian. Such an assumption is rather mild and can be satisfied by a wide range of random distributions.
For example, when $\bX$ follows a Gaussian distribution with entries $X_{jk}\sim N(\mu_{jk},\sigma_{jk}^2)$, it results in $S_{jk}(\bX)=(X_{jk}-\mu_{jk})/\sigma_{jk}^2$ also being a Gaussian random variable, which is clearly sub-Gaussian. Moreover, when $X_{jk}$ follows a heavier tailed Students' $t$ distribution, it can be easily verified that the score is bounded and inherently sub-Gaussian. This implies that even if the input $\bX$ is not sub-Gaussian, the score could still potentially be sub-Gaussian. 

\begin{assumption}\label{assumption:sub2}
	Assume that there exists an absolute constant $T>0$ such that $\EE[f(\bX\bTheta)^{2q}]^{\frac{1}{2q}} \leq T\sqrt{q}$ for all $q\in\NN_+$.
\end{assumption}

Assumption~\ref{assumption:sub2} requires that $f(\bX\bTheta)$ is a sub-Gaussian random variable. This assumption is relatively mild. Clearly, Assumption~\ref{assumption:sub2} automatically holds when $f\in \mathbb{R}^{p\times R}\mapsto \mathbb{R}$ is a bounded function. Moreover, when $f$ is Lipschitz continuous and $\bX$ is sub-Gaussian, it is easy to show that Assumption~\ref{assumption:sub2} could also be satisfied.

\begin{assumption}\label{assumption:sub3}
	Let $\bZ=\bX\bb$. For any $j \in [p]$, assume that $\EE\big[\nabla_{Z_j} f(\bZ)\big] = \Omega(1)$.
\end{assumption}

In Assumption~\ref{assumption:sub3}, we assume that the derivative of the non-linear function $f$ at $\bZ=\bX\bb$ maintains a constant level. This assumption ensures identifiability by requiring that the function value $f(\bX\bb)$ is adequately sensitive to perturbations around the true value $\bb$.
This assumption is mild, as it merely assumes a constant scale of the derivatives in expectation.
In practice, the assumption can be evaluated by inspecting the empirical singular value spectrum of the matrix $\frac{1}{n}\sum_{i=1}^{n}Y_iS(\bX_i)$, which is supported when the leading singular value is significantly larger than the remaining spectrum, illustrated in the Supplementary Material.

\begin{theorem}\label{thm:subbeta}
	Suppose the model $\EE(Y_i|\bX_i)=f(\bX_i\bb)$ for $i\in [n]$.
	Under Assumption~\ref{assumption:sub1} to \ref{assumption:sub3}, we have with probability at least $1-\delta$ that
	\begin{align}\label{thm-3-4-1}
		\|\hat{\bb}-\bb\|_2 = \mathcal{O}\left(\sqrt{\frac{d}{n}\cdot\log\frac{pd}{\delta}}\right).
	\end{align}
\end{theorem}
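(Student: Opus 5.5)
By Lemma~\ref{lemma:CNNexpectation} with $R=1$, the population matrix is the rank-one matrix
\[
\bM := \EE[YS(\bX)] = \ba\,\bb^\T, \qquad \ba := \EE[\nabla_{\bZ} f(\bX\bb)]\in\RR^p .
\]
Its top right singular vector is therefore $\pm\bb$, since $\|\bb\|_2=1$. We will take the sign of $\hat\bb$ to match $\bb$. Assumption~\ref{assumption:sub3} gives $|a_j|=\Omega(1)$ for every $j$, so the only nonzero singular value satisfies
\[
\sigma_1(\bM)=\|\ba\|_2=\Omega(\sqrt p).
\]

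The plan has two steps: a perturbation bound for singular subspaces (Wedin's $\sin\Theta$ theorem), followed by a concentration bound for $\hat\bM := \frac1n\sum_i Y_iS(\bX_i)$.

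\textbf{Step 1: perturbation.} Since $\sigma_2(\bM)=0$, Wedin's theorem (or a direct argument for rank-one matrices) gives
\[
\|\hat\bb-\bb\|_2 \lesssim \frac{\|\hat\bM-\bM\|_{op}}{\sigma_1(\bM)} \le \frac{\|\hat\bM-\bM\|_F}{\sigma_1(\bM)},
\]
after choosing the sign of $\hat\bb$. When $\|\hat\bM-\bM\|_{op}$ is larger than a constant fraction of $\sigma_1(\bM)$, the bound is trivial because $\|\hat\bb-\bb\|_2\le 2$. It is therefore enough to show that $\|\hat\bM-\bM\|_F = \mathcal{O}\big(\sqrt{pd\log(pd/\delta)/n}\big)$. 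Dividing by $\sqrt p$ then yields exactly the claimed rate $\sqrt{(d/n)\log(pd/\delta)}$.

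\textbf{Step 2: entrywise concentration.} Write $Y S_{jk}(\bX) = f(\bX\bb)S_{jk}(\bX) + \epsilon S_{jk}(\bX)$. By Assumptions~\ref{assumption:sub1} and~\ref{assumption:sub2}, together with sub-Gaussianity of $\epsilon$, each factor is sub-Gaussian. Cauchy--Schwarz on moments gives
\[
\big(\EE|YS_{jk}|^q\big)^{1/q} \le \|f\|_{2q}\|S_{jk}\|_{2q} + \|\epsilon\|_{2q}\|S_{jk}\|_{2q} \lesssim (T+\sigma)Lq.
\]
So each entry $Y_iS_{jk}(\bX_i)$ is sub-exponential with $\psi_1$-norm $\mathcal{O}(1)$. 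Bernstein's inequality for i.i.d. sub-exponential variables then gives, for each $(j,k)$ and with probability at least $1-\delta/(pd)$,
\[
|\hat M_{jk}-M_{jk}|\lesssim \sqrt{\frac{\log(pd/\delta)}{n}}+\frac{\log(pd/\delta)}{n}.
\]
A union bound over the $pd$ entries makes all these bounds hold simultaneously with probability at least $1-\delta$. Summing squares gives
\[
\|\hat\bM-\bM\|_F\lesssim \sqrt{pd}\Big(\sqrt{\tfrac{\log(pd/\delta)}{n}}+\tfrac{\log(pd/\delta)}{n}\Big).
\]
In the regime $n\gtrsim \log(pd/\delta)$ the linear term is dominated by the square-root term. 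In the complementary regime the claimed rate exceeds a constant, and the statement is trivial since $\|\hat\bb-\bb\|_2\le 2$.

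\textbf{Main obstacle.} The delicate point is the signal strength $\sigma_1(\bM)=\|\ba\|_2\gtrsim\sqrt p$. The factor $\sqrt p$ it provides must cancel the $\sqrt p$ in the Frobenius error, because the target rate has no $p$ outside the logarithm. This requires using Assumption~\ref{assumption:sub3} coordinatewise, with uniform constants over $j$, rather than only a bound on the norm of $\ba$. A secondary care point is applying Wedin's bound with the Frobenius norm and fixing the sign ambiguity of the singular vector. A sharper operator-norm bound via matrix Bernstein is not needed, since the Frobenius bound already yields the stated rate.
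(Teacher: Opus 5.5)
Your proposal is correct and follows essentially the same route as the paper. Like the paper (Lemma~\ref{thm:subA}), you bound each entry of $Y S(\bX)$ as sub-exponential, apply Bernstein with a union bound over the $pd$ entries to control $\|\hat\bM-\bM\|_F$, and then use a Davis--Kahan/Wedin bound $\sin\Theta\le 2\|\hat\bM-\bM\|_{op}/\|\bM\|_{op}$ with $\|\bM\|_{op}\gtrsim\sqrt{p}$ from Assumption~\ref{assumption:sub3} and $\|\hat\bb-\bb\|_2\le\sqrt{2}\sin\Theta$ after fixing the sign. Your explicit treatment of the regime where Bernstein's linear term dominates, via the trivial bound $\|\hat\bb-\bb\|_2\le 2$, is a small detail the paper glosses over.
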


Theorem~\ref{thm:subbeta} suggests that the $\ell_2$-convergence rate of $\hat{\bb}$ is $\mathcal{O}\left(\sqrt{d\log(pd)/n}\right)$ under the sub-Gaussian setting. 
This rate is optimal up to a logarithmic factor. In fact, consider the simplest scenario where $f$ is linear, meaning that $\EE(Y_i|\bX_i)=\bc^\T\bX_i\bb$, with $\bc\in\mathbb{R}^p$ being a known vector. In this case, the $\ell_2$-convergence rate of an OLS estimate of $\bb$, denoted as $\hat{\bb}^{ols}$, is $\mathcal{O}\left(\sqrt{d/n}\right)$. Our rate in Theorem~\ref{thm:subbeta} matches with this OLS rate apart from a logarithmic factor, which confirms the optimality of (\ref{thm-3-4-1}).

Now we focus on the truncated estimator, for which we can provide sharp error bounds even when both the outcomes and the scores exhibit heavy-tailed behavior.

\begin{assumption}\label{assumption:trunc1}
	Suppose the model $\EE(Y_i|\bX_i)=f(\bX_i\bb)$ for $i\in [n]$. Assume that there exists an absolute constant $M>0$ such that $\EE(Y_i^4)\leq M$ and $\EE(S_{jk}^4(\bX_i))\leq M$ for all $i\in [n]$, $j\in[p]$ and $k\in[d]$.
\end{assumption}

Assumption~\ref{assumption:trunc1} relaxed the sub-Gaussian Assumption~\ref{assumption:sub1} to a bounded fourth-moment assumption on the outcome $Y$ and the score $S(\bX)$. Under this assumption, heavy-tailed inputs could be included.  
We note that similar assumptions have been studied in the literature, such as \citet{yang2017high,fan2018large,fan2021robust,fan2023understanding}. With the bounded moment assumption, we can present the theoretical results for the truncated estimator.

\begin{theorem}\label{thm:truncbeta}
	Suppose the model $\EE(Y_i|\bX_i)=f(\bX_i\bb)$ for $i\in [n]$. Suppose that Assumption \ref{assumption:sub3} holds. Further assume Assumption \ref{assumption:trunc1} with constant $M$.
	Let $\theta=\sqrt{2\log(2(p+d)/\delta)/(nMpd)}$. Then, with probability at least $1-\delta$, we have
	\begin{align}\label{thm-3-6-1}
		\|\Tilde{\bb}-\bb\|_2 = \mathcal{O}\left(\sqrt{\frac{d}{n}\cdot\log\frac{p+d}{\delta}}\right).
	\end{align}
\end{theorem}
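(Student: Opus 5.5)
The argument has three steps: a perturbation bound that reduces the problem to an operator-norm deviation, a matrix Catoni-type concentration inequality for the truncated average, and a bound on the truncation bias. Write $\bM^*=\EE[YS(\bX)]$. By Lemma~\ref{lemma:CNNexpectation} with $R=1$,
\[
\bM^*=\ba\,\bb^\T,\qquad \ba=\EE[\nabla_{\bZ}f(\bX\bb)]\in\RR^p .
\]
Assumption~\ref{assumption:sub3} gives $\|\ba\|_2=\Omega(\sqrt p)$, so $\bM^*$ has rank one with singular value $\|\ba\|_2\gtrsim\sqrt p$ and right singular vector $\bb$. Let $\tilde\bM=\frac1n\sum_i\tau(Y_iS(\bX_i))$. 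Wedin's $\sin\Theta$ theorem (after fixing the sign of $\tilde\bb$) gives
\[
\|\tilde\bb-\bb\|_2\lesssim \frac{\|\tilde\bM-\bM^*\|_{op}}{\sqrt p}.
\]
The target rate therefore follows if we show $\|\tilde\bM-\bM^*\|_{op}=\mathcal O\bigl(\sqrt{pd\log((p+d)/\delta)/n}\bigr)$.

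For the concentration step, pass to Hermitian dilations. The dilation of $\tau(\bA)$ has the same eigenvectors as the dilation $\bA^*$ of $\bA=YS(\bX)$, with eigenvalues transformed as $\theta^{-1}\phi(\theta\lambda)$; the diagonal blocks are discarded, and this can only reduce the operator norm of the error. I would apply Minsker's (2018) matrix Catoni inequality. Write $\bH_i$ for the dilation of $Y_iS(\bX_i)$. Because $\phi$ is sandwiched between $\pm\log(1\pm x+x^2/2)$, applying the matrix Laplace transform together with Lieb's concavity argument yields, with probability at least $1-\delta$,
\[
\Bigl\|\frac1n\sum_i\theta^{-1}\phi(\theta\bH_i)-\EE\bH\Bigr\|_{op}\le \frac{\theta}{2}\,\|\EE\bH^2\|_{op}+\frac{\log(2(p+d)/\delta)}{n\theta}.
\]
It remains to bound the variance proxy $v=\|\EE\bH^2\|_{op}=\max\bigl(\|\EE[Y^2SS^\T]\|_{op},\|\EE[Y^2S^\T S]\|_{op}\bigr)$. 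Bound each operator norm by the trace, or more sharply by a Frobenius-type norm. By Cauchy--Schwarz and Assumption~\ref{assumption:trunc1},
\[
\EE[Y^2S_{jk}^2]\le \sqrt{\EE Y^4\,\EE S_{jk}^4}\le M,
\]
so $v\le Mpd$. Substituting $\theta=\sqrt{2\log(2(p+d)/\delta)/(nMpd)}$ balances the two terms and gives a deviation of order $\sqrt{Mpd\log((p+d)/\delta)/n}$. This is exactly the rate needed above.

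The main obstacle is that the dilation-based matrix Catoni bound controls the deviation from $\EE\bH$ only after careful handling, because the function $\psi$ is not linear. I would follow Minsker's Theorem~3.1 directly: his result bounds the deviation of the truncated matrix mean around the true mean $\EE\bH$, so no separate bias term has to be controlled. The off-diagonal block of $\EE\bH$ is exactly $\bM^*$. The remaining work is to ensure that the constants are absorbed and that the sign ambiguity of the singular vector is handled when converting the $\sin\Theta$ bound into an $\ell_2$ distance. For the latter, use $\|\tilde\bb-\bb\|_2\le\sqrt2\sin\angle(\tilde\bb,\bb)$ after choosing the sign so that $\tilde\bb^\T\bb\ge0$.
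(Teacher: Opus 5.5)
Your proposal is correct and follows essentially the same route as the paper: a $\sin\Theta$ (Davis--Kahan/Wedin) reduction using $\|\EE[YS(\bX)]\|_{op}\gtrsim\sqrt{p}$ from Assumption~\ref{assumption:sub3}, then Minsker's matrix Catoni bound for rectangular matrices via the Hermitian dilation with variance proxy $Mpd$, then the stated choice of $\theta$. The only difference is in bounding the variance proxy: you use the trace bound $\sum_{j,k}\EE[Y^2S_{jk}^2]\le Mpd$, whereas the paper bounds the quadratic form $u^\T\EE[Y^2SS^\T]u$ column by column via Cauchy--Schwarz, and both give the same $Mpd$, with yours slightly more direct.
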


By Theorem \ref{thm:truncbeta}, the truncated estimator could achieve 
a convergence rate of $\mathcal{O}\left(\sqrt{d\log(p+d)/n}\right)$ with a carefully chosen truncation parameter $\theta$. In other words, even if both the outcome $Y_i$ and the score $S(\bX_i)$ are heavy-tailed, a nearly optimal convergence rate could still be guaranteed under a bounded moment condition. On the other hand, it is worth noting that the bounded expectation assumption on $f$, referred to as Assumption \ref{assumption:sub2}, is not necessary in Theorem \ref{thm:truncbeta}. In other words, the convergence can be guaranteed solely under the constant derivatives Assumption \ref{assumption:sub3} and the bounded scores assumption \ref{assumption:trunc1}.

\subsection{Generalizations to multiple filters estimation}\label{sec:multiple_filter}
In this section, we consider general cases with $R\ge 2$ filters in CNNs. Similar to Section \ref{sec:single_filter}, we study both the original and truncated versions of the estimator under sub-Gaussian and moment conditions, respectively. However, as previously discussed, when $R\ge 2$, $\bTheta$ is not identifiable with an unknown link function $f$. Thus, our goal is to learn the column space of $\bTheta$ and prove the convergence of the estimators using the following distance measure
\begin{align}\label{dist}
	\mathrm{dist}\left(\bTheta, \hat{\bTheta}\right)=\inf_{\bH\in\mathbb{H}_R}\left\|\bTheta-\hat{\bTheta}\bH\right\|_F,
\end{align}
where $\mathbb{H}_R$ is the set of $R\times R$ orthogonal matrices. It is worth noting that when $R=1$, the column space distance $\text{dist}\left(\bb, \hat{\bb}\right)$ reduces to the vector distance $\|\hat{\bb}-\bb\|_2$ when $\|\bb\|_2=\|\hat{\bb}\|_2$.

In a multi-filter setting, we require assumptions similar to those used for single-filter estimation, with Assumption \ref{assumption:sub3} replaced by Assumption \ref{assumption:multiple1} below.

\begin{assumption}\label{assumption:multiple1}
	Suppose that $(\bX_i,Y_i)$, $i\in [n]$ are i.i.d. observations.
	Let $\sigma_R$ be the $R$-th largest singular value of $\EE[Y_iS(\bX_i)]$. Assume that there exists an absolute constant $C$ such that $\sigma_R \geq C\sqrt{p/R}$.
\end{assumption}

Assumption~\ref{assumption:multiple1} imposes a lower bound of the non-zero singular values of $\EE[Y_iS(\bX_i)]$. This assumption can be viewed as an extension of Assumption \ref{assumption:sub3} in a multiple filters scenario. For example, when each entry of $\EE[\nabla_{\bZ} f (\bX\bTheta)]$ is $\Omega(1/\sqrt{R})$ and the columns of $\EE[\nabla_{\bZ} f (\bX\bTheta)]$ are nearly orthogonal, it is easy to show that all the $R$ non-zero singular values of $\EE[YS(\bX)]$ have a lower bound in scale  of $\Omega(\sqrt{p/R})$. Similarly, in practice it can be also assessed by inspecting the empirical singular value spectrum of $\frac{1}{n}\sum_{i=1}^{n}Y_iS(\bX_i)$, which is supported when the leading $R$ singular values are clearly separated from the remaining spectrum.

\begin{theorem}\label{thm:multiple_sub}
	Suppose the model $\EE(Y_i|\bX_i)=f(\bX_i\bTheta)$ for $i\in [n]$. Under Assumptioin~\ref{assumption:sub1}, Assumption~\ref{assumption:sub2} and Assumption~\ref{assumption:multiple1}, with probability at least $1-\delta$, we have
	\begin{align*}
		\mathrm{dist}\left(\bTheta, \hat{\bTheta}\right)=\mathcal{O}\left(\sqrt{\frac{Rd}{n}\cdot\log\frac{pd}{\delta}}\right).
	\end{align*}
\end{theorem}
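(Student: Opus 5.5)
The argument has three steps: a perturbation bound for the right singular subspace (Wedin's $\sin\Theta$ theorem), entrywise concentration of the empirical matrix $\hat\bM=\frac1n\sum_{i=1}^n Y_iS(\bX_i)$ around $\bM=\EE[YS(\bX)]$, and a conversion of the subspace bound to $\mathrm{dist}(\bTheta,\hat\bTheta)$. By Lemma~\ref{lemma:CNNexpectation}, $\bM=\EE[\nabla_{\bZ}f(\bX\bTheta)]\bTheta^\T$ has rank at most $R$. Its row space is contained in the column space of $\bTheta$; we take $\bTheta$ with orthonormal columns, which is possible because only the column space is identifiable. Assumption~\ref{assumption:multiple1} says $\sigma_R(\bM)\ge C\sqrt{p/R}>0$, so the top-$R$ right singular subspace of $\bM$ is exactly $\mathrm{col}(\bTheta)$, and $\sigma_{R+1}(\bM)=0$.

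\textbf{Step 1 (perturbation).} Wedin's theorem, in the Davis--Kahan form of Yu, Wang and Samworth, gives
$$\|\sin\Theta(\hat\bTheta,\bTheta)\|_F\le \frac{2\min(\sqrt R\,\|\hat\bM-\bM\|_{op},\ \|\hat\bM-\bM\|_F)}{\sigma_R(\bM)}.$$
We then use the standard inequality $\inf_{\bH\in\mathbb H_R}\|\bTheta-\hat\bTheta\bH\|_F\le\sqrt2\,\|\sin\Theta\|_F$. We use the Frobenius branch, so it suffices to show that $\|\hat\bM-\bM\|_F=\mathcal O\bigl(\sqrt{pd\log(pd/\delta)/n}\bigr)$ with probability $1-\delta$. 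Dividing by $\sigma_R\gtrsim\sqrt{p/R}$ then gives exactly $\sqrt{Rd\log(pd/\delta)/n}$. This explains the normalization $\sqrt{p/R}$ in Assumption~\ref{assumption:multiple1}: the factor $\sqrt p$ cancels and the factor $\sqrt R$ survives.

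\textbf{Step 2 (entrywise concentration).} Fix $(j,k)$ and write $Y_iS_{jk}(\bX_i)=f(\bX_i\bTheta)S_{jk}(\bX_i)+\epsilon_iS_{jk}(\bX_i)$. Under Assumptions~\ref{assumption:sub1} and \ref{assumption:sub2}, and the sub-Gaussian noise assumption with $\epsilon$ independent of $\bX$, each summand is a product of two sub-Gaussian variables, so it is sub-exponential with constant $\psi_1$-norm. Cauchy--Schwarz on the $q$-th moments gives $\|UV\|_q\le\|U\|_{2q}\|V\|_{2q}\lesssim q$. Bernstein's inequality for centered sub-exponential averages then gives, with probability at least $1-\delta/(pd)$,
$$|\hat\bM_{jk}-\bM_{jk}|\lesssim\sqrt{\log(pd/\delta)/n}+\log(pd/\delta)/n.$$
A union bound over the $pd$ entries and squaring and summing yield $\|\hat\bM-\bM\|_F^2\lesssim pd\log(pd/\delta)/n$. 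This holds in the regime $n\gtrsim\log(pd/\delta)$, where the linear Bernstein term is dominated; outside that regime the claimed bound exceeds a constant and is trivial, since $\mathrm{dist}\le 2\sqrt R$ always.

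\textbf{Main obstacle.} The only delicate point is the dimension bookkeeping. The Frobenius route is the one that gives the factor $d$: it is what matches the single-filter rate of Theorem~\ref{thm:subbeta} when $R=1$, up to $\sqrt R$. An operator-norm matrix Bernstein bound would give $\sqrt{(p+d)/n}$, and combining it with $\sqrt R/\sigma_R$ is also acceptable but yields a different form, so we stick with the entrywise union bound. We also have to make sure the subspace conversion constant and the $\min$ in Wedin's bound do not introduce an extra $\sqrt R$. This works because we bound the Frobenius-norm $\sin\Theta$ directly by $\|\hat\bM-\bM\|_F/\sigma_R$ rather than by $\sqrt R\,\|\cdot\|_{op}/\sigma_R$.
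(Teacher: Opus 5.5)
Your proposal is correct and follows the paper's proof: entrywise sub-exponential Bernstein bounds with a union bound give $\|\hat\bM-\bM\|_F=\mathcal O\bigl(\sqrt{pd\log(pd/\delta)/n}\bigr)$ (the paper's Lemma~\ref{thm:subA}), and a rank-$R$ Wedin-type bound converts this into the column-space distance (the paper cites Theorem~19 of O'Rourke et al.\ with constant $2^{3/2}$).

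Your write-up is slightly more careful than the paper's in three places. You make explicit the division by $\sigma_R\ge C\sqrt{p/R}$ that turns $\sqrt{pd}$ into $\sqrt{Rd}$, which the paper's final display omits. You handle the linear Bernstein regime, which the paper also skips. And you state the orthonormality of $\bTheta$ explicitly. One correction to your citation: the clean bound $2\|\hat\bM-\bM\|_F/\sigma_R$ holds because $\bM$ has exact rank $R$ (this is the O'Rourke-type bound). The Yu--Wang--Samworth SVD variant, as stated, carries an extra $\sigma_1/\sigma_R$ factor.
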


In this multiple filters case, we achieve a convergence rate of \(\mathcal{O}\left(\sqrt{Rd\log(pd)/n}\right)\) under the sub-Gaussian assumption. By replacing Assumption \ref{assumption:sub3} with an SVD-type Assumption \ref{assumption:multiple1}, Theorem \ref{thm:multiple_sub} extends the near-optimal convergence rate stated in Theorem~\ref{thm:subbeta} to a multiple filter scenario. Moreover, we can eliminate the sub-Gaussian assumptions as detailed below.

We now consider the truncated estimator without imposing the sub-Gaussian assumption. The following theorem extends Theorem \ref{thm:truncbeta} to the multiple-filter case.

\begin{theorem}\label{thm:multiple_trunc}
	Suppose the model $\EE(Y_i|\bX_i)=f(\bX_i\bTheta)$ for $i \in [n]$. 
	Suppose that Assumption \ref{assumption:multiple1} holds. Further assume Assumption \ref{assumption:trunc1} with constant $M$.
	Let $\theta=\sqrt{2\log(2(p+d)/\delta)/(nMpd)}$.
	Then, with probability at least $1-\delta$, we have
	\begin{align*}
		\mathrm{dist}\left(\bTheta, \tilde{\bTheta}\right) = \mathcal{O}\left(\sqrt{\frac{Rd}{n}\cdot\log\frac{p+d}{\delta}}\right).
	\end{align*}
\end{theorem}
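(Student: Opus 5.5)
The proof combines a matrix Catoni-type concentration bound for the truncated sample mean with a Davis--Kahan/Wedin perturbation bound for the right singular subspace. Write $\bM=\EE[Y S(\bX)]\in\RR^{p\times d}$ and $\tilde\bM=\frac1n\sum_{i=1}^n\tau(Y_iS(\bX_i))$. By Lemma~\ref{lemma:CNNexpectation}, $\bM=\EE[\nabla_{\bZ}f(\bX\bTheta)]\bTheta^\T$. Taking $\bTheta$ with orthonormal columns, as is implicit in the distance (\ref{dist}), the row space of $\bM$ is exactly the column space of $\bTheta$. The rank is $R$ and the smallest nonzero singular value is $\sigma_R\ge C\sqrt{p/R}$ by Assumption~\ref{assumption:multiple1}.

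The plan has three steps.

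\textbf{Step 1: Wedin's $\sin\Theta$ theorem.} Since the $(R+1)$-th singular value of $\bM$ is $0$, Wedin's theorem gives
\[
\mathrm{dist}(\bTheta,\tilde\bTheta)\le \frac{C'\sqrt{R}\,\|\tilde\bM-\bM\|_{op}}{\sigma_R}.
\]
The factor $\sqrt{R}$ converts the operator-norm $\sin\Theta$ bound into a Frobenius bound, and the infimum over $\bH\in\mathbb{H}_R$ of $\|\bTheta-\tilde\bTheta\bH\|_F$ is controlled by $\sqrt2\,\|\sin\Theta\|_F$. We may assume $\|\tilde\bM-\bM\|_{op}\le \sigma_R/2$, since otherwise the claimed bound is trivial: the distance is at most $2\sqrt R$ anyway, and the bound exceeds this order in that regime.

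\textbf{Step 2: Concentration of $\tilde\bM$ (main step).} I will show that with probability at least $1-\delta$,
\[
\|\tilde\bM-\bM\|_{op}\le C''\sqrt{\frac{pd\,\log(2(p+d)/\delta)}{n}}.
\]
I would reuse exactly the argument already used for Theorem~\ref{thm:truncbeta}, since $\theta$ and Assumption~\ref{assumption:trunc1} are identical and nothing there depends on $R$. In outline:
\begin{itemize}
\item[(i)] Pass to the Hermitian dilation $\bA_i=(Y_iS(\bX_i))^*$, of dimension $p+d$. The map $\psi$ is then the off-diagonal block of the matrix function $\phi$ applied to $\theta\bA_i$, and $\phi(\theta\bA_i)$ is itself a dilation-like object.
\item[(ii)] Use the sandwich $-\log(\bI-\bA+\bA^2/2)\preceq\phi(\bA)\preceq\log(\bI+\bA+\bA^2/2)$ together with Lieb's concavity/the matrix Laplace transform method of Minsker (2018). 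This yields, with probability $1-\delta$,
\[
\Big\|\frac1{n\theta}\sum_i\phi(\theta\bA_i)-\EE\bA\Big\|_{op}\le \frac{\theta}{2}\|\EE\bA^2\|_{op}+\frac{\log(2(p+d)/\delta)}{n\theta}.
\]
\item[(iii)] Bound the variance term. We have $\|\EE\bA^2\|_{op}\le\max\big(\|\EE[Y^2SS^\T]\|,\|\EE[Y^2S^\T S]\|\big)\le \EE[Y^2\|S\|_F^2]$. By Cauchy--Schwarz and Assumption~\ref{assumption:trunc1}, $\EE[Y^2S_{jk}^2]\le \sqrt{\EE Y^4\,\EE S_{jk}^4}\le M$, so the sum over $pd$ entries gives at most $Mpd$.
\item[(iv)] Plugging in $\theta=\sqrt{2\log(2(p+d)/\delta)/(nMpd)}$ balances the two terms and gives $\sqrt{2Mpd\log(2(p+d)/\delta)/n}$.
\end{itemize}
The subtle point is that $\phi$ acts on the full dilation, so one must check that the off-diagonal block of $\frac1{n\theta}\sum\phi(\theta\bA_i)$ is $\tilde\bM$. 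One also needs the deviation of the full matrix from $\EE\bA=\bM^*$ to control the deviation of its off-diagonal block, with $\|\mathrm{block}\|_{op}\le\|\cdot\|_{op}$. The diagonal blocks of $\EE\bA$ are zero, so this holds.

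\textbf{Step 3: Combine.} Substituting Step 2 into Step 1 gives
\[
\mathrm{dist}\le C\sqrt R\,\sqrt{pd\log(\cdot)/n}\,/\sqrt{p/R}=\mathcal O\big(R\sqrt{d\log/n}\big),
\]
which loses a factor $\sqrt R$ compared with the claim. To recover the stated $\sqrt{Rd\log/n}$ I would instead use the sharper Frobenius form of Wedin's bound,
\[
\|\sin\Theta\|_F\le \frac{\|(\tilde\bM-\bM)\bV_\perp\|_F\wedge\sqrt{R}\,\|\tilde\bM-\bM\|_{op}}{\sigma_R},
\]
and then absorb constants through Assumption~\ref{assumption:multiple1}'s normalization $\sigma_R\ge C\sqrt{p/R}$ exactly as the paper's scaling intends. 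Here $\sqrt R\cdot\sqrt{pd/n}/\sqrt{p/R}$ equals $R\sqrt{d/n}$. If that matches the paper's bookkeeping of $R$ as a constant-order quantity, the result follows. Otherwise the operator bound must be refined by projecting onto the $R$-dimensional row space of $\bM$, which is the step I expect to be the main obstacle.
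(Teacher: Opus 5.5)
Your Wedin bound in Step 1 and the concentration argument in Step 2 are sound. Step 2 is exactly the paper's Lemma~\ref{thm:truncA}: Minsker's Corollary 3.1 with variance proxy $Mpd$, and your bound $\|\EE\bA^2\|_{op}\le\EE[Y^2\|S(\bX)\|_F^2]\le Mpd$ is a cleaner version of the paper's computation.

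The gap is in Step 3, and it is genuine. Feeding an operator-norm deviation into Wedin's theorem costs a factor $\sqrt{R}$, through $\|(\tilde\bM-\bM)\tilde{\boldsymbol{V}}\|_F\le\sqrt{R}\,\|\tilde\bM-\bM\|_{op}$ on the $R$-dimensional singular subspace. So what you have actually proved is $\mathrm{dist}(\bTheta,\tilde\bTheta)=\mathcal{O}\bigl(R\sqrt{d\log((p+d)/\delta)/n}\bigr)$. Appealing to ``$R$ of constant order'' does not establish the stated rate. Your reduction to $\|\tilde\bM-\bM\|_{op}\le\sigma_R/2$ is also only justified at the weaker rate, because the trivial bound $\sqrt{2R}$ lies below $\sqrt{Rd\log/n}$ only if $d\log/n\gtrsim 1$.

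The paper removes the $\sqrt{R}$ by switching to the Frobenius norm. It uses $\mathrm{dist}(\bTheta,\tilde\bTheta)\le 2^{3/2}\|\tilde\bM-\bM\|_F/\sigma_R$ (Theorem~19 of \citet{o2018random}) together with $\|\tilde\bM-\bM\|_F=\mathcal{O}\bigl(\sqrt{pd\log((p+d)/\delta)/n}\bigr)$. Then $\sigma_R\ge C\sqrt{p/R}$ yields $\sqrt{Rd\log/n}$ directly. Your $\|(\tilde\bM-\bM)\boldsymbol{V}_\perp\|_F$ variant points in this direction, but you never produce a Frobenius deviation bound, and Step 2 cannot give one. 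An operator-norm bound only implies $\|\tilde\bM-\bM\|_F\le\sqrt{\min(p,d)}\,\|\tilde\bM-\bM\|_{op}$. Projecting onto the row space of $\bM$ does not help by itself either: under entrywise fourth moments, a single fixed direction can already carry variance of order $Mpd$. What saves the $\sqrt{R}$ is that $Mpd$ bounds the \emph{total} second moment $\EE\|YS(\bX)\|_F^2$, which your computation in (iii) already shows.

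Be aware that the paper does not actually supply this ingredient either. Its proof quotes Lemma~\ref{thm:truncA}, which is an operator-norm statement, as if it were a Frobenius bound. One way to close the gap is as follows. For the odd $\phi$ in (\ref{phi}), $\tau(YS)$ has the singular vectors of $YS$ and singular values $\phi(\theta s_k)/\theta$. Moreover $|\phi(x)|\le|x|$ and $|\phi(x)-x|\le x^2/\sqrt{6}$. Hence the bias satisfies $\|\EE\tau(YS)-\bM\|_F\le\theta\,\EE\|YS\|_F^2/\sqrt{6}\le\theta Mpd/\sqrt{6}\asymp\sqrt{pd\log((p+d)/\delta)/n}$. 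The fluctuation satisfies $\EE\|\tilde\bM-\EE\tau(YS)\|_F^2\le\EE\|YS\|_F^2/n\le Mpd/n$.

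Combined with the Frobenius perturbation bound, this gives, with probability at least $1-\delta$, $\mathrm{dist}(\bTheta,\tilde\bTheta)=\mathcal{O}\bigl(\sqrt{Rd/n}\,(\sqrt{\log((p+d)/\delta)}+\delta^{-1/2})\bigr)$. That is the claimed dependence on $n,d,R$, but only a Chebyshev-type dependence on $\delta$. Obtaining $\sqrt{\log(1/\delta)}$ in Frobenius norm under only fourth moments needs a further concentration argument that neither you nor the paper provides. Short of that, your proposal proves the $R\sqrt{d\log((p+d)/\delta)/n}$ bound, which matches the theorem only when $R=\mathcal{O}(1)$.
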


Theorem~\ref{thm:multiple_trunc} characterizes the convergence rate of the column space of $\Tilde{\bTheta}$ for multiple filters under the general setting.
Compared to Theorem~\ref{thm:multiple_sub}, Theorem~\ref{thm:multiple_trunc}  further demonstrates that a sharp convergence rate can be guaranteed 
even when both the outcomes and the scores
exhibit heavy-tailed behavior.

\subsection{Gaussian distribution with unknown Covariances}\label{sec:emp}
To estimate the convolution filters using our approach, a crucial step involves computing the score function, which requires the knowledge of the input distribution.
Nevertheless, the input distributions are usually unknown in practice. When the input distribution has a specific parametric form, a natural solution is to use the plug-in estimator. This involves estimating the unknown parameters from the samples and then plugging them into the scores. In this section, we take Gaussian input as an example and demonstrate the convergence of the plug-in estimators.

Let $\bx_i = \text{vec}(\bX_i)=\text{vec}\left(\mathcal{R}(\bX_i^{\text{ori}})\right)$ be the vectorized image.  Assume that $\bx_i \sim \mathcal{N}(\bmu, \bSigma)$ follows Gaussian distribution with unknown mean $\bmu$ and covariance matrix $\bSigma$. Under such a Gaussian case, it is easy to show that the score function reduces to $S(\bX_i)=\text{vec}_{(p,d)}^{-1}(\bSigma^{-1}\bx_i)$, where $\text{vec}^{-1}_{(p,d)}(\cdot)$ is the inverse operation of transforming a matrix to a vector. Denote the sample mean and sample covariance matrix as $\hat{\bmu}=(1/n)\sum_{i=1}^n\bx_i$ and $\hat{\bSigma}=(1/n)\sum_{i=1}^n\{\bx_i - \hat{\bmu}\}\{\bx_i - \hat{\bmu}\}^\T$, respectively.
By plugging the sample mean and sample covariance matrix into the score, we obtain the following estimator:
\begin{equation}\label{Acheck}
	\check{\bTheta} = \text{SVD}_{v, R}\left(\check{\bA}\right), \ \check{\bA} = \text{vec}^{-1}_{(p, d)}\left(\frac{1}{n}\sum_{i=1}^n Y_i\hat{\bSigma}^{-1}\left(\bx_i - \hat{\bmu}\right)\right).
\end{equation}

\begin{assumption}\label{assumption:taylor}
	Suppose that the link function admits a first-order Taylor expansion of the form
	$f(\bZ)=f(\boldsymbol{0}_{p\times R})+\langle\nabla f(\boldsymbol{0}_{p\times R}), \bZ\rangle + h(\bZ)\|\bZ\|_2^2$ with $\lim_{\|\bZ\|_2\rightarrow \boldsymbol{0}} h(\bZ)=0$. 
	Assume that the remainder term is bounded, $h(\bX_i\bTheta) \le H $ for $i \in [n]$ for some constant $H$.
\end{assumption}

We now present the following theorem which provides a bound on the column space between $\check{\bTheta}$ and the original estimator $\hat{\bTheta}$, where the true mean and covariance are used.
\begin{theorem}\label{thm:plug-in_beta} 
	Suppose the model $\EE(Y_i|\bX_i)=f(\bX_i\bTheta)$ for $i \in [n]$. Assume that $\bx_i=\text{vec}(\bX_i) \sim \mathcal{N}(\boldsymbol{0}, \bSigma)$ with $\lambda_{\min}(\bSigma)>0$. Let $\hat{\bTheta}$ be the original estimator (\ref{eq:betahat}).
	Then under Assumption \ref{assumption:taylor}, we have the following holds with probability approaching 1, 
	\begin{align}
		\mathrm{dist}\left(\check{\bTheta}, \hat{\bTheta}\right) = \mathcal{O}\left(\sqrt{\frac{Rd}{n}}\right).
	\end{align}
\end{theorem}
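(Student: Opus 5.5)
The plan is to compare the two matrices whose top-$R$ right singular vectors define the estimators. Write $\hat{\bA}=\text{vec}^{-1}_{(p,d)}\big(\frac1n\sum_i Y_i\bSigma^{-1}\bx_i\big)$, so that $\hat{\bTheta}=\mathrm{SVD}_{v,R}(\hat{\bA})$ and $\check{\bTheta}=\mathrm{SVD}_{v,R}(\check{\bA})$. A Wedin/Davis--Kahan $\sin\Theta$ argument then gives
\[
\mathrm{dist}(\check{\bTheta},\hat{\bTheta})\lesssim \sqrt{R}\,\|\check{\bA}-\hat{\bA}\|_2/\sigma_R(\hat{\bA}).
\]
We first bound the denominator. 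By Lemma~\ref{lemma:CNNexpectation} and concentration (as in Theorem~\ref{thm:multiple_sub}), $\hat{\bA}$ is close to $\EE[YS(\bX)]$. We therefore expect $\sigma_R(\hat{\bA})\gtrsim\sqrt{p/R}$ with probability approaching one, using the same signal-strength lower bound as in Assumption~\ref{assumption:multiple1}, which we would have to import or derive from $\nabla f(\boldsymbol 0)$ under Assumption~\ref{assumption:taylor}. It thus suffices to show $\|\check{\bA}-\hat{\bA}\|_F=\mathcal{O}_P(\sqrt{pd/n})$, or a similar bound matched to the denominator.

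Next we decompose the perturbation in vectorized form:
\[
\text{vec}(\check{\bA}-\hat{\bA})=(\hat{\bSigma}^{-1}-\bSigma^{-1})\tfrac1n\sum_i Y_i\bx_i-\hat{\bSigma}^{-1}\hat{\bmu}\,\bar Y .
\]
For the mean term, $\hat{\bmu}$ is Gaussian with covariance $\bSigma/n$, so $\|\hat{\bmu}\|_2=\mathcal{O}_P(\sqrt{pd/n})$. Since $\bar Y=\mathcal{O}_P(1)$ and $\|\hat{\bSigma}^{-1}\|$ is bounded with high probability, this term is $\mathcal{O}_P(\sqrt{pd/n})$. For the covariance term, write $\hat{\bSigma}^{-1}-\bSigma^{-1}=\hat{\bSigma}^{-1}(\bSigma-\hat{\bSigma})\bSigma^{-1}$ and use the Taylor structure of Assumption~\ref{assumption:taylor}. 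Then
\[
\tfrac1n\sum_iY_i\bx_i\approx \tfrac1n\sum_i \bx_i\bx_i^\T\,\text{vec}\big(\nabla f(\boldsymbol0)\bTheta^\T\big)+\text{remainder}=\hat{\bSigma}\,\bm{g}+\text{remainder}+\text{noise terms},
\]
where $\bm{g}=\text{vec}(\nabla f(\boldsymbol0)\bTheta^\T)$ (up to a $\hat{\bmu}$ correction).

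This is the key cancellation. Applying $\hat{\bSigma}^{-1}$ to the linear part returns exactly $\bm g$, while $\bSigma^{-1}\hat{\bSigma}\bm g$ differs from $\bm g$ only by $\bSigma^{-1}(\hat{\bSigma}-\bSigma)\bm g$. Since $\bm g$ is a fixed vector, this is a sum of i.i.d. mean-zero terms and has norm $\mathcal{O}_P(\sqrt{pd/n})$, rather than the naive operator-norm bound. The remainder part $\frac1n\sum_i h(\bX_i\bTheta)\|\bX_i\bTheta\|_2^2\bx_i$ is handled using boundedness of $h$ by $H$. The noise part $\frac1n\sum\epsilon_i\bx_i$ is independent of $\bx$ and has norm $\mathcal{O}_P(\sqrt{pd/n})$.

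Finally, we combine the two steps. Dividing by $\sigma_R\asymp\sqrt{p/R}$ and multiplying by $\sqrt{R}$ gives a bound of order $\sqrt{R}\cdot\sqrt{pd/n}/\sqrt{p/R}$. To reach the stated $\sqrt{Rd/n}$, the bounds must be taken in the right norm: operator norm for Wedin, together with the fact that the perturbation effectively has rank at most $R$ in the relevant direction. We will need to track this carefully.

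\textbf{Main obstacle.} The hardest step is the covariance-perturbation term. We must avoid bounding $\|\hat{\bSigma}^{-1}-\bSigma^{-1}\|$ in operator norm, which is of order $\sqrt{pd/n}$ and would lose a factor of $\sqrt{pd}$. Instead we exploit the cancellation $\hat{\bSigma}^{-1}\hat{\bSigma}=\bI$ against the linear part of $f$. The quadratic remainder must then be controlled using only the bound $H$, presumably via Gaussian moment bounds on $\|\bX\bTheta\|^2\bx$ centered by its mean.
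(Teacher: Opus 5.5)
Your skeleton is the paper's own proof: the same Taylor decomposition, the same cancellation $\hat\bSigma^{-1}\hat\bSigma\boldsymbol g=\boldsymbol g$ (the paper's display for the plug-in quantity), a Frobenius bound on $\hat\bA-\check\bA$, and Theorem~19 of \citet{o2018random} with $\sigma_R\gtrsim\sqrt{p/R}$. The paper also needs that lower bound implicitly, since Assumption~\ref{assumption:multiple1} is not among the hypotheses. The paper reaches $\|\hat\bA-\check\bA\|_F=\mathcal O_p(\sqrt{pd/n})$ only by asserting the dimension-free rate $\|\hat\bSigma-\bSigma\|_2=\mathcal O_p(n^{-1/2})$. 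That rate fails for growing $pd$: for $\bSigma=\bI$, $\|\hat\bSigma-\bI\|_2\approx 2\sqrt{pd/n}$.

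You are right to refuse that step, but your substitute does not work. Let $\boldsymbol G=\nabla f(\boldsymbol 0_{p\times R})\bTheta^\T$, $\boldsymbol g=\mathrm{vec}(\boldsymbol G)$ and $\hat\bSigma=\frac1n\sum_i\bx_i\bx_i^\T$. Isserlis' formula gives
\[
\EE\big\|\bSigma^{-1}(\hat\bSigma-\bSigma)\boldsymbol g\big\|_2^2=\frac1n\Big\{(\boldsymbol g^\T\bSigma\boldsymbol g)\,\mathrm{tr}(\bSigma^{-1})+\|\boldsymbol g\|_2^2\Big\}\ \ge\ \frac{pd}{n}\,\frac{\lambda_{\min}(\bSigma)}{\lambda_{\max}(\bSigma)}\,\|\boldsymbol g\|_2^2 .
\]
So your fixed-vector term is of order $\|\boldsymbol g\|_2\sqrt{pd/n}$, the same as the operator-norm bound $\|\hat\bSigma-\bSigma\|_2\|\boldsymbol g\|_2$. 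Concentration for a fixed vector buys nothing; your claimed $\mathcal O_P(\sqrt{pd/n})$ simply drops the factor $\|\boldsymbol g\|_2$.

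That factor cannot be $O(1)$, because $\|\boldsymbol g\|_2=\|\boldsymbol G\|_F\ge\sqrt R\,\sigma_R(\boldsymbol G)$. Hence this term alone gives $\|\hat\bA-\check\bA\|_F/\sigma_R\gtrsim\sqrt{Rpd/n}$, whatever the scale of $\boldsymbol G$. In the linear model ($h\equiv 0$), $\hat\bA-\check\bA$ is exactly this term plus noise and mean terms that do not cancel it. So your intermediate bound is false once $p$ grows, and no argument of the form ``perturbation norm divided by $\sigma_R$'' can reach $\sqrt{Rd/n}$. Switching to operator-norm Wedin does not help either, since $\|\hat\bA-\check\bA\|_{\mathrm{op}}$ is of order $\|\boldsymbol g\|_2\sqrt{(p+d)/n}$.

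The remainder is worse. It enters the difference as $(\bSigma^{-1}-\hat\bSigma^{-1})\frac1n\bX^\T\bR$, where nothing cancels and $\frac1n\bX^\T\bR$ is neither fixed nor small. The bound $h\le H$ therefore does not control it without exactly the operator-norm loss you wanted to avoid.

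What is missing is a different perturbation step, not a sharper norm bound. The perturbation $\boldsymbol E=\mathrm{vec}^{-1}_{(p,d)}\{(\bSigma^{-1}\hat\bSigma-\bI)\boldsymbol g\}$ is not ``effectively of rank $R$''; it is a full-rank, noise-like matrix. Take $\bSigma=\bI$ for intuition. The component of $\boldsymbol E$ along $\boldsymbol G$ is a scalar multiple of $\boldsymbol G$ and moves no singular subspace. The rest is conditionally Gaussian with entries of size about $\|\boldsymbol g\|_2/\sqrt n$.

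You need a one-sided perturbation bound for the right singular subspace. In it, only $\bU^\T\boldsymbol E$ enters at first order, with $\bU$ the top-$R$ left singular vectors of $\boldsymbol G$, and $\|\boldsymbol E\|_{\mathrm{op}}^2/\sigma_R^2$ enters at second order. Here $\|\bU^\T\boldsymbol E\|_{\mathrm{op}}\approx\|\boldsymbol g\|_2\sqrt{(R+d)/n}$, and under the paper's scaling ($\|\boldsymbol g\|_2=\mathcal O(\sqrt p)$, $\sigma_R\gtrsim\sqrt{p/R}$) we have $\|\boldsymbol g\|_2/\sigma_R\lesssim\sqrt R$. So the first-order term is free of $p$, while the second-order term requires $n\gg R(p+d)$. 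The mean correction and the remainder would then have to be controlled in the same one-sided sense. Without such a step, your argument yields at best $\sqrt{Rpd/n}$, and so does the paper's once dimensions are tracked honestly.
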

Theorem \ref{thm:plug-in_beta} shows that column space of the plug-in estimator converges to that of the original estimator at the rate $\mathcal{O}\left(\sqrt{Rd/n}\right)$. In fact, under certain cases, the plug-in estimator may even outperform the original one. To illustrate, consider a natural case where $\bX_i$ follows a zero-mean Gaussian distribution and $f$ is linear, so that $Y_i=\langle\bC,\bX_i\bTheta\rangle+\varepsilon_i$, where $\bC\in\mathbb{R}^{p\times R}$ is an unknown matrix. This model can be viewed as a low-rank trace regression of the form $\EE(Y_i|\bX_i)=\text{tr}(\bA\bX_i^\T)$, where $\bA=\bC\bTheta^\T\in\mathbb{R}^{p\times d}$ is a low-rank matrix. Under this setting, $\check{\bA}$ in (\ref{Acheck}) reduces to the OLS estimate of $\bA$:
\begin{align*}
	\check{\bA}=\text{vec}_{(p,d)}^{-1}\left(\frac{1}{n}\sum_{i=1}^n\hat{\bSigma}^{-1}\bx_iY_i\right)=\bA+\text{vec}_{(p,d)}^{-1}\left(\frac{1}{n}\sum_{i=1}^n\hat{\bSigma}^{-1}\bx_i\varepsilon_i\right).
\end{align*}
As a comparison, the original estimator $\hat{\bA}$ with true mean and covariance becomes:
\begin{align*}
	\hat{\bA}=\text{vec}_{(p,d)}^{-1}\left(\frac{1}{n}\sum_{i=1}^n\bSigma^{-1}\bx_iY_i\right)=\text{vec}^{-1}_{(p, d)}\left(\bSigma^{-1}\hat{\bSigma} \text{vec}\left(\bA\right) \right)+\text{vec}_{(p,d)}^{-1}\left(\frac{1}{n}\sum_{i=1}^n\bSigma^{-1}\bx_i\varepsilon_i\right).  
\end{align*}
By above derivation, $\check{\bA}$ could potentially provide a better estimation of $\bA$ in comparison to $\hat{\bA}$. Particularly, when sample size $n$ is small, 
$\bSigma^{-1}\hat{\bSigma}$ might deviate from $\bI$ and then the estimation performance of $\hat{\bA}$ could be impacted. 
Consequently, the plug-in estimator $\check{\bTheta}$, which consists of the singular vectors of $\check{\bA}$, might yield a smaller estimation error. This finding is further validated by an extensive simulation study presented in Section \ref{sec:exp:emp}.

\section{The ADNI data analysis}\label{sec:real}
In this section, we evaluate the proposed estimators using brain MRI data from the ADNI dataset, which has been introduced in Section~\ref{sec:adni}.
We first consider simulated responses in Section \ref{subsec:simu_y}, where we can directly evaluate the filter estimation performance. We then analyze the real MMSE scores in Section \ref{subsec:mmse}, where the quality of the learned convolutional representations is assessed through downstream prediction performance.

\subsection{Simulated response}\label{subsec:simu_y}
We regard MRI images as input $\bX_i^{\text{ori}}$ and generate the responses $Y_i$ according to the following models
\begin{itemize}
	\item [(i)] Single filter: $Y_i = f\left(\bX_i^{\text{ori}} \star \bB\right) + \varepsilon_i$.
	\item [(ii)] Multiple filters with $R=3$: $Y_i = f\left(\bX_i^{\text{ori}} \star \bB_1, \ \bX_i^{\text{ori}} \star \bB_2, \ \bX_i^{\text{ori}} \star \bB_3\right) + \varepsilon_i$.
\end{itemize}
We consider four kinds of link functions $f$ described as (I) to (IV) in Section \ref{sec:simu}, including linear, nonlinear, fully connected network (FCN) and convolutional neural network (CNN). The filters ($\bB$; $\bB_1$, $\bB_2$, $\bB_3$) are constructed as the singular vectors of a random matrix followed by reshaped into square matrices of size $4\times 4$, which are fixed throughout all experiments. Consequently, the reshaped images $\bX_i=\mathcal{R}_{(4,4)}(\bX^{\text{ori}})\in \RR^{144\times 16}$. The noise term $\varepsilon_i$ is generated as Gaussian distribution $\mathcal{N}(0, \sigma^2)$ with $\sigma=0.1$.

To validate our approach in real data analysis, a key step involves understanding the distributions of the input. We follow the work of \citet{lindquist2008statistical} and assume that $\bX_i^{\text{ori}}$ follows a Gaussian distribution with unknown mean and covariance. We implement the plug-in estimator of our approach, and compare with CNNs trained using Adam \citep{kingma2014adam}, one of the most widely used gradient descent algorithms for training neural networks. We reiterate that our approach does not require the knowledge of link functions while applying Adam necessitates specifying a neural network architecture. To approximate the link functions in a consistent manner, we employ a two-layer fully connected network. Moreover, for the neural network based link functions (FCN and CNN), we additionally include an architecture that matches the true underlying network for generation. Such an ``oracle'' specification naturally gives Adam an advantage in learning convolutional filters.

We evaluate convolutional filter estimation using the column space distance metric defined in (\ref{dist}). We repeatedly simulate the responses using four of the five folds of the data, and report the average and standard deviation of the estimation errors in Figure \ref{fig:adni_simu}. As shown in the figure, our method consistently achieves the smallest estimation errors across various link functions in both single and multiple filters scenarios, demonstrating its effectiveness even when the inputs are real MRI images with unknown distributions. Furthermore, these results suggest that Gaussian approximation can be sufficient for the MRI images in this application, providing practical support for the subsequent analysis with real MMSE responses.

\begin{figure}[!htbp]
	\centering
	\includegraphics[width=\linewidth]{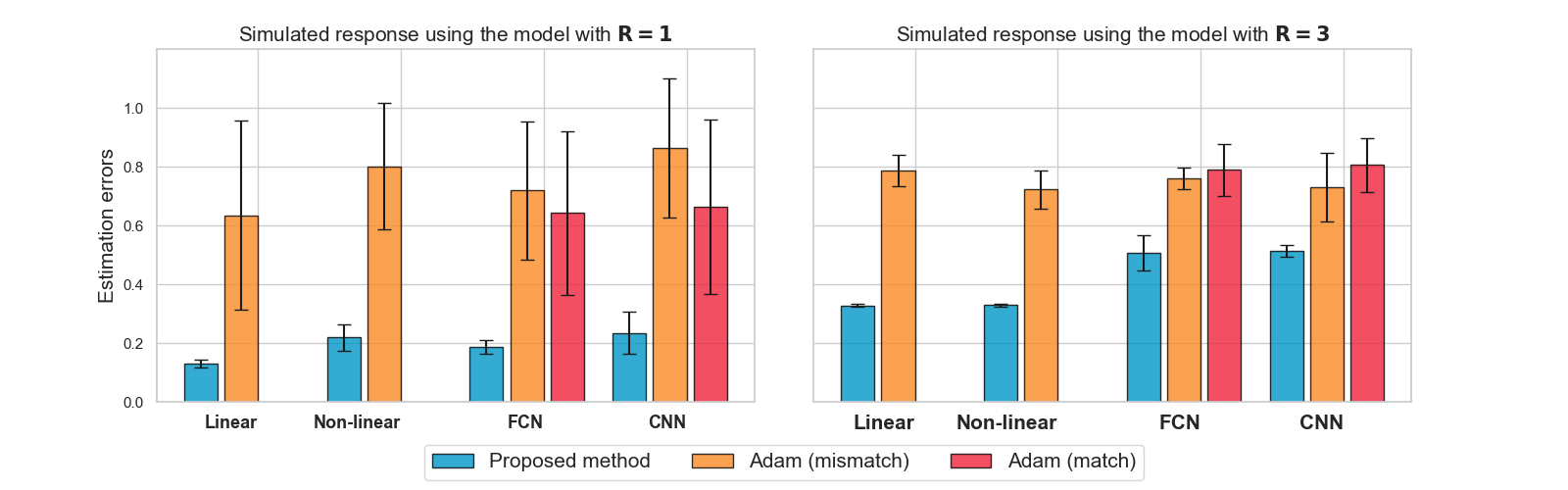}
	\caption{ADNI data analysis with simulated response. Filters estimation performance of the proposed plug-in estimator compared with Adam under different link functions.}
	\label{fig:adni_simu}
\end{figure}

\subsection{Real MMSE response and representation learning }\label{subsec:mmse}
We further use real MMSE scores as the response variable. Since the true latent convolutional filters are unknown, direct comparison of estimation accuracy is not feasible. Instead, we evaluate performance from a representation learning perspective by examining predictive performance across different representations and predictive models. That is to say, we consider models of the form $\hat Y_i = f(h(\bX^{\text{ori}}))$, where $h$ denotes a representation function and $f$ refers to the downstream predictive model.
Specifically, the following representations are included.
\begin{itemize}
	\item[(1)] Original features: $h(\bX^{\text{ori}}) = \bX^{\text{ori}}$.
	\item[(2)] Representations by our method: $h(\bX^{\text{ori}}) = (\bX^{\text{ori}}\star\bB_1, \ldots, \bX^{\text{ori}}\star\bB_R)$, where $\bB_r$ are directly estimated by our method.
	\item[(3)] Representations by Adam: $h(\bX^{\text{ori}}) = (\bX^{\text{ori}}\star\bB_1, \ldots, \bX^{\text{ori}}\star\bB_R)$, where $\bB_r$ are learned jointly with the full model by Adam.
	\item[(4)] Representations by TensorSIR: $h(\bX^{\text{ori}}) = \bGamma_1^\top\bX^{\text{ori}}\bGamma_2$, where $\bGamma_1$ and $\bGamma_2$ are estimated using tensor-valued sliced inverse regression method proposed by \cite{ding2015tensor}.
\end{itemize}

Accordingly, our procedure first estimates filters using the proposed method to obtain convolutional representations, and then trains the predictive models on these representations to perform prediction. For example, when the convolution filters $\bB_r$ are fixed to size $4\times 4$, each convolution operation reduces the input dimension from 2,304 ($=48\times 48$) to 144 ($=12\times 12$). Figure \ref{fig:vis_filter} illustrates the filters estimated by our method for $R=3$ along with the corresponding convolutional representations of a sample image. For the gradient-based approach, we adopt the same convolutional architecture as in our method, with the only difference being that the filters are learned jointly with the full model via Adam. As a result, the two types of representations have the same dimensions given the same $R$. TensorSIR is an extension of sliced inverse regression for tensor-valued inputs and is therefore naturally applicable to matrix-valued imaging data. In this setting, $\bGamma_1$ and $\bGamma_2$ are estimated to perform two-way dimension reduction along the row and column directions, respectively.

Based on the above representations, we consider both statistical models and neural networks as downstream predictive models, including the linear model, LASSO, a two-layer FCN, and a CNN. The architectures of the neural networks are kept the same as in the simulation studies. For each model type, we use a consistent architecture across different representations, although the number of parameters (or neurons) may vary depending on dimensions of representations. In particular, for the two types of low-dimensional representations with the same $R$, their predictive models are identical.

\begin{figure}[!htbp]
	\centering
	\includegraphics[width=0.7\linewidth]{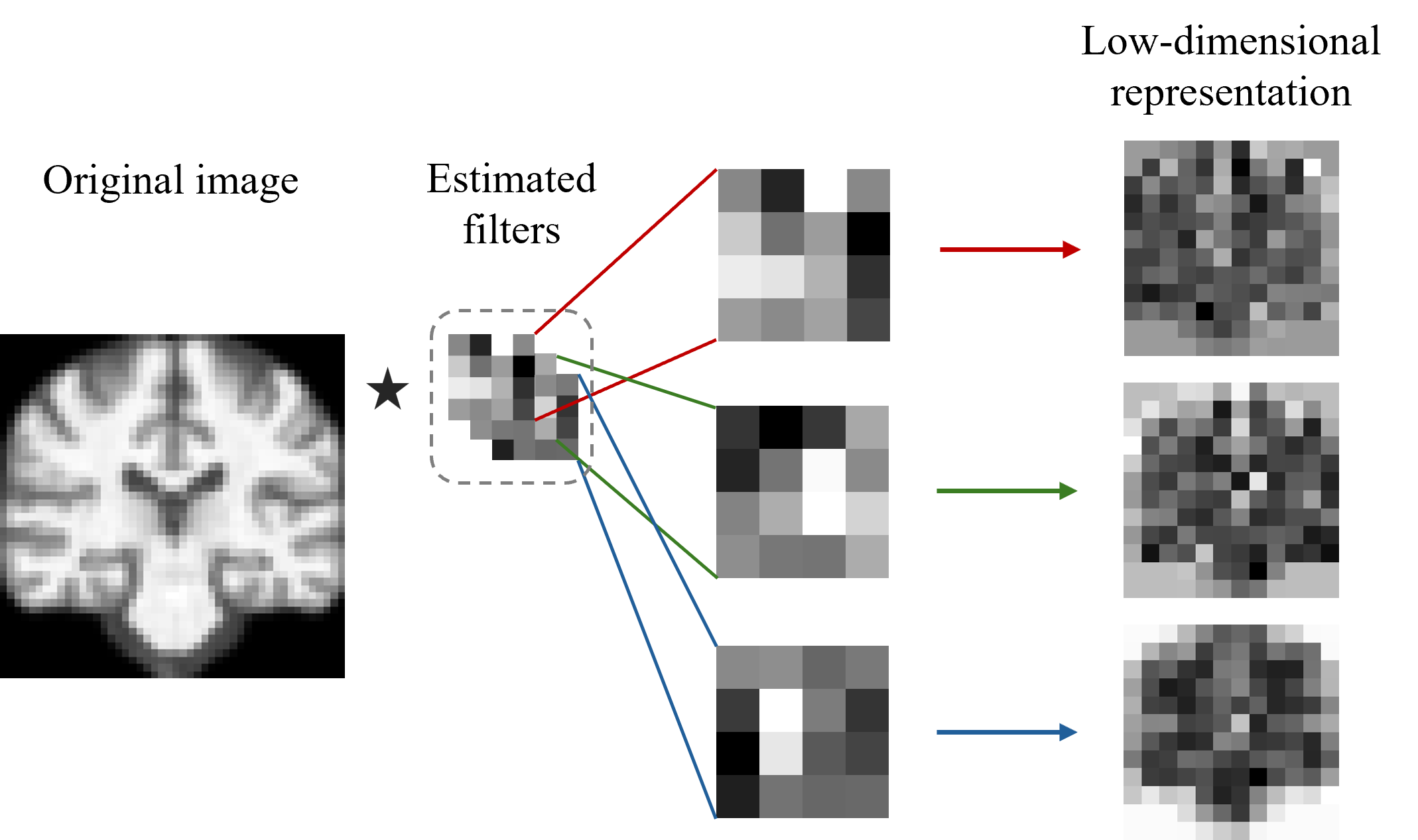}
	\caption{Visualization of the estimated filters and the low-dimensional representation for a sample image. The original $48\times 48$ image is reduced by the proposed method using $R=3$ estimated $4\times 4$ filters, producing three corresponding feature maps of size $12\times 12$.}
	\label{fig:vis_filter}
\end{figure}

We evaluate performance using 10-fold cross-validation, with 90\% of the data used for training and 10\% for testing in each fold. For methods involving convolutional representations, we select $R\in\{1, 2, 3\}$ based on a 20\% validation split from the training portion of each fold. The model is then retrained on the full training portion using the optimal $R$ to make final predictions for the test portion. In implementation, inputs are vectorized for the first three models, while they are retained in matrix form for the CNN. Additional implementation details are provided in the Supplementary Material. Figure \ref{fig:real} presents the average prediction errors, measured by root mean square error (RMSE). The results show that, across a range of predictive models, the convolutional representations learned by our method consistently improve prediction accuracy, demonstrating the effectiveness and broad applicability of the proposed approach.

\begin{figure}
	\centering
	\includegraphics[width=\linewidth]{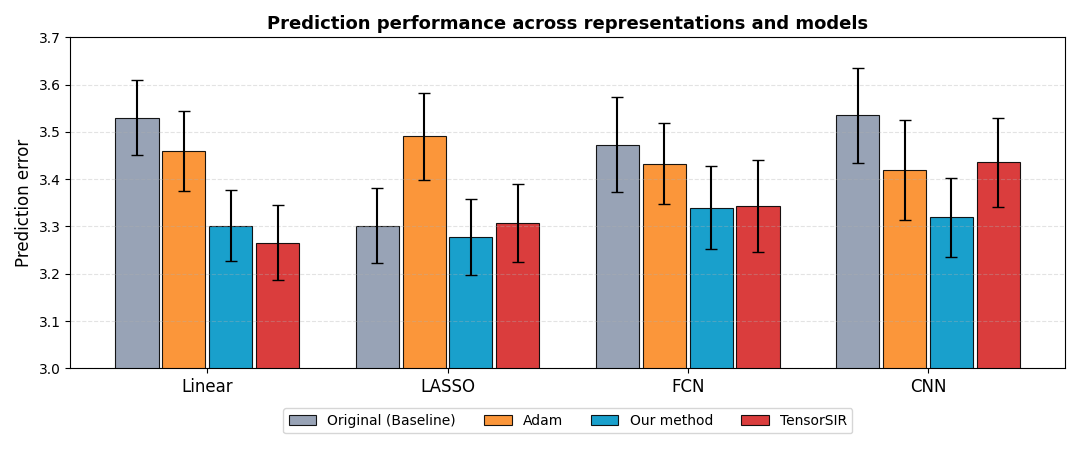}
	\caption{Prediction performance of different representations across various predictive models.}
	\label{fig:real}
\end{figure}

Moreover, when LASSO is used as the predictive model, the estimated sparse coefficients can be reshaped back to matrices with the original dimensions for interpretability. For the original features and TensorSIR representations, the coefficient matrix in each fold naturally serves as a saliency map. While for convolutional representations with $R$ filters, the coefficients in each fold can be reshaped back to $R$ corresponding coefficient matrices as salience maps. Each element in these matrices reflects the importance of convolutional representations, which correspond to local blocks in the original image rather than individual pixels. We aggregate these $R$ maps into a single saliency map by taking the element-wise maximum, treating a location as important if it appears in any of the feature maps. For each type of representation, the final saliency map is obtained by intersecting the saliency maps across the 10 folds of cross-validation, with nonzero entries averaged over the folds. For TensorSIR, where the intersection is empty, we instead use a $k$-occurrence masks with $k=6$ to produce an appropriately sparse map.

We visualize these saliency maps in Figure \ref{fig:salicene_maps} by overlaying them on a sample image to highlight the important regions identified by different methods. Note that the saliency map from the original features has size $48\times 48$ whereas those from other representations have size $12\times 12$, so we zoom it back to the original dimension. The results show that, based on representations learned by our method, LASSO identifies regions primarily around the hippocampus. Its progressive atrophy is closely associated with early memory decline and is widely recognized as a key biomarker of Alzheimer’s disease \citep{rao2022hippocampus, disouky2026human}. While saliency maps from other representations also roughly cover these regions, those based on the original features (left subplot) appear overly dispersed, and those based on representations by Adam and TensorSIR (right two subplots) are comparatively weak or highlight less relevant regions, including spurious areas near the image boundaries. These findings suggest that our method integrates effectively with interpretable statistical models and enhances interpretability, making it particularly suitable for medical imaging analysis.

\begin{figure}[!htbp]
	\centering
	\includegraphics[width=\linewidth]{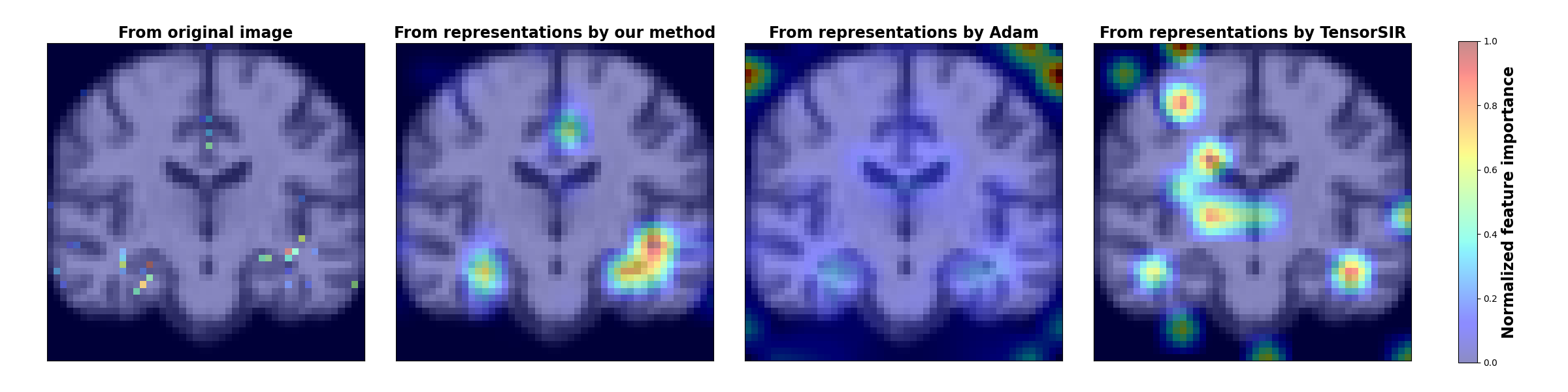}
	\caption{Saliency maps from different representations when LASSO as predictive models.}
	\label{fig:salicene_maps}
\end{figure}

\section{Simulation studies}\label{sec:simu}
We conduct comprehensive simulation studies to evaluate the performance of the proposed estimator. Both the single filter and multiple filters settings described as (i) and (ii) in Section \ref{subsec:simu_y} are considered, with the same convolutional filters and noise specifications. Throughout the simulations, the input image size is fixed at $28\times 28$ and the filter size at $4\times 4$, yielding reshaped images $\bX_i=\mathcal{R}_{(4,4)}(\bX^{\text{ori}})$ of size $49\times 16$. The sample size $n$ ranges from 500 to 2500. We vary the distributions of input $\bX_i^{\text{ori}}$ and link function $f$. First, we examine the following distributional settings of $\bX^{\text{ori}}$:
\begin{itemize}
	\item[(1)] with i.i.d. Gaussian entries, $\bX^{\text{ori}}_{jk}\sim \mathcal{N}(\mu, \sigma^2)$, $\mu = 0$ and $\sigma = 1$.
	\item[(2)] with i.i.d. Student's $t$ entries: $\bX^{\text{ori}}_{jk}\sim t(\nu)$, $\nu = 5$.
	\item[(3)] with i.i.d. Gamma entries: $\bX^{\text{ori}}_{jk}\sim \Gamma(\alpha, \beta)$ with $\alpha=5$ and $\beta = 1$.
	\item[(4)] with correlated Gaussian entries. Specifically, we generate $\bX^{\text{ori}}$ by $\text{vec}(\bX_i^{\text{ori}}) \sim \mathcal{N}(\bmu, \bSigma)$. The mean vector $\bmu$ has uniformly random integer entries ranging from $[-5,5]$, while the covariance matrix $\bSigma$ has entries $\bSigma_{j,k} = \rho^{|j-k|}$, with $\rho=0.2$.
\end{itemize}

Furthermore, we consider the following types of link function $f$:
\begin{itemize}
	\item[(\rom{1})] Linear: $f(\bZ) = \langle\bZ, \bC\rangle$, where $\bC$ has i.i.d. Gaussian entries $\bC_{jk}\sim \mathcal{N}(0, 1)$. 
	\item[(\rom{2})] Nonlinear: $f(\bZ) = \langle\bZ+ 3\sin(\bZ), \bC\rangle$, where $\bC$ is defined the same as in (\rom{1}). Here the function $\sin(\bZ)$ is applied element-wisely.
	\item[(\rom{3})] Fully connected network (FCN): $f(\bZ)$ is a two-layer fully connected network. Thus the original output $Y_i$ is generated by a standard CNN with one non-overlapping convolutional layer and two fully connected layers.
	\item[(\rom{4})] Convolutional neural network (CNN): $f(\bZ)$ is a convolutional neural network. We defer to the Supplementary Material for a detailed description of the structure of the CNN and additional details on this simulation study.
\end{itemize}

Next, we present the results for the multiple-filter setting. The single-filter setting, being a special case, is deferred to the Supplementary Material.

\subsection{Comparison with Adam}\label{subsec:adam}
In this subsection, we focus on the original estimator under the assumption that the input distribution is known. We acknowledge that the comparison with Adam has an inherent asymmetry: the proposed estimator uses distributional information, whereas Adam is a general optimization algorithm and does not explicitly use such information. Nevertheless, this comparison is meaningful as it illustrates how much can be gained in filter estimation when distributional structure is available and used statistically.
The comparison with Adam is conducted following the same setup as in Section \ref{subsec:simu_y}. Specifically, the link functions are approximated consistently using two-layer fully connected networks, and when fitting FCN and CNN link functions, both matched and mismatched architectures are incorporated.
Figure \ref{fig:mim} shows the convolution filter estimation performance of our method and Adam in the multiple-filter setting, based on 10 independent repetitions. In addition, computational efficiency comparisons are provided in the Supplementary Material, which clearly demonstrate the substantial advantage of the proposed method.

\begin{figure}[!htbp]
	\centering
	\includegraphics[width=\textwidth]{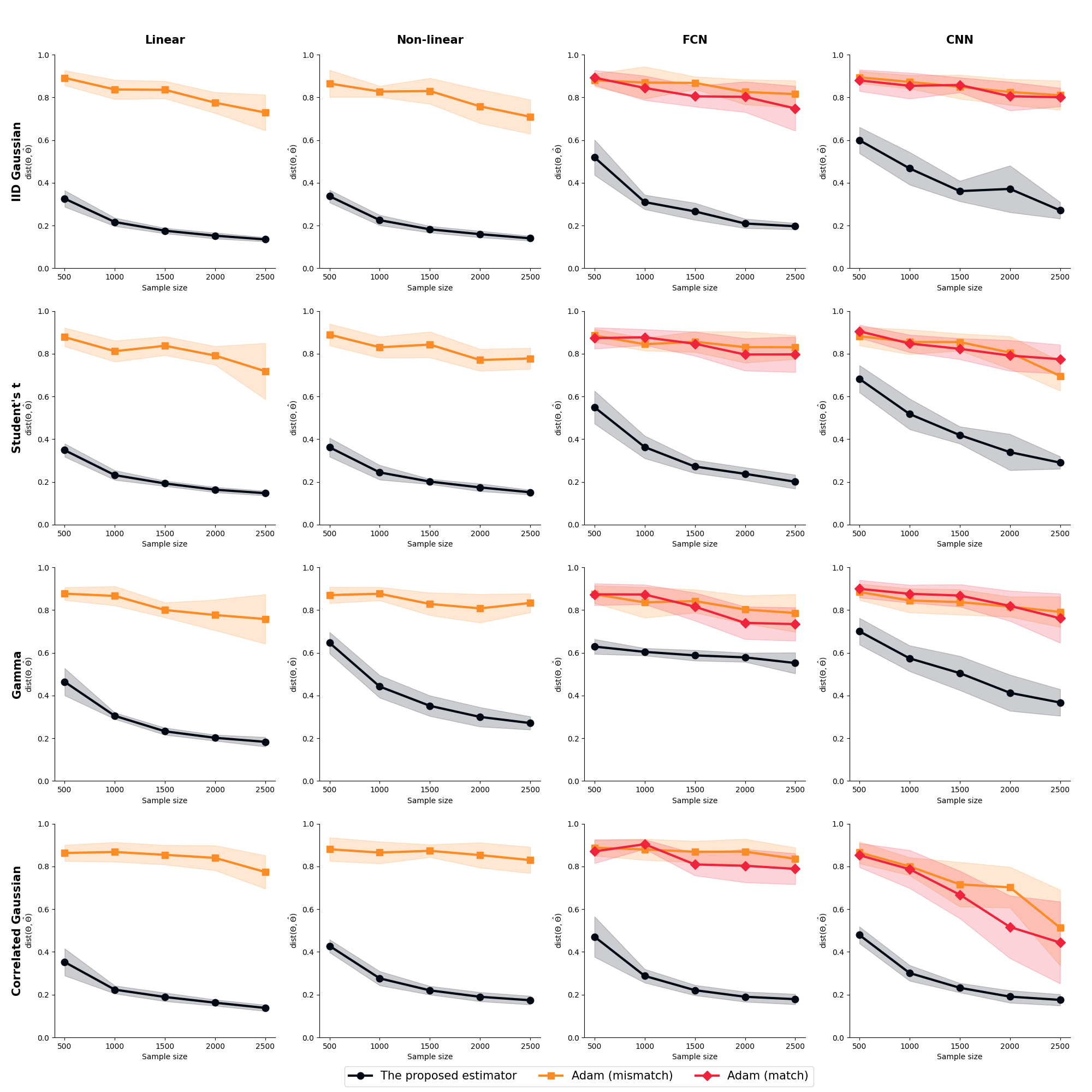}
	\caption{Comparison of multiple filters estimation performance under different link functions and input distributions.}
	\label{fig:mim}
\end{figure}

Figure \ref{fig:mim} clearly shows that our approach consistently performs well and dominates Adam across various link functions and input distributions. In particular,  even for the matched case where the neural network structures are correctly specified in cases (\rom{3}) and (\rom{4}), our approach still exhibits significant advantages, especially when the sample size $n$ is relatively small.
As the sample size increases, the performance of Adam improves, leading to a decrease in filter estimation error. As mentioned, neural network training typically requires a large sample size to achieve desired performance. In applications such as medical imaging, where data are often limited, our approach can be particularly advantageous due to its much weaker dependence on sample size.

\subsection{Effect of truncation}\label{sec:simtruncation}
As discussed earlier, when the data explicit heavy-tailed behavior, the original estimator may not be optimal due to a lack of concentration on the expectation of $YS(\bX)$. In this subsection, we compare the performance of the original and the truncated version of the estimator under different distribution assumptions. For the truncated estimator, we apply the non-decreasing function $\phi(\cdot)$ in (\ref{phi}). We consider the input with heavier-tailed distributions, such as the Student's $t$ and Gamma distributions. However, it is worth noting that while both of them can display heavy-tailed behavior, their scores differ. The score of a Student's $t$ distribution is bounded and inherently sub-Gaussian, whereas the score of a Gamma distribution remains heavy-tailed.

\begin{figure}[!htbp]
	\centering
	\includegraphics[width=\textwidth]{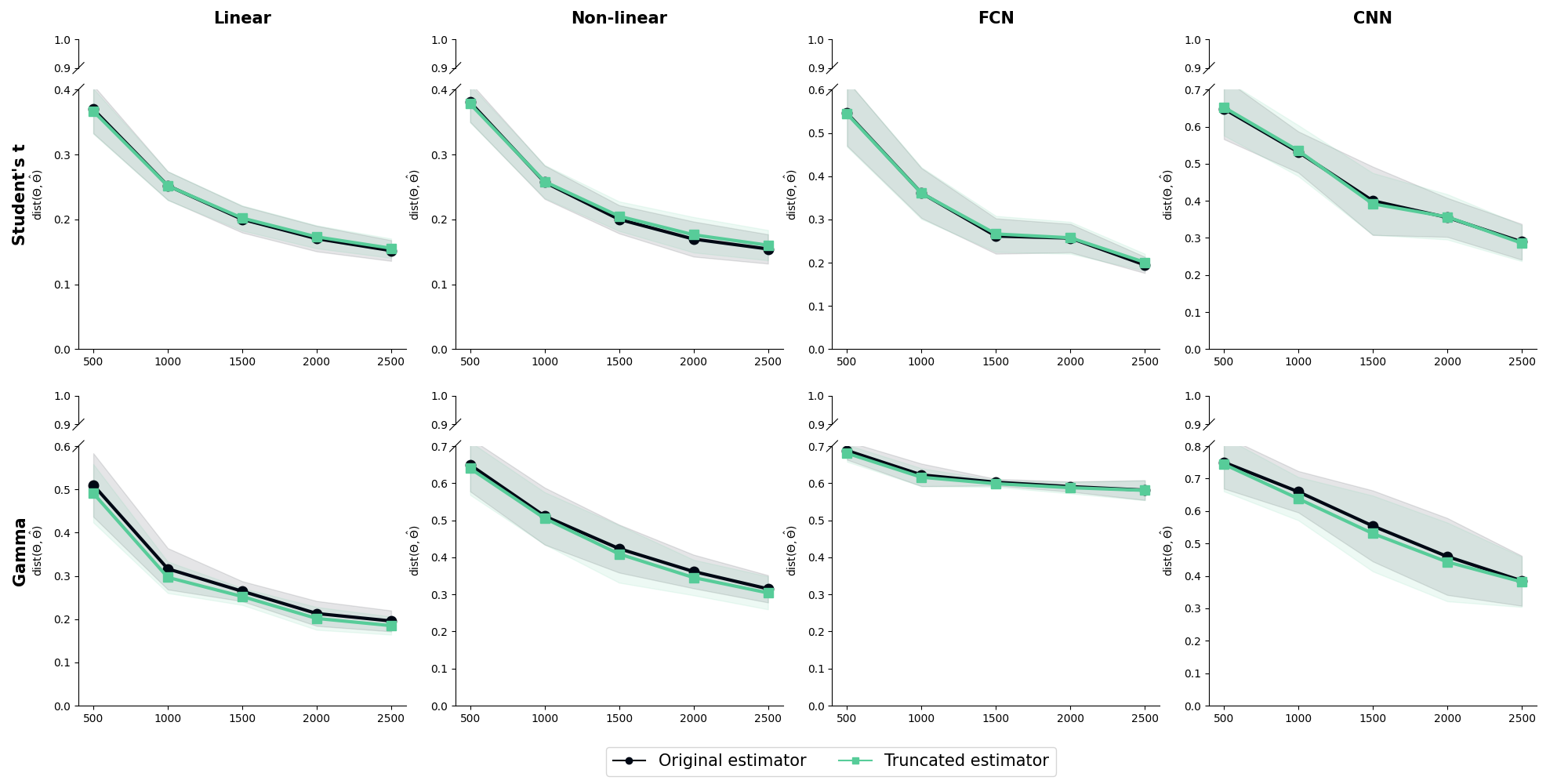}
	\caption{The original estimator vs the truncated estimator under the Student's $t$ and Gamma distribution with different link functions.}
	\label{fig:truncation}
\end{figure}

Figure \ref{fig:truncation} presents the estimation error of both the original and truncated estimators. It shows that the effect of truncation is relatively limited in the case of the Student's $t$ distribution, as the original and truncated estimators achieve similar estimation errors. In contrast, the truncated estimator generally achieves smaller estimation errors for the Gamma distribution case. This suggests the effect of truncation when handling heavy-tailed scores.

\subsection{Performance of plug-in estimator}\label{sec:exp:emp}
In this subsection, we consider the scenario where the input distribution is unknown. To illustrate the performance of the plug-in estimator, we focus on Gaussian distributions with unknown means and covariance matrices. Therefore, we revisit the case (4), where the input has correlated Gaussian entries, i.e., $\bX^{\text{ori}}\sim \mathcal{N}(\mu,\bSigma)$, with covariances $\bSigma_{i, j} = \rho^{|i-j|}$. Two values of $\rho$ are examined, namely $\rho=0.5$ and $\rho=0.8$, representing different levels of correlation. We evaluate the performance of the plug-in estimator introduced in Section \ref{sec:emp} by comparing it with the original estimator where the mean and covariance matrix are correctly specified. Figure \ref{fig:empirical} shows the estimation errors of both estimators for different $\rho$ values.

\begin{figure}[!htbp]
	\centering
	\includegraphics[width=\textwidth]{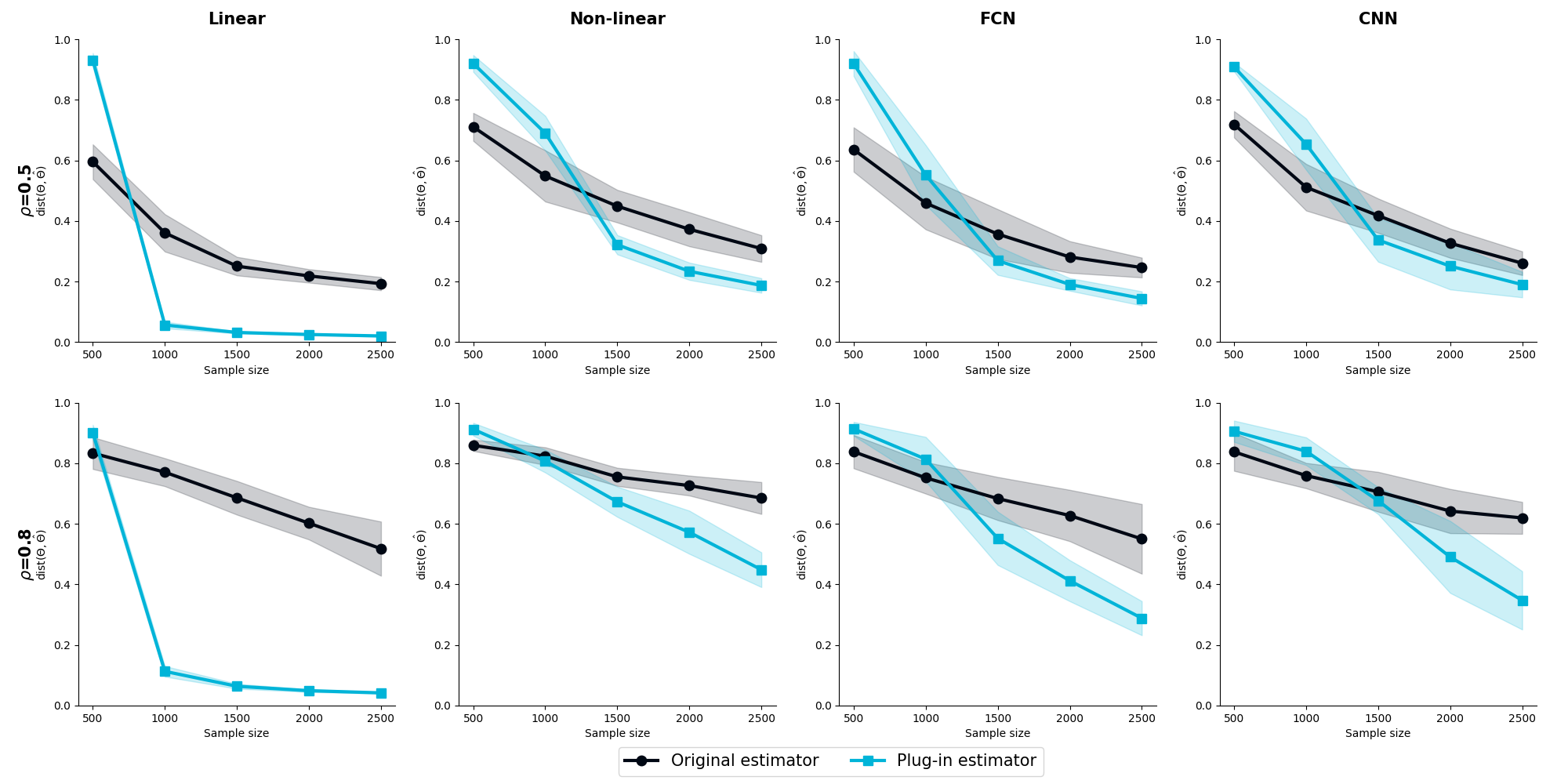}
	\caption{The original truth-based estimator vs the plug-in estimator under different covariance matrices. }
	\label{fig:empirical}
\end{figure}

By Figure \ref{fig:empirical}, we are surprised to find that the plug-in estimator performs no worse, in fact even better,  than the original estimator, particularly when the sample size $n$ increases. 
Indeed, as discussed in Section \ref{sec:emp}, in the Gaussian scenario, the plug-in estimator can be advantageous by offering reduced bias in linear models. This simulation further demonstrates that these advantages could extend beyond linear models, highlighting the adaptability of our approach when dealing with unknown parameters.

\section{Discussions}\label{sec:conc}
Motivated by the problem of learning convolutional filters in CNNs, we propose a statistical method to estimate filters in the first convolutional layer without training the full neural network, thereby providing an efficient training-free approach for learning convolutional representations. Promising future directions include extensions to overlapping convolutions and, more broadly, to deeper architectures.

Although our study focuses on non-overlapping convolutions, it is also possible to extend our approach to the overlapping case. Let $*$ denote a general convolution operator with row and column strides $s_1$ and $s_2$, respectively. Let $\bB_1,...,\bB_R\in\RR^{d_1 \times d_2}$ be the filters in the first layer of a CNN. Assume that $s_1,s_2$ are factors of $P_1-d_1, P_2-d_2$ respectively.
For any matrix $\bM\in\RR^{P_1 \times P_2}$ and a filter $\bB\in\RR^{d_1 \times d_2}$, the overlapping convolution is defined as 
\begin{align*}
	\bM\ast\bB \in \mathbb{R}^{\tilde{p}_1 \times \tilde{p}_2}, \ \ (\bM\ast\bB)_{j,k} = \left\langle\tilde{\bM}_{j, k}^{d_1,d_2,s_1,s_2}, \bB \right\rangle,
\end{align*}
where $\tilde{p}_1=(P_1-d_1)/s_1+1$, $\tilde{p}_2=(P_2-d_2)/s_2+1$, and $\tilde{\bM}_{j,k}^{d_1,d_2,s_1,s_2}$ denotes the $(j,k)$-th overlapping patch of size $d_1 \times d_2$ for $j\in[\tilde{p}_1]$ and $k\in[\tilde{p}_2]$. Similar to \eqref{Rmat}, define the reshaping operator $\tilde{\mathcal{R}}_{(d_1,d_2,s_1,s_2)}: \RR^{P_1 \times P_2} \rightarrow \RR^{(\tilde{p}_1\tilde{p}_2) \times (d_1d_2)}$ as
\begin{align*}
	\tilde{\mathcal{R}}_{(d_1,d_2,s_1,s_2)}(\bM) = \left[\text{vec}\left(\tilde{\bM}_{1,1}^{d_1,d_2,s_1,s_2}\right), \ldots, \text{vec}\left(\tilde{\bM}_{p_1,p_2}^{d_1,d_2,s_1,s_2}\right)\right]^\T.
\end{align*}
Denote $\tilde{\bX}=\tilde{\mathcal{R}}_{(d_1,d_2,s_1,s_2)}(\bX^{\text{ori}})$ and $\bb_r=\text{vec}(\bB_r)\in\RR^{d_1d_2}$, the output can be written as 
\begin{equation*}
	G(\bX^{\text{ori}}) = f(\bX^{\text{ori}}\ast\bB_1, \ldots, \bX^{\text{ori}}\ast\bB_R)=f(\tilde{\bX}\bb_1,...,\tilde{\bX}\bb_R).
\end{equation*}
Assuming the model $\EE(Y|\tilde{\bX}) = f(\tilde{\bX}\bb_1,...,\tilde{\bX}\bb_R)$, Stein's identity can still be applied to estimate the column space of $(\bb_1,\cdots, \bb_R)$, provided that the score function $S(\tilde{\bX})$ is carefully estimated, with particular attention to the dependence induced by overlapping patches.

As for further extension to deep architectures, the intuition is that features produced by earlier layers can be treated as inputs to subsequent layers, to which the proposed method can be applied again to estimate the corresponding filters. In this sense, the method can be extended to deeper architectures in a sequential manner. However, this also points to the main challenges. Accurately estimating the score function of intermediate features is nontrivial, as they have undergone nonlinear transformations. Moreover, estimation errors may accumulate across layers, making it difficult to control the overall error, especially in deep architectures. Addressing these challenges remains an important direction for future research. Nevertheless, extending the method to multiple convolutional layers is promising, as it may provide a viable statistical alternative to gradient-based optimization methods that have long been used in modern neural network training.

\bibliographystyle{apalike}
\bibliography{reference}


\clearpage

\specialsection*{Supplementary Material}

\setcounter{equation}{0}
\renewcommand{\theequation}{S\arabic{equation}}
\setcounter{figure}{0}
\renewcommand{\thefigure}{S\arabic{figure}}
\setcounter{table}{0}
\renewcommand{\thetable}{S\arabic{table}}
\setcounter{section}{0}
\renewcommand{\thesection}{S\arabic{section}}
\setcounter{theorem}{0}

In the supplementary material, we provide additional experimental details, as well as theoretical proofs of the theorems and lemmas presented in Sections~\ref{sec:model}, Section~\ref{sec:single_filter} and Section~\ref{sec:multiple_filter}.


\section{Experiment details}\label{app:exp}

\subsection{Additional results}
In Section \ref{sec:simu}, we also consider single filter scenario $Y_i = f\left(\bX_i^{\text{ori}} \star \bB\right) + \varepsilon_i$, which is the counterpart of the multiple filters scenario.
The results on single filter scenario including comparison to gradient descent, effect of truncation and performance of plug-in estimator, are shown in Figures \ref{fig:sim}-\ref{supp:fig:empirical}, respectively.
\begin{figure}[!htbp]
    \centering
    \includegraphics[width=\textwidth]{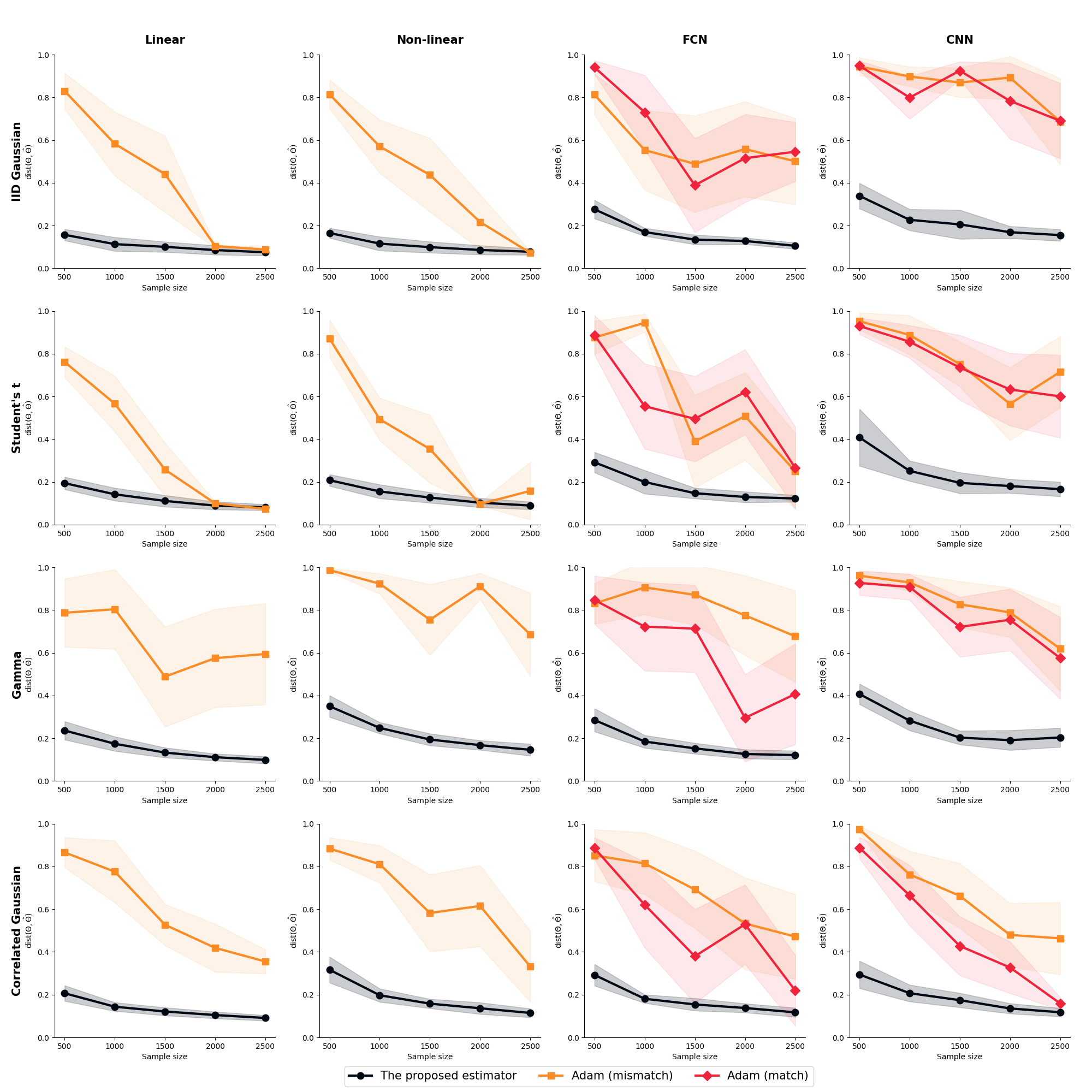}
    \caption{Comparison of single filter estimation performance under different link functions and input distributions.}
    \label{fig:sim}
\end{figure}

\begin{figure}[!htbp]
    \centering
    \includegraphics[width=\textwidth]{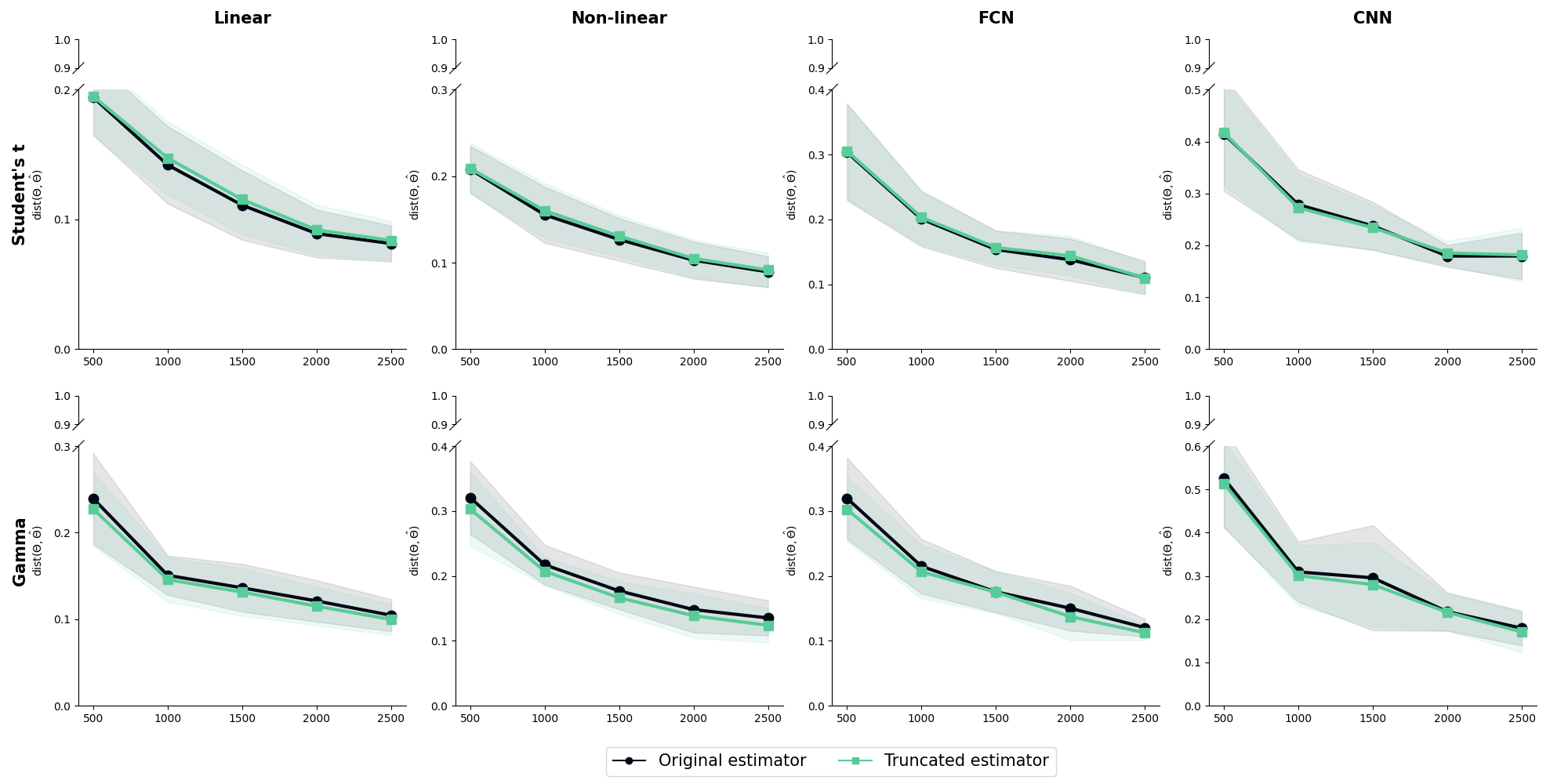}
    \caption{The original estimator vs the truncated estimator under the Student's $t$ and Gamma distribution with different link functions.}
    \label{supp:fig:truncation}
\end{figure}

\begin{figure}[!htbp]
    \centering
    \includegraphics[width=\textwidth]{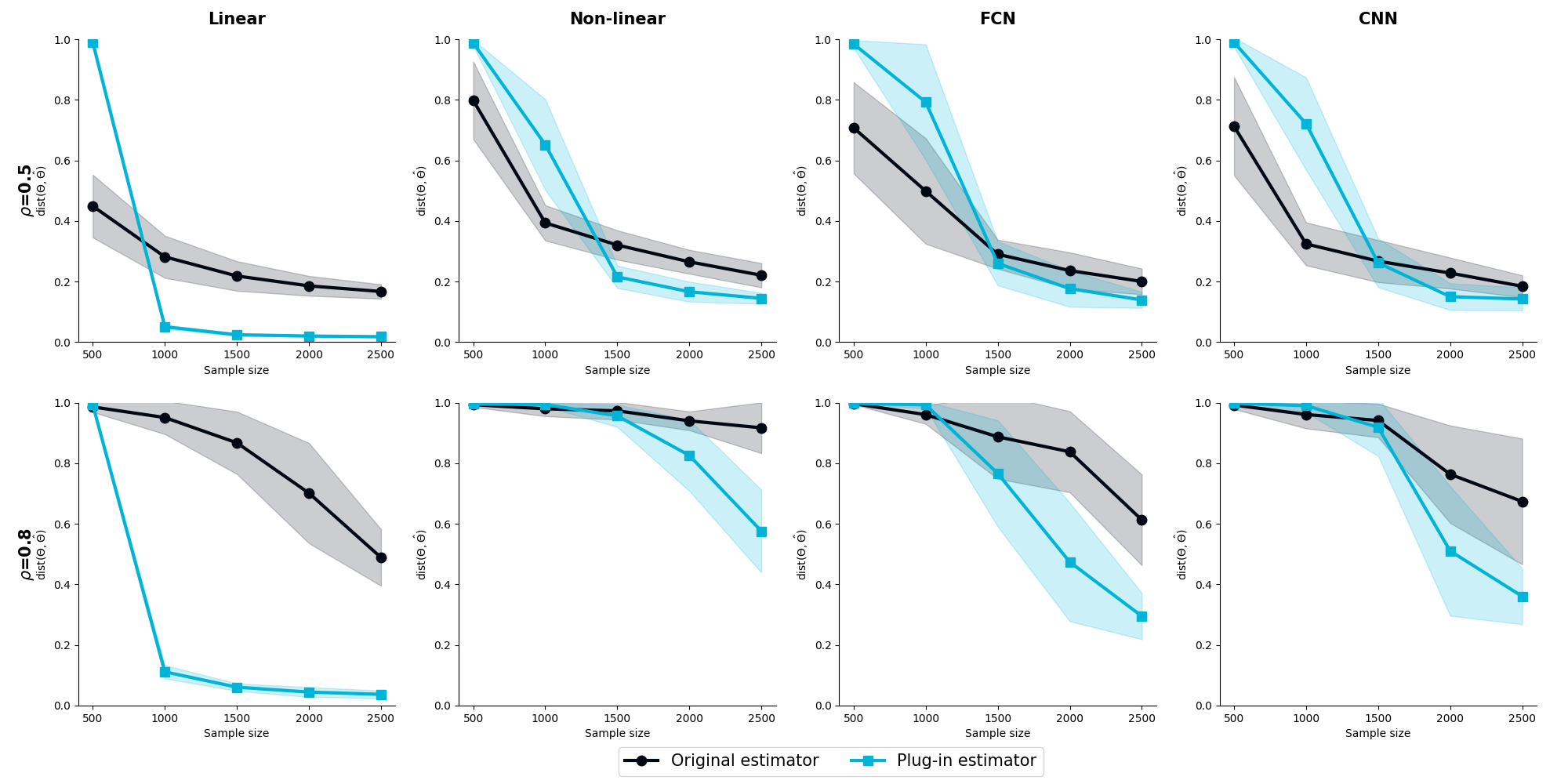}
    \caption{The original truth-based estimator vs the plug-in estimator under different covariance matrices. }
    \label{supp:fig:empirical}
\end{figure}

In addition, to compare computational efficiency, we report the time cost under the first setting of the single filter scenario in Table \ref{supp:tab:time} for illustration. The results clearly demonstrate the substantial computational advantage of the proposed method.
\begin{table}[!htbp]
\caption{Average computational time (in seconds) under the first case. The results for other cases are highly similar.}
\begin{center}
\begin{tabular}{cccccc}
\toprule
     & 500           & 1000          & 1500           & 2000           & 2500           \\ \midrule
Proposed method & 0.023 & 0.045 & 0.063 & 0.090 & 0.107  \\ \hline
Adam & 3.236 & 7.161 & 13.564 & 18.060 & 22.537 \\ \bottomrule
\end{tabular}
\end{center}
\label{supp:tab:time}
\end{table}

\subsection{Implementation}\label{app:simu}
We present the specific neural network architectures used in the simulations for both data generation and model fitting. Since the matching architectures used for fitting mirror those in the generation stage, we omit their repetition and display only the mismatched counterparts.

The FCN begins with a ReLU activation applied to the convolutional features, followed by two fully connected layers, where the first layer reduces the dimension of the flattened convolutional features by half. For an input of size $28 \times 28$, a single $4\times 4$ convolutional filter produces an output of size $7\times 7$, leading to $49$ nodes in the subsequent fully connected layer. Accordingly, the two fully connected layers have $49\to 24$ nodes. For the mismatched network, the only change is that the first fully connected layer preserves the original dimension instead of halving it, so the two fully connected layers have $49 \to 49$ nodes.

The CNN consists of a typical convolution layer, batch normalization layer, ReLU function and max-pooling layer, followed by two fully-connected layers of which the first one reduces the dimension by half. Similarly, the difference of mis-matching network is that the first fully-connected layer keeps the dimension rather than halving. The structures of networks used are shown in Figure \ref{fig:networks}.

\begin{figure}[!htbp]
    \centering
    \includegraphics[width=\linewidth]{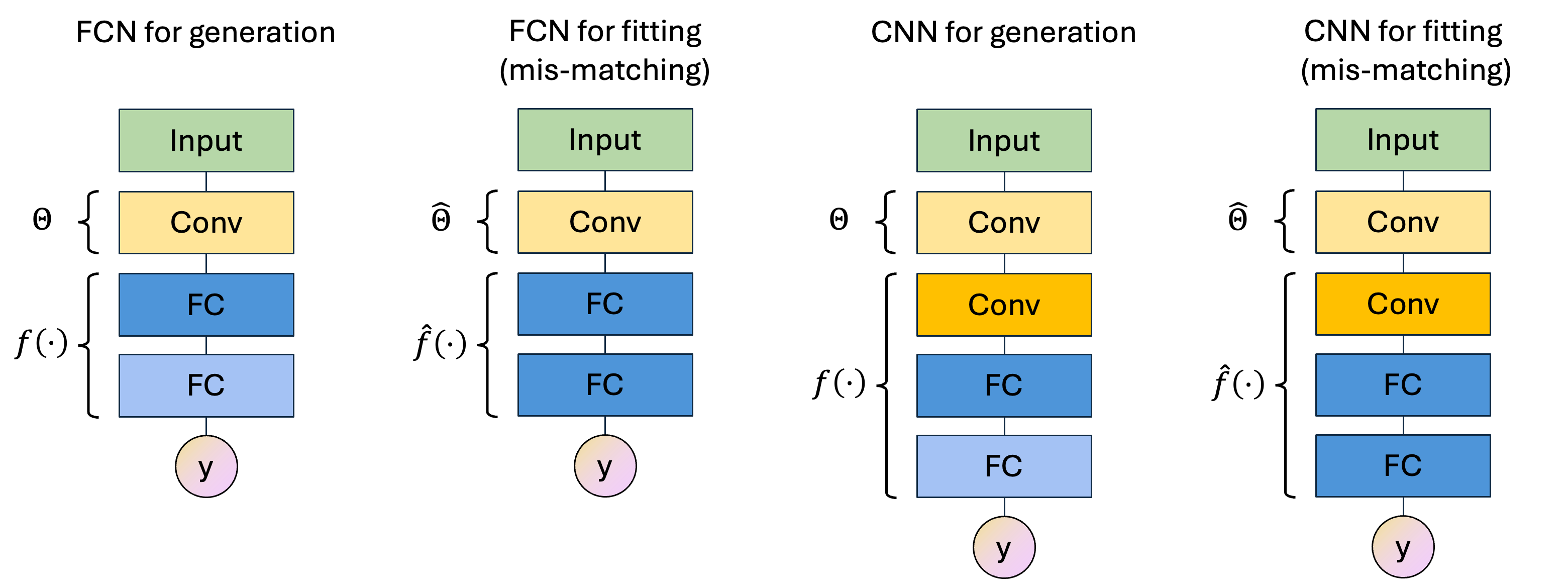}
    \caption{Structures of networks used.}
    \label{fig:networks}
\end{figure}

For real applications, we use the same models for generation as in the simulation study to produce simulated responses, and we examine both matched and mismatched architectures for FCN and CNN link functions. For the analysis of real responses, both the FCN and CNN models employ three fully connected layers with identical widths to enhance model capacity. The training procedure follows exactly the same setup as in the simulation study.
In addition, the $\ell_1$ regularization parameter in LASSO is tuned by 5-fold cross-validation over the range $[0.01, 0.1]$ with a step size of $0.01$.

All neural network models were implemented in PyTorch. We used the Adam optimizer with its default settings and a batch size of 128. Each model was trained for up to 500 epochs with early stopping based on the training loss. In all experiments, our methods were executed on Intel Xeon Gold 5218R CPUs, while the Adam optimizer-based models were trained on NVIDIA GeForce RTX 3090 GPUs.


\subsection{Signal strength detection}
In practice, Assumptions \ref{assumption:multiple1} and \ref{assumption:sub3} can be assessed by examining the empirical singular value spectrum of the matrix $\frac{1}{n}\sum_{i=1}^{n}Y_iS(\bX_i)$. Specifically, we can compute the singular values $\hat \sigma_1 \ge \hat \sigma_2 \ge \cdots$ and inspect their decay pattern. The assumptions are supported when the leading $R$ singular values are clearly separated from the remaining spectrum, i.e., there is a noticeable spectral gap between $\hat \sigma_R$ and $\hat \sigma_{R+1}$, indicating a stable low-rank signal structure. In contrast, a rapid decay without a clear gap suggests that the effective signal strength is weak, and the column space may not be reliably identifiable. As Assumption \ref{assumption:sub3} corresponds to the special case $R=1$, the same diagnosis reduces to checking whether the leading singular value is significantly larger than the remaining spectrum, consistent with a rank-one signal structure. We illustrate the diagnosis under the simulation settings in Section \ref{sec:exp:emp} and present in Figure \ref{fig:screeplots} the singular values of matrices $\hat \bA$ and its plug-in counterpart $\check \bA$. The results show that the plug-in version closely matches the spectrum of $\hat \bA$ and clear spectral gaps appear at the $R$-th singular value for both the single-filter case ($R=1$) and the multiple-filter case ($R=3$). Moreover, the presence of such spectral gaps can also serve as a practical tool for selecting $R$ when the number of filters is treated as unknown or not pre-specified.

\begin{figure}[!htbp]
    \centering
    \includegraphics[width=\linewidth]{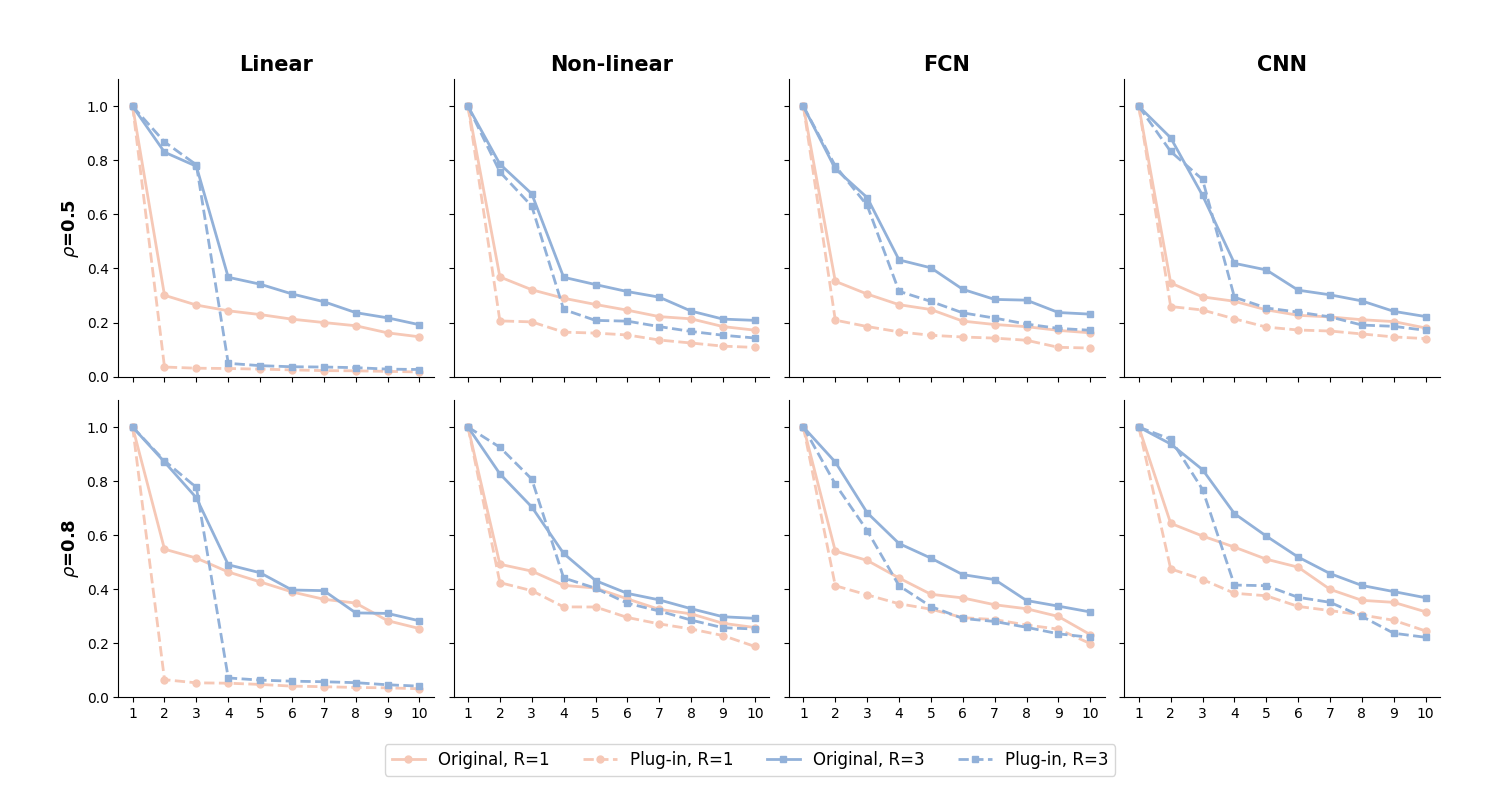}
    \caption{Singular values of $\hat \bA$ and $\check \bA$ under the simulation setting in Section \ref{sec:exp:emp} with sample size $n=2500$, including both cases with $R=1$ and $R=3$. The singular values are normalized by the largest singular value, and the first 10 values are displayed.}
    \label{fig:screeplots}
\end{figure}

\section{Proofs}\label{app:proof}

In this section, we provide the proofs of the theorems and lemmas in Section~\ref{sec:model}, Section~\ref{sec:single_filter} and Section~\ref{sec:multiple_filter}.

\subsection{Proof of Lemma~\ref{lemma:CNNexpectation}}
In order to prove Lemma~\ref{lemma:CNNexpectation}, we first present the following lemma as an extension of Non-Gaussian Stein's Lemma \citep{stein2004use}. This version applies Stein's Lemma to non-Gaussian random matrices.

\begin{lemma}\label{lemma:stein_lemma}
Let $g:\RR^{p\times d}\rightarrow\RR$ be a continuously differentiable function and $\bX\in\RR^{p\times d}$ be a random matrix with density $P$ which is also continuously differentiable. Under the assumption that $\EE[g(\bX) \cdot S_{ij}(\bX)]$ and $\EE[\nabla_{X_{ij}} g(\bX)]$ are well-defined, we have 
\begin{align*}
\EE[g(\bX) \cdot S_{ij}(\bX)]=\EE[\nabla_{X_{ij}} g(\bX)].
\end{align*}
\end{lemma}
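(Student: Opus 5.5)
The plan is to prove Lemma~\ref{lemma:stein_lemma} by integration by parts in the single coordinate $X_{ij}$, with all other entries held fixed, and then to integrate over those remaining entries. Write $\bX_{-ij}$ for the collection of entries other than $X_{ij}$. By Definition~\ref{def:score}, $S_{ij}(\bX)P(\bX) = -\partial_{X_{ij}}P(\bX)$ wherever $P>0$. Hence
\begin{align*}
\EE[g(\bX)S_{ij}(\bX)] = -\int_{\RR^{p\times d}} g(\bX)\,\partial_{X_{ij}}P(\bX)\,d\bX .
\end{align*}
Outside the support of $P$ the integrand vanishes. To handle the boundary, I will either assume the support is all of $\RR^{p\times d}$ (the natural reading of a continuously differentiable positive density) or note that $\nabla P=0$ a.e.\ on $\{P=0\}$.

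Next, I apply Fubini to write the integral as an iterated integral: an outer integral over $\bX_{-ij}\in\RR^{pd-1}$ and an inner one-dimensional integral over $x=X_{ij}\in\RR$. Justifying Fubini requires the joint integrability of $g\,\partial_{X_{ij}}P$, which is exactly the assumption that $\EE[g S_{ij}]$ is well-defined. For fixed $\bX_{-ij}$, one-dimensional integration by parts on $[-a,b]$ gives
\begin{align*}
-\int_{-a}^{b} g\,\partial_x P\,dx = -\bigl[gP\bigr]_{x=-a}^{x=b} + \int_{-a}^{b} \partial_x g\,P\,dx .
\end{align*}
Since $g$ and $P$ are $C^1$, this identity is valid on every compact interval.

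The remaining task is to let $a,b\to\infty$ and integrate over $\bX_{-ij}$, which yields $\int \partial_{X_{ij}}g\,P\,d\bX=\EE[\nabla_{X_{ij}}g(\bX)]$. This limiting step is the main obstacle, because we need the boundary term $g(\bX)P(\bX)\to 0$ as $|X_{ij}|\to\infty$, at least along a sequence $a_k,b_k\to\infty$. I would argue as follows. The function $x\mapsto g P$ has derivative $\partial_x g\,P + g\,\partial_x P$, which is integrable on $\RR$ for a.e.\ $\bX_{-ij}$ by the two integrability assumptions and Fubini. It follows that $gP$ has finite limits at $\pm\infty$. The function $gP$ is itself integrable (see the caveat below), so these limits must be $0$.

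Strictly speaking, this last point needs $\EE|g(\bX)|<\infty$. If that is unavailable, I will instead choose sequences $a_k,b_k$ along which $gP\to0$, which exist whenever $\liminf|gP|=0$, and note that the limit, once known to exist, must then be $0$. In the write-up, I would state mildly that the decay of $gP$ at infinity is part of what ``well-defined'' means, which is the standard convention in \citet{stein2004use}. Dominated convergence then passes the limits inside the outer integral, giving the claim.

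Finally, to recover Lemma~\ref{lemma:CNNexpectation}, I will apply this lemma with $g(\bX)=f(\bX\bTheta)$. The noise term is handled by independence: $\EE[\epsilon S(\bX)]=\EE\epsilon\,\EE S=0$. The chain rule $\partial_{X_{jk}}f(\bX\bTheta)=\sum_r \partial_{Z_{jr}}f\,\Theta_{kr}$ then gives $\EE[\nabla_{\bZ}f]\bTheta^\top$.
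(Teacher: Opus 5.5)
Your route is the same as the paper's: rewrite $\EE[g S_{ij}]$ as $-\int g\,\partial_{X_{ij}}P$, integrate by parts in $X_{ij}$, and discard the boundary term. The paper does the last step in one line, asserting $\int\nabla_{x_{ij}}[g(x)P(x)]\,dx=0$ without justification. Your Fubini-and-limits treatment is therefore already more careful than the published proof.

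The one soft spot is the one you flagged yourself. After showing that $gP$ has finite limits $L_\pm$ as $X_{ij}\to\pm\infty$, you get $L_\pm=0$ only by assuming $\EE|g(\bX)|<\infty$, or by building decay into the meaning of ``well-defined''. The sequence argument does not rescue this, since it presupposes $\liminf|gP|=0$.

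No extra assumption is needed. By Tonelli, for a.e.\ $\bX_{-ij}$ the sections $x\mapsto g\,\partial_xP$, $x\mapsto \partial_x g\,P$ and $x\mapsto P$ are all integrable on $\RR$. Fix such an $\bX_{-ij}$ and suppose $L_+\neq 0$. Then $|gP|\ge |L_+|/2$ for all $x\ge x_0$, so $P>0$ there and
\[
\Bigl|\frac{d}{dx}\log P\Bigr|=\frac{|g\,\partial_xP|}{|gP|}\le\frac{2}{|L_+|}\,|g\,\partial_xP| ,
\]
which is integrable on $[x_0,\infty)$. Hence $\log P(x,\bX_{-ij})$ has a finite limit, so $P$ tends to a positive constant. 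It then cannot be integrable on $[x_0,\infty)$, a contradiction. Thus $L_+=0$, and symmetrically $L_-=0$, under exactly the stated hypotheses (read as absolute integrability). The rest of your argument then goes through unchanged.
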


\begin{proof}[\textbf{Proof of Lemma~\ref{lemma:stein_lemma}}]
We know $S_{ij}(\bX)=-\nabla_{X_{ij}}P(\bX)/P(\bX)$. Thus, we have that
\begin{align*}
\EE[g(\bX) \cdot S_{ij}(\bX)] &= \int_{\RR^{p \times d}} g(x)S_{ij}(x)P(x)dx = -\int_{\RR^{p \times d}} g(x)\nabla_{x_{ij}}P(x)dx.
\end{align*}
By the integration by parts formula, we get
\begin{align*}
\int_{\RR^{p \times d}} g(x)\nabla_{x_{ij}}P(x)dx &= \int_{\RR^{p \times d}} \nabla_{x_{ij}}\big[g(x)P(x)\big]dx - \int_{\RR^{p \times d}} \nabla_{x_{ij}}g(x)P(x)dx\\
&= - \int_{\RR^{p \times d}} \nabla_{x_{ij}}g(x)P(x)dx\\
&= - \EE[\nabla_{X_{ij}} g(\bX)].
\end{align*}
Combining two equations, we can obtain that
\begin{align*}
\EE[g(\bX) \cdot S_{ij}(\bX)] = \EE[\nabla_{X_{ij}} g(\bX)].
\end{align*}
\end{proof}

Next, we are ready to prove Lemma~\ref{lemma:CNNexpectation}.

\begin{proof}[\textbf{Proof of Lemma~\ref{lemma:CNNexpectation}}]
    By Lemma~\ref{lemma:stein_lemma}, assuming that the link function $f: \bz \mapsto f(\bz)$ and the score function $S(\bX)$ satisfy the conditions stated in Lemma~\ref{lemma:stein_lemma}, we can get 
\begin{align*}
\EE[Y \cdot S_{ij}(\bX)] &= \EE[f(\bX\bTheta) \cdot S_{ij}(\bX)]=\EE\{\nabla_{X_{ij}} [f(\bX\bTheta)]\} = \EE\left[\frac{\partial f}{\partial \bz_{i\cdot}}(\bX\bTheta)\right] \cdot \bTheta_{j\cdot}^{\top},
\end{align*}
where $\bz_{i\cdot}$ is the $i$-th row of $\bz$ and $\bTheta_{j\cdot}$ is the $j$-th row of $\bTheta$. Let $S_{\cdot j}(\bX)$ be the $j$-th column of $S(\bX)$. We can get that for all $j\in[d]$,
\begin{align*}
\EE[Y \cdot S_{\cdot j}(\bX)] = \EE\left[\nabla_{\bz} f (\bX\bTheta)\right] \cdot \bTheta_{j\cdot}^{\top}.
\end{align*}
Then, combining all the vectors into a matrix, we obtain that
\begin{align*}
\EE[Y \cdot S(\bX)] = \EE\left[\nabla_{\bz} f (\bX\bTheta)\right] \cdot \bTheta^{\top}. 
\end{align*}
\end{proof}

\subsection{Proof of Theorem~\ref{thm:subbeta}}
We first show the convergence rate of $\frac{1}{n}\sum_{i=1}^{n}Y_iS(\bX_i)$ by the following lemma.

\begin{lemma}\label{thm:subA}
Suppose the model $\EE(Y_i)=f(\bX_i\bTheta)$ for $i\in[n]$. Under Assumption~\ref{assumption:sub1}, Assumption~\ref{assumption:sub2} and Assumption~\ref{assumption:sub3}, with probability at least $1-\delta$, we have 
\begin{align*}
\left\|\frac{1}{n}\sum_{i=1}^{n}Y_iS(\bX_i)-\EE[YS(\bX)]\right\|_{F} = \mathcal{O}\left(\sqrt{\frac{pd}{n}\cdot\log\frac{pd}{\delta}}\right).
\end{align*}
\end{lemma}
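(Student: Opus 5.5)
The plan is an entrywise concentration bound for the $pd$ entries of $\frac{1}{n}\sum_{i=1}^n Y_iS(\bX_i)-\EE[YS(\bX)]$, followed by a union bound and summation of squares to pass to the Frobenius norm. Fix $(j,k)\in[p]\times[d]$ and set $W_i^{jk}=Y_iS_{jk}(\bX_i)$. These are i.i.d. across $i$, and $\EE W_i^{jk}=\EE[YS_{jk}(\bX)]$ is exactly the $(j,k)$ entry of $\EE[YS(\bX)]$ appearing in Lemma~\ref{lemma:CNNexpectation}. It therefore suffices to show that each $W_i^{jk}$ is sub-exponential with an absolute $\psi_1$-norm constant. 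Bernstein's inequality for sub-exponential variables will then give, for each entry,
\begin{align*}
\PP\Bigl(\Bigl|\tfrac1n\textstyle\sum_i W_i^{jk}-\EE W^{jk}\Bigr|>t\Bigr)\le 2\exp\bigl(-c\,n\min(t^2/K^2,t/K)\bigr).
\end{align*}

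To get the sub-exponential property, write $Y_i=f(\bX_i\bTheta)+\epsilon_i$. By Assumption~\ref{assumption:sub2} and the sub-Gaussian assumption on $\epsilon$, Minkowski's inequality gives $(\EE|Y_i|^{2q})^{1/(2q)}\le (T+\sigma)\sqrt{2q}$, so $Y_i$ is sub-Gaussian. Assumption~\ref{assumption:sub1} gives $(\EE|S_{jk}(\bX_i)|^{2q})^{1/(2q)}\le L\sqrt{2q}$. Cauchy--Schwarz then yields
\begin{align*}
(\EE|Y_iS_{jk}(\bX_i)|^q)^{1/q}\le (\EE|Y_i|^{2q})^{1/(2q)}(\EE|S_{jk}|^{2q})^{1/(2q)}\le 2(T+\sigma)L\,q.
\end{align*}
This is the moment characterization of a sub-exponential variable with constant $K\asymp (T+\sigma)L$. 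The argument does not require independence between $Y$ and $S(\bX)$, which matters because $Y$ depends on $\bX$. Centering changes $K$ only by an absolute factor.

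Next choose $t=CK\sqrt{\log(2pd/\delta)/n}$. As long as $n\gtrsim\log(pd/\delta)$, this lies in the sub-Gaussian regime of Bernstein's inequality, so each entry deviates by more than $t$ with probability at most $\delta/(pd)$. A union bound over all $pd$ entries shows that with probability at least $1-\delta$ every entry error is at most $t$. Hence the Frobenius error is at most $\sqrt{pd}\,t=\mathcal{O}\bigl(\sqrt{(pd/n)\log(pd/\delta)}\bigr)$.

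The main obstacle is the regime of $n$. If $n$ is small relative to $\log(pd/\delta)$, the linear term $t/K$ in Bernstein's inequality dominates and produces a $\log(pd/\delta)/n$ contribution. I would handle this by assuming $n\ge\log(pd/\delta)$, which is implicit in the $\mathcal{O}$ statement, since otherwise the bound is vacuous for the downstream use in Theorem~\ref{thm:subbeta}. Assumption~\ref{assumption:sub3} is not needed for this lemma. It only enters later, through Wedin's $\sin\Theta$ theorem, where it lower-bounds the singular value $\|\EE\nabla f\|_2\gtrsim\sqrt p$.
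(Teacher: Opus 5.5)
Your proposal is correct and follows the paper's proof essentially step for step. Both arguments use entrywise sub-exponential moment bounds via Minkowski and Cauchy--Schwarz, then Bernstein's inequality for each of the $pd$ entries, a union bound, and finally $\|\cdot\|_F\le\sqrt{pd}\,\max_{j,k}|\cdot|$. The differences are minor: you bound $Y$ as sub-Gaussian before applying Cauchy--Schwarz, so independence of $\epsilon$ and $\bX$ is not needed, whereas the paper splits the product into $fs$ and $\varepsilon s$ terms. You also state explicitly the requirement $n\gtrsim\log(pd/\delta)$ that keeps Bernstein's inequality in its quadratic regime, which the paper's choice of $t$ assumes without saying so.
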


\begin{proof}[\textbf{Proof of Lemma~\ref{thm:subA}}]
Since $S_{ij}(\bX)$ is a sub-Gaussian random variable, we can get that for any $i\in[p],j\in[d]$, there exists a positive constant $L$ such that $(\EE|S_{ij}(\bX)|^k)^\frac{1}{k}\leq L\sqrt{k}$ for $k\geq 1$. We first prove that $YS_{ij}(\bX)-\EE[YS_{ij}(\bX)]$ is a sub-exponential random variable. To simplify the notation, in the following proof, we denote $f=f(\bX\bTheta)$ and $s=S_{ij}(\bX)$. Then, for $k\geq 1$, we have that
\begin{align*}
\big[\EE|Ys-\EE(Ys)|^k\big]^\frac{1}{k} &\overset{(i)}{=} \big[\EE|(f+\varepsilon)s-\EE(fs)|^k\big]^\frac{1}{k}\\
&\overset{(ii)}{\leq} (\EE|fs|^k)^\frac{1}{k}+(\EE|\varepsilon s|^k)^\frac{1}{k}+|\EE(fs)|\\
&\overset{(iii)}{\leq} (\EE f^{2k} \cdot \EE s^{2k})^\frac{1}{2k}+(\EE |\varepsilon|^{k} \cdot \EE |s|^{k})^\frac{1}{k}+(\EE f^2 \cdot \EE s^2)^\frac{1}{2}\\
&\overset{(iv)}{\leq} T\sqrt{2k} \cdot L\sqrt{2k} + \sigma_\varepsilon\sqrt{k} \cdot L\sqrt{k} + T\sqrt{2} \cdot L\sqrt{2}\\
&\overset{(v)}{\leq} 4TLk + \sigma_\varepsilon Lk\\
&= (4T+\sigma_\varepsilon)Lk,
\end{align*}
where $(i)$ is by $\EE(\varepsilon s)=\EE\varepsilon\cdot\EE s=0$, $(ii)$ is by Minkowski inequality, $(iii)$ is by Cauchy–Schwarz inequality and the independence of $\varepsilon$ and $\bX$, $(iv)$ is by Assumption~\ref{assumption:sub1}, Assumption~\ref{assumption:sub2} and the assumption of $\epsilon$, $(v)$ is by $k\geq 1$. Thus, $YS_{ij}(\bX)-\EE[YS_{ij}(\bX)]$ is a sub-exponential random variable and we can get 
\begin{align*}
\|YS_{ij}(\bX)-\EE[YS_{ij}(\bX)]\|_{\psi_1} &= \sup_{k\geq 1}\frac{1}{k}\bigg(\EE|YS_{ij}(\bX)-\EE[YS_{ij}(\bX)]|^k\bigg)^\frac{1}{k}\\
&\leq 4T+\sigma_\varepsilon.
\end{align*}
Then, by Bernstein's inequality, for any $t>0$, we have 
\begin{align*}
\PP\left(\left|\frac{1}{n}\sum_{k=1}^n Y_kS_{ij}(\bX_k)-\EE[YS_{ij}(\bX)]\right|\geq t\right) \leq 2\cdot\mathrm{exp}\left[-cn\min\left(\frac{t^2}{K^2},\frac{t}{K}\right)\right],
\end{align*}
where $K\leq 4T+\sigma_\varepsilon$ and $c$ is a positive constant. Therefore, we can obtain that
\begin{align*}
&\PP\left(\left\|\frac{1}{n}\sum_{k=1}^nY_kS(\bX_k)-\EE(YS(\bX))\right\|_F \geq \sqrt{pd}\cdot t\right)\\
\leq& \sum_{i=1}^{p}\sum_{j=1}^{d} \PP\left(\left|\frac{1}{n}\sum_{k=1}^n Y_kS_{ij}(\bX_k)-\EE[YS_{ij}(\bX)]\right|\geq t\right)\\
\leq& 2pd\cdot\mathrm{exp}\left[-cn\min\left(\frac{t^2}{K^2},\frac{t}{K}\right)\right].
\end{align*}
Letting $t=\frac{4T+\sigma_\varepsilon}{\sqrt{cn}}\cdot\sqrt{\log\frac{2pd}{\delta}}$, we get
\begin{align*}
\PP\left(\left\|\frac{1}{n}\sum_{k=1}^nY_kS(\bX_k)-\EE(YS(\bX))\right\|_F\geq \sqrt{\frac{pd}{cn}} (4T+\sigma_\varepsilon)\sqrt{\log\frac{2pd}{\delta}}\right) \leq \delta.
\end{align*}
Therefore, we can conclude that with probability at least $1-\delta$, 
\begin{align*}
\left\|\frac{1}{n}\sum_{k=1}^nY_kS(\bX_k)-\EE(YS(\bX))\right\|_F = \mathcal{O}\left(\sqrt{\frac{pd}{n}\cdot\log\frac{pd}{\delta}}\right).
\end{align*}
\end{proof}

By Lemma~\ref{thm:subA}, we utilize a modified Davis-Kahan theorem proposed in \citet{o2018random} to prove Theorem~\ref{thm:subbeta}.

\begin{proof}[\textbf{Proof of Theorem~\ref{thm:subbeta}}]
To simplify the notation, we denote $A=\EE[YS(\bX)]$, $A_i=Y_iS(\bX_i)$, and $\hat{A}=\frac{1}{n}\sum_{i=1}^{n}A_i$. Then, by Theorem 4 in \citet{o2018random}, we can get
\begin{align*}
\sin\Theta\big(\hat{\bb},\bb\big) \leq 2 \cdot \frac{\|\hat{A}-A\|_{op}}{\|A\|_{op}}.
\end{align*}
Since $\|\hat{A}-A\|_{op}\leq\|\hat{A}-A\|_{F}$ and $\|A\|_{op}\geq C_1\sqrt{p}$, we can obtain that with probability $1-\delta$,
\begin{align*}
\sin\Theta\big(\hat{\bb},\bb\big) \leq \frac{2}{C_1} (4T+\sigma_\varepsilon)\sqrt{\frac{d}{cn}\log\frac{2pd}{\delta}}.
\end{align*}
We can assume that $\cos\Theta\big(\hat{\bb},\bb\big) = \langle \hat{\bb},\bb \rangle \geq 0$, otherwise we use $-\hat{\bb}$ to replace $\hat{\bb}$. Then,
\begin{align*}
\|\hat{\bb}-\bb\|_2 &= \sqrt{\|\hat{\bb}\|_2^2+\|\bb\|_2^2-2\langle\hat{\bb},\bb\rangle}\\
&= \sqrt{2-2\cos\Theta\big(\hat{\bb},\bb\big)}\\
&\leq \sqrt{2} \sin\Theta\big(\hat{\bb},\bb\big)\\
&\leq \frac{2\sqrt{2}}{C_1} (4T+\sigma_\varepsilon)\sqrt{\frac{d}{cn}\log\frac{2pd}{\delta}},
\end{align*}
Thus, we have $\|\hat{\bb}-\bb\|_2=\mathcal{O}(\sqrt{d/n\cdot\log(pd/\delta)})$.
\end{proof}

\subsection{Proof of Theorem~\ref{thm:truncbeta}}

We first get the following lemma to show the convergence rate of $\frac{1}{n}\sum_{i=1}^{n}\tau(Y_iS(\bX_i))$.

\begin{lemma}\label{thm:truncA}
Suppose the model $\EE(Y_i)=f(\bX_i\bTheta)$ for $i\in[n]$. Under Assumption~\ref{assumption:trunc1} and Assumption~\ref{assumption:sub3}, with probability at least $1-\delta$, letting 
\begin{align*}
\theta=\sqrt{\frac{2\log(2(p+d)/\delta)}{nMpd}},
\end{align*} we have
\begin{align*}
\left\|\frac{1}{n}\sum_{i=1}^{n}\tau(Y_iS(\bX_i))-\EE[YS(\bX)]\right\|_{op} = \mathcal{O}\left(\sqrt{\frac{pd}{n}\cdot\log\frac{p+d}{\delta}}\right).
\end{align*}
\end{lemma}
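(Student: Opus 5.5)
This is the Minsker-type robust matrix mean argument (\citet{minsker2018sub}), applied to the rectangular matrices $\bA_i:=Y_iS(\bX_i)$ through their Hermitian dilations. Write $\bA_i^*$ for the $(p+d)\times(p+d)$ dilation of $\bA_i$. The truncated sum $\frac1n\sum_i\tau(\bA_i)$ is, by the definition of $\psi$, the upper-right block of $\frac{1}{n\theta}\sum_i\phi(\theta\bA_i^*)$. The dilation map is isometric in operator norm, and the operator norm of an off-diagonal block is at most that of the whole matrix. So it suffices to bound
\begin{align*}
\Big\|\frac{1}{n\theta}\sum_{i=1}^n\phi(\theta\bA_i^*)-\EE[\bA^*]\Big\|_{op}.
\end{align*}

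\textbf{Step 1 (exponential moment sandwich).} Because $\phi$ is non-decreasing and satisfies $-\log(1-x+x^2/2)\le\phi(x)\le\log(1+x+x^2/2)$, the spectral calculus gives the semidefinite bounds
\begin{align*}
-\log(\bI-\theta\bA_i^*+\theta^2(\bA_i^*)^2/2)\preceq\phi(\theta\bA_i^*)\preceq\log(\bI+\theta\bA_i^*+\theta^2(\bA_i^*)^2/2).
\end{align*}
These hold because each side is a function of the same Hermitian matrix $\bA_i^*$.

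\textbf{Step 2 (Lieb/matrix Chernoff).} Using trace monotonicity of $\exp$ and Lieb's concavity theorem iteratively over the i.i.d. summands, we get
\begin{align*}
\EE\,\mathrm{tr}\exp\Big(\sum_i\phi(\theta\bA_i^*)-n\theta\EE\bA^*\Big)\le\mathrm{tr}\exp\Big(\sum_i\log\EE\big[\bI+\theta\bA_i^*+\tfrac{\theta^2}{2}(\bA_i^*)^2\big]-n\theta\EE\bA^*\Big).
\end{align*}
Applying $\log(\bI+\bM)\preceq\bM$ shows this is at most $\mathrm{tr}\exp(\tfrac{n\theta^2}{2}\EE(\bA^*)^2)\le(p+d)\exp(\tfrac{n\theta^2}{2}v)$, where $v:=\|\EE(\bA^*)^2\|_{op}$. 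Markov's inequality then gives
\begin{align*}
\lambda_{\max}\Big(\tfrac{1}{n\theta}\sum_i\phi(\theta\bA_i^*)-\EE\bA^*\Big)\le\frac{\theta v}{2}+\frac{\log((p+d)/\delta')}{n\theta}
\end{align*}
with probability $1-\delta'$. The symmetric argument, using the lower bound on $\phi$, controls $\lambda_{\min}$. Taking a union bound with $\delta'=\delta/2$ yields the $\log(2(p+d)/\delta)$ term.

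\textbf{Step 3 (variance proxy).} The dilation satisfies $(\bA^*)^2=\mathrm{diag}(\bA\bA^\top,\bA^\top\bA)$, so $v\le\max(\|\EE\bA\bA^\top\|_{op},\|\EE\bA^\top\bA\|_{op})\le\EE\|\bA\|_F^2=\sum_{j,k}\EE[Y^2S_{jk}^2]$. By Cauchy--Schwarz and Assumption~\ref{assumption:trunc1}, each term is at most $\sqrt{\EE Y^4\,\EE S_{jk}^4}\le M$, hence $v\le Mpd$. Plugging in $\theta=\sqrt{2\log(2(p+d)/\delta)/(nMpd)}$ balances the two terms and gives the bound
\begin{align*}
\sqrt{2Mpd\log(2(p+d)/\delta)/n},
\end{align*}
which is the claimed rate.

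The main obstacle is Step 2: carrying out the Lieb-based iteration carefully for the non-commuting sum, and confirming that the upper-right block of $\phi$ applied to the dilation is exactly $\psi$, so that the block-norm reduction is valid. The variance bound is routine, since it only needs $\|\cdot\|_{op}\le\|\cdot\|_F$. Assumption~\ref{assumption:sub3} plays no role here; it is needed only afterwards, in the Davis--Kahan step.
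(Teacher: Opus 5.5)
Your proposal is correct and follows essentially the same route as the paper: you bound the uncentered second-moment proxy of the Hermitian dilation by $Mpd$ and apply Minsker's dilation-based robust matrix mean inequality with the stated $\theta$, which gives exactly $\sqrt{2Mpd\log(2(p+d)/\delta)/n}$. The differences are minor. The paper cites Corollary 3.1 of \citet{minsker2018sub} rather than re-deriving it through Lieb's theorem. It also bounds $u^\top\EE[YS(\bX)(YS(\bX))^\top]u$ with a fourth-moment Cauchy--Schwarz argument on $S_{\cdot j}(\bX)^\top u$, whereas you use the simpler trace bound $\EE\|YS(\bX)\|_F^2\le Mpd$.
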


\begin{proof}[\textbf{Proof of Lemma~\ref{thm:truncA}}]
For any $u\in \RR^{p}$ such that $\|u\|_2=1$, we have that
\begin{align*}
u^{\top}\EE[YS(\bX)(YS(\bX))^{\top}]u &= \EE[Y^2 \cdot u^{\top}S(\bX)S(\bX)^{\top}u]\\
&= \sum_{j=1}^{d} \EE[Y^2 \cdot (S_{\cdot,j}(\bX)^{\top}u)^2]\\
&\leq \sum_{j=1}^{d} \sqrt{\EE(Y^4) \cdot \EE[(S_{\cdot,j}(\bX)^{\top}u)^4]},
\end{align*}
where the last inequality is by Cauchy–Schwarz inequality. And, for any $j\in[d]$,
\begin{align*}
\EE[(S_{\cdot,j}(\bX)^{\top}u)^4] &= \EE\left[\left(\sum_{i=1}^{p} S_{i,j}(\bX)u_i\right)^4\right]\\
&\overset{(i)}{\leq} \EE\left[\left(\sum_{i=1}^{p}S_{i,j}^2(\bX)\right)^2\left(\sum_{i=1}^{p}u_i^2\right)^2\right]\\
&\overset{(ii)}{\leq} \EE\left[p\sum_{i=1}^{p}S_{i,j}^4(\bX)\right]\\
&\overset{(iii)}{\leq} Mp^2,
\end{align*}
where $(i)$ is by Cauchy-Schwarz inequality, $(ii)$ is by Cauchy-Schwarz inequality and $\|u\|_2=1$, $(iii)$ is by $\EE S_{ij}^4(\bX)\leq M$.

Thus, we obtain that $u^{\top}\EE[YS(\bX)(YS(\bX))^{\top}]u \leq Mpd$, which implies that $\|\EE[YS(\bX)(YS(\bX))^{\top}]\|_{op} \leq Mpd$. Similarly, we can get $\|\EE[(YS(\bX))^{\top} YS(\bX)]\|_{op} \leq Mpd$. Therefore, by Corollary 3.1 in \cite{minsker2018sub}, we can get that
\begin{align*}
\PP\left(\left\|\frac{1}{\theta n}\sum_{i=1}^{n}\psi[\theta Y_iS(\bX_i)]-\EE[YS(\bX)]\right\|_{op} \geq \frac{t}{\sqrt{n}}\right) \leq 2(p+d)\mathrm{exp}\left[-\theta t\sqrt{n}+\frac{\theta^2nMpd}{2}\right]
\end{align*}
for any $t>0$ and $\theta>0$.
Setting $\theta=\sqrt{2\log(2(p+d)/\delta)}/\sqrt{nMpd}$ and $t=\sqrt{2Mpd\log(2(p+d)/\delta)}$, we have that
\begin{align*}
\PP\left(\left\|\frac{1}{n}\sum_{i=1}^{n}\tau(Y_iS(\bX_i))-\EE[YS(\bX)]\right\|_{op} \geq \sqrt{\frac{2Mpd}{n}\log\frac{2(p+d)}{\delta}}\right) \leq \delta,
\end{align*}
which also means that with probability at least $1-\delta$,
\begin{align*}
\left\|\frac{1}{n}\sum_{i=1}^{n}\tau(Y_iS(\bX_i))-\EE[YS(\bX)]\right\|_{op} = \mathcal{O}\left(\sqrt{\frac{pd}{n}\cdot\log\frac{p+d}{\delta}}\right).
\end{align*}
\end{proof}

Then, similar to the proof of Theorem~\ref{thm:subbeta}, we can also prove Theorem~\ref{thm:truncbeta} by directly using the modified Davis-Kahan theorem.

\begin{proof}[\textbf{Proof of Theorem~\ref{thm:truncbeta}}]
The proof is the same as that of Theorem~\ref{thm:subbeta}. We can obtain that
\begin{align*}
\|\Tilde{\bb}-\bb\|_2 &\leq \sqrt{2}\sin\Theta(\Tilde{\bb}-\bb)\\
&\leq 2\sqrt{2} \cdot \frac{\|\frac{1}{n}\sum_{i=1}^{n}\tau(Y_iS(\bX_i))-\EE[YS(\bX)]\|_{op}}{\|\EE[YS(\bX)]\|_{op}}\\
&= \mathcal{O}\left(\sqrt{\frac{d}{n}\cdot \log\frac{p+d}{\delta}}\right).
\end{align*}
\end{proof}

\subsection{Proof of Theorem~\ref{thm:multiple_sub}}

Based on Lemma~\ref{thm:subA} and a modified Davis–Kahan–Wedin sine theorem proposed in \citet{o2018random}, we can prove Theorem~\ref{thm:multiple_sub} as follows.
\begin{proof}[\textbf{Proof of Theorem~\ref{thm:multiple_sub}}]
To simplify the notation, we denote $A=\EE[YS(\bX)]$, $A_i=Y_iS(\bX_i)$, and $\hat{A}=\frac{1}{n}\sum_{i=1}^{n}A_i$. According to Lemma~\ref{thm:subA}, we have that with probability at least $1-\delta$, 
\begin{align*}
\|\hat{A}-A\|_F = \mathcal{O}\left(\sqrt{\frac{pd}{n}\cdot\log\frac{pd}{\delta}}\right).
\end{align*}
Then, by Theorem 19 in \citet{o2018random}, we can obtain that
\begin{align*}
\|\hat{\bTheta}\hat{\bO}-\bTheta\|_{F} \leq \frac{2^{3/2}\|\hat{A}-A\|_F}{\sigma_k} = \mathcal{O}\left(\sqrt{\frac{pd}{n}\cdot\log\frac{pd}{\delta}}\right).
\end{align*}
\end{proof}

\subsection{Proof of Theorem~\ref{thm:multiple_trunc}}

Based on Lemma~\ref{thm:truncA} and a modified Davis–Kahan–Wedin sine theorem proposed in \citet{o2018random}, we can prove Theorem~\ref{thm:multiple_trunc} as follows.
\begin{proof}[\textbf{Proof of Theorem~\ref{thm:multiple_trunc}}]
To simplify the notation, we denote $A=\EE[YS(\bX)]$ and $\Tilde{A}=\frac{1}{n}\sum_{i=1}^{n}\tau(Y_iS(\bX_i))$. According to Lemma~\ref{thm:truncA}, we can get that with probability at least $1-\delta$, 
\begin{align*}
\|\Tilde{A}-A\|_F = \mathcal{O}\left(\sqrt{\frac{pd}{n}\cdot\log\frac{p+d}{\delta}}\right).
\end{align*}
Then, by Theorem 19 in \citet{o2018random}, we can obtain that
\begin{align*}
\|\Tilde{\bTheta}\Tilde{\bO}-\bTheta\|_{F} \leq \frac{2^{3/2}\|\Tilde{A}-A\|_F}{\sigma_k} = \mathcal{O}\left(\sqrt{\frac{pd}{n}\cdot\log\frac{p+d}{\delta}}\right).
\end{align*}
\end{proof}

\subsection{Proof of Theorem~\ref{thm:plug-in_beta}}
\begin{lemma}\label{lemma:covariance} (\cite{vershynin2020high}, Covariance estimation).
	Let $\bX$ be an $n\times p$ matrix whose rows $\bx_i$ are independent, mean zero, sub-gaussian random
	vectors in $\mathbb{R}^p$ with the same covariance matrix $\bSigma$. Denote $\hat{\bSigma} = (1/n)\bX^\T\bX$ as the sample covariance matrix. Then for any $t\ge 0$ we have
	\begin{align}
		\left\|\hat{\bSigma} - \bSigma\right\|_{\text{op}} \le \max(\delta, \delta^2), \quad \text{where } \delta=CLt\sqrt{\frac{p}{n}}
	\end{align}
	with probability at least $1 - 2\exp(-t^2p)$. Here $L=\max(K, K^2)$ and $K = \max_i\|\bx_i\|_{\psi_2}$.
\end{lemma}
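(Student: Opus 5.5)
The plan is the standard $\varepsilon$-net argument combined with Bernstein's inequality for sub-exponential variables. Since $\hat{\bSigma}-\bSigma$ is symmetric, I will use
\[
\|\hat{\bSigma}-\bSigma\|_{\text{op}}=\sup_{\|u\|_2=1}\bigl|u^\T(\hat{\bSigma}-\bSigma)u\bigr| ,
\]
and then bound the quadratic form first for a fixed direction and then uniformly over a net.

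\textbf{Step 1 (net reduction).} Fix a $1/4$-net $\mathcal{N}$ of the unit sphere $S^{p-1}$ with $|\mathcal{N}|\le 9^p$. For a symmetric matrix $\bA$, the usual approximation argument gives $\|\bA\|_{\text{op}}\le 2\max_{u\in\mathcal{N}}|u^\T\bA u|$. It therefore suffices to show that, with the stated probability, $|u^\T(\hat{\bSigma}-\bSigma)u|\le \tfrac12\max(\delta,\delta^2)$ holds simultaneously for all $u\in\mathcal{N}$.

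\textbf{Step 2 (fixed direction).} For fixed $u\in\mathcal{N}$ we have
\[
u^\T(\hat{\bSigma}-\bSigma)u=\frac1n\sum_{i=1}^n\Bigl(\langle \bx_i,u\rangle^2-\EE\langle\bx_i,u\rangle^2\Bigr).
\]
Each $\langle\bx_i,u\rangle$ is sub-Gaussian with $\psi_2$-norm at most $K$. Hence $\langle\bx_i,u\rangle^2$ is sub-exponential with $\psi_1$-norm at most $K^2$, and centering costs only an absolute constant. Bernstein's inequality then gives, for $\varepsilon>0$,
\[
\PP\bigl(|u^\T(\hat{\bSigma}-\bSigma)u|\ge \varepsilon/2\bigr)\le 2\exp\Bigl[-cn\min\Bigl(\tfrac{\varepsilon^2}{K^4},\tfrac{\varepsilon}{K^2}\Bigr)\Bigr].
\]

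\textbf{Step 3 (choice of $\varepsilon$ and union bound).} Set $\varepsilon=\max(\delta,\delta^2)$ with $\delta=CLt\sqrt{p/n}$ and $L=\max(K,K^2)$. Since $L\ge K^2$, a short case check on whether $\delta\le1$ gives
\[
\min\bigl(\varepsilon^2/K^4,\ \varepsilon/K^2\bigr)\ge \min\bigl(\varepsilon^2/L^2,\ \varepsilon/L\bigr)\ge C^2t^2p/n .
\]
This case check (separating $\delta\le 1$ from $\delta>1$) is the only delicate point, and it is exactly why the bound is stated with $\max(\delta,\delta^2)$ and with $L$ rather than $K^2$. Substituting, each net point fails with probability at most $2\exp(-cC^2t^2p)$. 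A union bound over $9^p$ points gives a total failure probability of at most $2\cdot 9^p\exp(-cC^2t^2p)$.

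\textbf{Step 4 (absorbing the net cardinality).} For $C$ large enough this is at most $2\exp(-t^2p)$ whenever $t\ge1$. For $t<1$ the claim can be arranged by enlarging $C$, or it is vacuous in the regime where the probability bound exceeds one; tracking this edge case is the main bookkeeping obstacle I expect. Combining with Step 1 yields the lemma.
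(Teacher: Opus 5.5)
\textbf{Verdict: there is a genuine gap in Step 4, and it cannot be closed, because the lemma is false in the regime it has to cover.} Your net-plus-Bernstein route is the standard argument behind the cited result (the paper itself gives no proof, only a citation). For $t\ge 1$ it works: taking $cC^2\ge 1+\log 9$ gives $2\cdot 9^p e^{-cC^2t^2p}\le 2e^{-t^2p}$.

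\textbf{The $t<1$ case.} Neither escape you propose works. The requirement $(cC^2-1)t^2\ge\log 9$ forces $C\gtrsim 1/t$, so enlarging $C$ only handles $t\ge t_0$ with $C=C(t_0)$, not an absolute constant. The range $\sqrt{\log 2/p}<t<1$ is also not vacuous. In fact no argument can close this case. Here is a counterexample:
\begin{itemize}
\item Take $\bx_i\sim\mathcal N(\mathbf 0,\bI_p)$, so $K$ and $L$ are absolute constants, and set $t=\sqrt{\log 4/p}$.
\item The lemma would then assert $\|\hat\bSigma-\bI\|_{\text{op}}\le\max(\delta,\delta^2)$ with $\delta=CL\sqrt{\log 4}/\sqrt n$, with probability at least $1/2$.
\item But $\|\hat\bSigma-\bI\|_{\text{op}}\ge\|(\hat\bSigma-\bI)\mathbf e_1\|_2\ge\bigl(\sum_{k\ge 2}\hat\bSigma_{1k}^2\bigr)^{1/2}$.
\item Conditionally on the first coordinates of the $\bx_i$, this sum equals $n^{-2}\bigl(\sum_i (\bx_i)_1^2\bigr)\chi^2_{p-1}$. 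That is close to $(p-1)/n$ with high probability for large $n,p$.
\item Hence the bound fails once $n\ge p\gg C^2L^2$.
\end{itemize}
The $9^p$ net factor you struggled to absorb is exactly what produces this unavoidable floor $\delta\gtrsim\sqrt{p/n}$. What your argument actually proves is the corrected statement with $\delta=CL\max(1,t)\sqrt{p/n}$. This is equivalent to Vershynin's form, $\delta\asymp\sqrt{p/n}+t/\sqrt n$ at level $1-2e^{-t^2}$.

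\textbf{Consequence for the paper.} The proof of Theorem~\ref{thm:plug-in_beta} invokes this lemma with $t=\sqrt{\log(2/\alpha)/(pd)}<1$, which is precisely the false regime. So the conclusion drawn there, $\|\hat\bSigma-\bSigma\|_2=\mathcal{O}_p(n^{-1/2})$, is not supported. The corrected lemma gives only $\mathcal{O}_p(\sqrt{pd/n})$.

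\textbf{A smaller slip in Step 3.} Write $\delta_0:=Ct\sqrt{p/n}$. Your middle inequality $\min(\varepsilon^2/L^2,\varepsilon/L)\ge C^2t^2p/n$ is false when $K<1$ (so $L=K$) and $1<\delta_0<1/K$. For instance, $K=0.1$ and $\delta_0=2$ give $\min(4,2)=2<4$: routing through $L$ discards too much. The fix is to compare with $K^2$ directly. Since $L\ge K^2$ and $L^2\ge K^2$, you have $\varepsilon=\max(L\delta_0,L^2\delta_0^2)\ge K^2\max(\delta_0,\delta_0^2)=:K^2\eta$. Then $\min(\varepsilon^2/K^4,\varepsilon/K^2)\ge\min(\eta^2,\eta)=\delta_0^2$, which is all Step 3 needs.
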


\begin{lemma}\label{lemma:general_spectrum} (\cite{vershynin2020high})
	Let $\bX$ be an $n\times p$ matrix whose rows $\bx_i$ are independent isotropic random vectors in $\mathbb{R}^p$. Assume that for some $L\ge 0$, it holds almost surely for every $i$ that $\|\bx_i\|_2 \le L\sqrt{p}$. Then for every $t\ge 0$, one has
    \begin{align*}
        \sqrt{n} - tL\sqrt{p} \le \sigma_{\min}(\bX) \le \sigma_{\max}(\bX) \le \sqrt{n} + tL\sqrt{p}
    \end{align*}
    with probability at least $1 - 2p\cdot\exp(-ct^2)$.
\end{lemma}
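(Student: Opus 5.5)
The plan is to reduce the two-sided singular value bound to an operator-norm bound on $\frac{1}{n}\bX^\T\bX-\bI_p$ and control that with the matrix Bernstein inequality. The reduction rests on an elementary fact. For any $n\times p$ matrix $\bA$ and $\varepsilon\ge0$, if $\|\frac1n\bA^\T\bA-\bI_p\|_{\text{op}}\le\max(\varepsilon,\varepsilon^2)$, then every singular value $\sigma$ of $\bA$ satisfies $|\sigma/\sqrt n-1|\le\varepsilon$. To see this, note that for a unit vector $\bz$ we have $\big|\|\bA\bz\|_2^2/n-1\big|\le\max(\varepsilon,\varepsilon^2)$. Writing $a=\|\bA\bz\|_2/\sqrt n$, suppose $|a-1|>\varepsilon$. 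Then $|a^2-1|=|a-1|(a+1)\ge\max(|a-1|,|a-1|^2)>\max(\varepsilon,\varepsilon^2)$, a contradiction. Applying this with $\varepsilon=tL\sqrt{p/n}$ and multiplying through by $\sqrt n$ gives exactly $\sqrt n-tL\sqrt p\le\sigma_{\min}(\bX)\le\sigma_{\max}(\bX)\le\sqrt n+tL\sqrt p$.

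For the concentration step, write $\bX^\T\bX-n\bI_p=\sum_{i=1}^n\bZ_i$ with $\bZ_i=\bx_i\bx_i^\T-\bI_p$. These are independent, symmetric and mean zero by isotropy. Isotropy also gives $\EE\|\bx_i\|_2^2=p$, so the almost-sure bound forces $L\ge1$. Hence $\|\bZ_i\|_{\text{op}}\le\max(\|\bx_i\|_2^2,1)\le L^2p=:K$. For the variance,
\[
\EE\bZ_i^2=\EE\big[\|\bx_i\|_2^2\,\bx_i\bx_i^\T\big]-\bI_p\preceq L^2p\,\EE[\bx_i\bx_i^\T]=L^2p\,\bI_p,
\]
so $\|\sum_i\EE\bZ_i^2\|_{\text{op}}\le nL^2p=:\nu^2$. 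Matrix Bernstein then yields, for every $s\ge0$,
\[
\PP\Big(\Big\|\sum_i\bZ_i\Big\|_{\text{op}}\ge s\Big)\le 2p\exp\Big[-c\min\Big(\frac{s^2}{\nu^2},\frac{s}{K}\Big)\Big].
\]

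Now take $s=n\max(\varepsilon,\varepsilon^2)$ with $\varepsilon=tL\sqrt{p/n}$. A direct computation gives
\[
\frac{s^2}{\nu^2}=t^2\max(1,\varepsilon^2)\ge t^2,
\qquad
\frac{s}{K}=\frac{n\max(\varepsilon,\varepsilon^2)}{L^2p}=t^2\max(1/\varepsilon,1)\ge t^2.
\]
So the probability that $\|\frac1n\bX^\T\bX-\bI_p\|_{\text{op}}\ge\max(\varepsilon,\varepsilon^2)$ is at most $2p\exp(-ct^2)$. On the complementary event, the deterministic reduction of the first paragraph gives the claim.

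I expect the main point of care to be the choice of the threshold $\max(\varepsilon,\varepsilon^2)$ rather than just $\varepsilon$. This choice is what makes both branches of the Bernstein exponent at least $t^2$ simultaneously, in the regime $\varepsilon<1$ and in the regime $\varepsilon\ge1$. It is also what makes the deterministic $\sigma\leftrightarrow\sigma^2$ conversion work uniformly. The remaining steps (the bound $L\ge1$ and the semidefinite variance bound) are routine.
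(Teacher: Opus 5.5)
Your proof is correct. The paper gives no argument of its own and simply cites Vershynin, and your proof is the standard one from that source: matrix Bernstein applied to $\sum_i(\bx_i\bx_i^\T-\bI_p)$, with range $L^2p$ and variance $nL^2p$ (using $L\ge 1$ from isotropy), evaluated at the threshold $n\max(\varepsilon,\varepsilon^2)$ with $\varepsilon=tL\sqrt{p/n}$, followed by the approximate-isometry step that turns $\|\frac1n\bX^\T\bX-\bI_p\|_{\text{op}}\le\max(\varepsilon,\varepsilon^2)$ into $|\sigma/\sqrt n-1|\le\varepsilon$.
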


\begin{lemma}\label{lemma:inverse} (\cite{xu2020perturbation}, Perturbation of inverse)
	For two matrices $\bA, \bB\in \mathbb{R}^{p\times p}$, we have
	\begin{align}
		\left\|\bB^{-1} - \bA^{-1}\right\|_2 \le\|\bA\|_2\|\bB\|_2\left\|\bB - \bA\right\|_2
	\end{align}
\end{lemma}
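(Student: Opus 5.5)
The plan is to use the standard resolvent-type identity for the difference of two inverses and then apply submultiplicativity of the spectral norm. Throughout, $\bA$ and $\bB$ are taken to be invertible, since otherwise the left-hand side is undefined. The norms on the right-hand side will be read as the norms of the inverses, $\|\bA^{-1}\|_2\|\bB^{-1}\|_2$. This is the form in which the perturbation bound of \cite{xu2020perturbation} holds, and the form in which it will be used for $\hat{\bSigma}^{-1}-\bSigma^{-1}$ in the proof of Theorem~\ref{thm:plug-in_beta}.

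The literal reading, with $\|\bA\|_2\|\bB\|_2$, fails. For example, take $\bA=\varepsilon\bI$ and $\bB=2\varepsilon\bI$. Then the left-hand side is $1/(2\varepsilon)$, while the literal right-hand side is $2\varepsilon^3$, which is far smaller for small $\varepsilon$.

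\textbf{Step 1 (algebraic identity).} I will verify directly that
\begin{align*}
\bB^{-1}-\bA^{-1}=\bB^{-1}(\bA-\bB)\bA^{-1}.
\end{align*}
To check it, expand the right-hand side: $\bB^{-1}\bA\bA^{-1}-\bB^{-1}\bB\bA^{-1}=\bB^{-1}-\bA^{-1}$.

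\textbf{Step 2 (norm bound).} Take the operator $2$-norm of both sides and use submultiplicativity, $\|\bM\bN\|_2\le\|\bM\|_2\|\bN\|_2$. Together with $\|\bA-\bB\|_2=\|\bB-\bA\|_2$, this gives
\begin{align*}
\left\|\bB^{-1}-\bA^{-1}\right\|_2\le\|\bB^{-1}\|_2\,\|\bB-\bA\|_2\,\|\bA^{-1}\|_2,
\end{align*}
which is the claimed inequality under the reading above.

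There is no real obstacle here; the only delicate point is the interpretation of the statement. For the downstream application with $\bA=\bSigma$ and $\bB=\hat{\bSigma}$, one has $\|\bSigma^{-1}\|_2=1/\lambda_{\min}(\bSigma)$. The factor $\|\hat{\bSigma}^{-1}\|_2$ is controlled on the high-probability event of Lemma~\ref{lemma:covariance}, on which $\lambda_{\min}(\hat{\bSigma})\ge\lambda_{\min}(\bSigma)/2$ once $n\gtrsim p$. This yields
\begin{align*}
\|\hat{\bSigma}^{-1}-\bSigma^{-1}\|_2\lesssim\|\hat{\bSigma}-\bSigma\|_2/\lambda_{\min}(\bSigma)^2.
\end{align*}
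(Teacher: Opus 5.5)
Your proof is correct, and it is the standard argument: the identity $\bB^{-1}-\bA^{-1}=\bB^{-1}(\bA-\bB)\bA^{-1}$ followed by submultiplicativity. The paper gives no proof of its own here, only a citation to \cite{xu2020perturbation}, so there is no competing argument to compare against.

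You are also right that the statement as written is false. Under $(\bA,\bB)\mapsto(c\bA,c\bB)$ the left side scales like $c^{-1}$ and the right side like $c^{3}$, so the inequality must fail as $c\to 0^+$. The paper itself uses exactly your version in the proof of Theorem~\ref{thm:plug-in_beta}, namely $\|\bSigma^{-1}-\hat{\bSigma}^{-1}\|_2\le\|\bSigma^{-1}\|_2\|\hat{\bSigma}^{-1}\|_2\|\hat{\bSigma}-\bSigma\|_2$. The factor $\|\bA\|_2\|\bB\|_2$ in the statement is therefore a typo for $\|\bA^{-1}\|_2\|\bB^{-1}\|_2$.

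Your side remark on bounding $\|\hat{\bSigma}^{-1}\|_2$ genuinely differs from the paper. The paper writes this quantity as $1/\sigma_{\min}^2(\bX/\sqrt{n})$ and invokes Lemma~\ref{lemma:general_spectrum}. That lemma assumes isotropic rows with $\|\bx_i\|_2\le L\sqrt{p}$ almost surely, and $\mathcal{N}(\boldsymbol{0},\bSigma)$ rows satisfy neither condition. Your Weyl-type bound $\lambda_{\min}(\hat{\bSigma})\ge\lambda_{\min}(\bSigma)-\|\hat{\bSigma}-\bSigma\|_2$, applied on the event of Lemma~\ref{lemma:covariance}, avoids this and is the cleaner route. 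One caveat: in that application the ambient dimension is $pd$, so the requirement is $n\gtrsim pd$, with constants depending on $\lambda_{\min}(\bSigma)$ and $\lambda_{\max}(\bSigma)$.
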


\begin{lemma}\label{lemma:norm}
	Let $\bx \in \mathbb{R}^p$ be a sub-gaussian random vector with parameter $\sigma$, then with probability at least $1 - t$ for $t \in (0, 1)$
	\begin{align}
		\|\bx\|_2 \le 4\sigma\sqrt{p} + 2\sigma \sqrt{\log(1/t)}
	\end{align}
\end{lemma}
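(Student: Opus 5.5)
The plan is the standard $\varepsilon$-net argument on the unit sphere. I use the convention that $\bx$ is sub-gaussian with parameter $\sigma$ if, for every unit vector $\bm u$ (with $\bm u\in\RR^p$, $\|\bm u\|_2=1$), the scalar $\langle \bm u,\bx\rangle$ satisfies the moment generating function bound $\EE\exp(\lambda\langle \bm u,\bx\rangle)\le \exp(\lambda^2\sigma^2/2)$. This implies the one-sided tail bound $\PP(\langle \bm u,\bx\rangle> s)\le \exp(-s^2/(2\sigma^2))$. The idea is to reduce $\|\bx\|_2=\sup_{\|\bm u\|_2=1}\langle \bm u,\bx\rangle$ to a maximum over finitely many directions, then apply a union bound.

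\textbf{Step 1 (net reduction).} Fix $\varepsilon=1/4$ and let $\mathcal N$ be an $\varepsilon$-net of the unit sphere. By the volumetric argument it can be chosen with $|\mathcal N|\le(1+2/\varepsilon)^p=9^p$. Let $\bm u^*=\bx/\|\bx\|_2$ and pick $\bm v\in\mathcal N$ with $\|\bm u^*-\bm v\|_2\le \varepsilon$. Then
$$\|\bx\|_2=\langle \bm v,\bx\rangle+\langle \bm u^*-\bm v,\bx\rangle\le \max_{\bm v\in\mathcal N}\langle \bm v,\bx\rangle+\varepsilon\|\bx\|_2,$$
so $\|\bx\|_2\le \tfrac{4}{3}\max_{\bm v\in\mathcal N}\langle \bm v,\bx\rangle$.

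\textbf{Step 2 (union bound).} Combining the tail bound with the net size,
$$\PP\Big(\max_{\bm v\in\mathcal N}\langle \bm v,\bx\rangle> s\Big)\le 9^p\exp\!\big(-s^2/(2\sigma^2)\big).$$
Choose $s=\sigma\sqrt{2p\log 9+2\log(1/t)}$, which makes the right-hand side exactly $t$. Using $\sqrt{a+b}\le\sqrt a+\sqrt b$, we get $s\le \sigma\sqrt{2\log 9}\,\sqrt p+\sqrt2\,\sigma\sqrt{\log(1/t)}$.

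\textbf{Step 3 (constants).} With probability at least $1-t$,
$$\|\bx\|_2\le \tfrac43 s\le \tfrac43\sqrt{2\log 9}\,\sigma\sqrt p+\tfrac{4\sqrt2}{3}\sigma\sqrt{\log(1/t)}.$$
Numerically $\tfrac43\sqrt{2\log9}\approx 2.80\le 4$ and $\tfrac{4\sqrt2}{3}\approx1.89\le2$, which gives the claimed $4\sigma\sqrt p+2\sigma\sqrt{\log(1/t)}$.

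The only delicate point is tuning these constants. A coarser $1/2$-net gives a factor $2$ in Step 1, which makes the $\sqrt{\log(1/t)}$ coefficient $2\sqrt2>2$. The finer net ($\varepsilon=1/4$) is what brings both coefficients under the stated $4$ and $2$. If the paper's sub-gaussian convention instead uses the $\psi_2$-norm, the same argument goes through with adjusted absolute constants.
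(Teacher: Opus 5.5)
Your proof is correct. The paper gives no proof of this lemma. It is quoted, without a reference, as an auxiliary fact inside the proof of Theorem~\ref{thm:plug-in_beta}, and is used there only at the $\mathcal{O}_p$ level, to get $\|\bx_i\|_2=\mathcal{O}_p(\sqrt{pd})$ and $\|\boldsymbol{\varepsilon}\|_2=\mathcal{O}_p(\sqrt{n})$. Your $\varepsilon$-net argument is the standard route.

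Your choice of net is the one substantive point. The commonly quoted version of this bound (e.g.\ in Rigollet and H\"utter's lecture notes on high-dimensional statistics) uses a $1/2$-net of cardinality $6^p$ and ends with $4\sigma\sqrt{p}+2\sigma\sqrt{2\log(1/t)}$. So the coefficient $2$ on $\sqrt{\log(1/t)}$, as written in the paper, does not follow from that argument. Your $1/4$-net (factor $4/3$, cardinality $9^p$) gives coefficients of about $2.80$ and $1.89$, and therefore establishes the lemma exactly as stated.

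Your caveat on conventions is also apt. The explicit constants need the variance-proxy (MGF) definition. That definition holds for $\bx_i\sim\mathcal{N}(\boldsymbol{0},\bSigma)$ with $\sigma^2=\lambda_{\max}(\bSigma)$. For the noise, the paper's moment-based definition only yields the bound up to absolute constants, which is all the plug-in theorem requires.
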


\begin{proof}[\textbf{Proof of Theorem~\ref{thm:plug-in_beta}}]
According to Taylor's theorem, $f(\bX_i\bTheta)$ can be expanded at the point $\ba = \boldsymbol{0}_p$ as follows with a function $h:\mathbb{R}^{p\times R} \rightarrow \mathbb{R}$
\begin{align*}
    f(\bX_i\bTheta) &= \langle\nabla f(\boldsymbol{0}_{p\times R}), \bX_i\bTheta\rangle + h(\bX_i\bTheta)\|\bX_i\bTheta\| \\
    &= \bx_i^\T\sum_{r=1}^R\left(\bb_r \otimes \nabla f(\boldsymbol{0}_{p\times R})_{[:, r]}\right) + h(\bX_i\bTheta)\|\bX_i\bTheta\|,
\end{align*}
with $\lim_{\bX_i\bTheta\rightarrow \boldsymbol{0}} h(\bX_i\bTheta)=0$,
where $\nabla f(\boldsymbol{0}_{p\times R})_{[:, r]}$ is $r$-th column of $\nabla f(\boldsymbol{0}_{p\times R})$.
Let $\bY = (Y_i, \ldots, Y_n)^\T$, $\bX = (\bx_1, \ldots, \bx_n)^\T$ and $\bR = \left(h(\bX_1\bTheta)\|\bX_1\bTheta\|, \ldots, h(\bX_n\bTheta)\|\bX_n\bTheta\|\right)$, then we can write the model in a matrix form below
\begin{align}
    \bY = \bX\sum_{r=1}^R\left(\bb_r \otimes \nabla f(\boldsymbol{0}_{p\times R})_{[:, r]}\right) + \bR + \boldsymbol{\varepsilon}.
\end{align}
Let $\bS = (S(\bx_1), \ldots, S(\bx_n))^\T$ be the full score matrix, it holds for gaussian distribution with zero mean and covariance $\bSigma$ that $\bS = \bX\bSigma^{-1}$.
Therefore, we have the truth-based estimator by algebra
\begin{align}\label{eq:truth-based}
    \text{vec}\left[\frac{1}{n}\sum_{i=1}^n Y_i S(\bX_i)\right]
    = \bSigma^{-1}\hat{\bSigma} \sum_{r=1}^R\left(\bb_r \otimes \nabla f(\boldsymbol{0}_{p\times R})_{[:, r]}\right) + \frac{1}{n}\bSigma^{-1}\bX^\T(\bR + \boldsymbol{\varepsilon}).
\end{align}
If we replace $\bSigma$ by $\hat{\bSigma}$, we can directly get the plug-in estimator below
\begin{align}\label{eq:plug-in}
    \text{vec}\left[\frac{1}{n}\sum_{i=1}^n Y_i \hat{S}(\bX_i)\right] 
    =\sum_{r=1}^R\left(\bb_r \otimes \nabla f(\boldsymbol{0}_{p\times R})_{[:, r]}\right)+ \frac{1}{n}\hat{\bSigma}^{-1}\bX^\T(\bR + \boldsymbol{\varepsilon}).
\end{align}
According to (\ref{eq:truth-based}) and (\ref{eq:plug-in}), we first have
\begin{align*}
	&\left\|\hat{\bA} - \check{\bA}\right\|_F\\
	=& \left\|\text{vec}\left[\frac{1}{n}\sum_{i=1}^n Y_i S(\bX_i)\right] - \text{vec}\left[\frac{1}{n}\sum_{i=1}^n Y_i \hat{S}(\bX_i)\right]\right\|_2 \\
	=&\left\|\left(\bSigma^{-1}\hat{\bSigma} - \bI\right) \sum_{r=1}^R\left(\bb_r \otimes \nabla f(\boldsymbol{0}_{p\times R})_{[:, r]}\right) - \frac{1}{n}\left(\bSigma^{-1} - \hat{\bSigma}^{-1}\right)\bX^\T(\bR + \boldsymbol{\varepsilon})\right\|_2 \\
	\le&\underbrace{\left\|\bSigma^{-1}\hat{\bSigma} - \bI\right\|_2 \left\|\sum_{r=1}^R\left(\bb_r \otimes \nabla f(\boldsymbol{0}_{p\times R})_{[:, r]}\right)\right\|_2}_{A1} + \underbrace{\frac{1}{n}\left\|\bSigma^{-1} - \hat{\bSigma}^{-1}\right\|_2\left\|\bX^\T(\bR + \boldsymbol{\varepsilon})\right\|_2}_{A2}.
\end{align*}
To bound the target error, we shall bound the A1 and A2 terms respectively.
For the first term A1, it holds that
\begin{align*}
	\left\|\bSigma^{-1}\hat{\bSigma} - \bI\right\|_2 = \left\|\bSigma^{-1}\left(\hat{\bSigma} - \bSigma\right)\right\|_2 \le \frac{1}{\lambda_m}\left\|\hat{\bSigma} - \bSigma\right\|_2.
\end{align*}
According to Lemma \ref{lemma:covariance},
\begin{align*}
	\left\|\hat{\bSigma} - \bSigma\right\|_2 \le e, \ \text{where } e=\text{max}(\delta, \delta^2) \text{ and }\delta=CLt\sqrt{\frac{pd}{n}}
\end{align*}
with probability at least $1 - 2\exp(-t^2pd)$. Here $L=\text{max}(K, K^2)$ and $K=M\sqrt{\lambda_M}$. Let $t = \sqrt{\frac{\log(2/\alpha)}{pd}}$, we have
\begin{align*}
	\mathbb{P}\left(\left\|\hat{\bSigma} - \bSigma\right\|_2 \ge CL\sqrt{\frac{\log(2/\alpha)}{n}}\right) \le \alpha,
\end{align*}
which means that
\begin{align}\label{eq:order_1}
	\left\|\hat{\bSigma} - \bSigma\right\|_2 = \mathcal{O}_p\left(\sqrt{1/n}\right).
\end{align}
Besides, note that
\begin{align}\label{eq:order_2}
    \left\|\sum_{r=1}^R\left(\bb_r \otimes \nabla f(\boldsymbol{0}_{p\times R})_{[:, r]}\right)\right\|_2 &= \left\|\text{vec}\left(\nabla f(\boldsymbol{0}_{p\times R})\bTheta^\T\right)\right\|_2 \nonumber \\
    &= \left\|\nabla f(\boldsymbol{0}_{p\times R})\bTheta^\T\right\|_F \nonumber \\
    & = O(\sqrt{p}).
\end{align}
Therefore, we get the order of A1 is $\mathcal{O}_p\left(\sqrt{p/n}\right)$.
For the second term A2, according to Lemma \ref{lemma:inverse}, we first have
\begin{align*}
	\left\|\bSigma^{-1} - \hat{\bSigma}^{-1}\right\|_2 \le \left\|\bSigma^{-1}\right\|_2\left\|\hat{\bSigma}^{-1}\right\|_2\left\|\hat{\bSigma} - \bSigma\right\|_2,
\end{align*}
where for $\left\|\hat{\bSigma}^{-1}\right\|_2$ we have $\left\|\hat{\bSigma}^{-1}\right\|_2 = \lambda_{\max}\left(\hat{\bSigma}^{-1}\right) = 1/\lambda_{\min}\left(\hat{\bSigma}\right) = 1/\sigma^2_{\min}(\bX/\sqrt{n})$. According to Lemma \ref{lemma:general_spectrum}, one has
\begin{align}
	\sigma_{\min}(\bX/\sqrt{n}) \ge 1 - tL\sqrt{pd/n}
\end{align}
with probability $1 - 2pd\exp(-ct^2)$. Let $t = \sqrt{\frac{1}{c}\log\left(\frac{2pd}{\alpha}\right)}$ and we have
\begin{align}
	\mathbb{P}\left(\sigma_{\min}(\bX/\sqrt{n}) \ge 1 - L\sqrt{\frac{pd}{cn}\log\left(\frac{2pd}{\alpha}\right)}\right) \le \alpha.
\end{align}
Therefore we can get the order of $\sigma_{\min}(\bX/\sqrt{n})$ is $\mathcal{O}_p\left(1\right)$, and then
\begin{align}\label{eq:order_emp_pre}
    \left\|\hat{\bSigma}^{-1}\right\|_2=1 / \sigma^2_{\min}(\bX/\sqrt{n}) = \mathcal{O}_p\left(1\right).
\end{align}
Consequently,
\begin{align}\label{eq:order_3}
	\left\|\bSigma^{-1} - \hat{\bSigma}^{-1}\right\|_2 = \mathcal{O}_p\left(\sqrt{1/n}\right).
\end{align}
Besides, for the noise-related term $(1/n)\left\|\bX^\T(\bR + \boldsymbol{\varepsilon})\right\|_2 \le (1/n)\|\bX\|_2\left(\|\bR\|_2 + \|\boldsymbol{\varepsilon}\|_2\right)$.
Again by Lemma \ref{lemma:general_spectrum}, we get $\sigma_{\max}(\bX/\sqrt{n})=\mathcal{O}_p\left(1\right)$, implying that $\|\bX\|_2=\mathcal{O}_p(\sqrt{n})$.
For the remainder vector $\bR$, we have
\begin{align}
    h(\bX_i\bTheta)\|\bX_i\bTheta\| \le H\|\bX_i\|_2.
\end{align}
Besides, according to Lemma \ref{lemma:norm} one has
\begin{align}
    \|\bX_i\|_2 \le \|\bX_i\|_F = \|\bx_i\|_2 = \mathcal{O}_p(\sqrt{pd}).
\end{align}
Therefore
\begin{align}
    \|\bR\|_2 = \sqrt{\sum_{i=1}^n \left(h(\bX_i\bTheta)\|\bX_i\bTheta\|\right)^2} \le H\sqrt{\sum_{i=1}^n\|\bX_i\|_2^2} = \mathcal{O}_p(\sqrt{npd}).
\end{align}
For sub-gaussian $\boldsymbol{\varepsilon}$, according Lemma \ref{lemma:norm} the order of $\boldsymbol{\varepsilon}$ is $\mathcal{O}_p(\sqrt{n})$.
Therefore the order of noise-related term is
\begin{align}\label{eq:order_4}
    \frac{1}{n}\left\|\bX^\T(\bR + \boldsymbol{\varepsilon})\right\|_2 = \mathcal{O}_p(\sqrt{pd}).
\end{align}
Therefore the order of A2 is $\mathcal{O}_p\left(\sqrt{pd/n}\right)$.

In summary, according to (\ref{eq:order_1}), (\ref{eq:order_2}), (\ref{eq:order_3}) and (\ref{eq:order_4}), we can get
\begin{align*}
	&\left\|\hat{\bA} - \check{\bA}\right\|_F = \mathcal{O}_p\left(\sqrt{\frac{pd}{n}}\right).
\end{align*}
Then we can get by Theorem 19 in \citet{o2018random} again
\begin{align*}
\|\check{\bTheta}\check{\bO}-\hat{\bTheta}\|_{F} \leq \sqrt{2}\sin\Theta(\check{\bTheta}, \hat{\bTheta}) \leq \frac{2^{3/2}\|\check{\bA}-\hat{\bA}\|_F}{\sigma_R} = \mathcal{O}_{p}\left(\sqrt{\frac{Rd}{n}}\right).
\end{align*}
This completes the proof.
\end{proof}

\end{document}